\journal{Science of Computer Programming}
\newif\ifdraft\drafttrue
\newcommand{\dedrule}[2]{\frac{#1}{#2}}
\newcommand{\trans}[1][]{\xrightarrow{\, {#1} \, }}
\newcommand{\Ran}[1]{\mathsf{Ran}_{#1}}
\newcommand{\Fam}[1]{\mathbb{#1}}
\newcommand{\coalg}[1]{\mathsf{coalg}(#1)}
\newcommand{\DL}[1]{\mathsf{DL}(#1)}
\newcommand{\counit}{\epsilon}
\newcommand{\tuple}[1]{\langle #1 \rangle}
\newcommand{\relR}{\mathrel{\mathcal{R}}}
\newcommand{\relS}{\mathrel{\mathcal{S}}}
\newcommand{\bisim}{\sim}
\newcommand{\obisim}{\sim_o}
\newcommand{\Var}{\mathcal{V}}
\newcommand{\tsigma}{T_{\Sigma}}
\newcommand{\openTerms}[1][]{T\Var}
\newcommand{\closedTerms}[1][]{T\emptyset}
\newcommand{\ca}{\textsf{a}}
\newcommand{\cb}{\textsf{b}}
\newcommand{\cc}{\textsf{c}}
\newcommand{\alt}{\textsf{alt}}
\newcommand{\head}{\textsf{hd}}
\newcommand{\tail}{\textsf{tl}}
\newcommand{\set}{\textit{Set}}
\newcommand{\hd}{\textsf{hd}}
\newcommand{\tl}{\textsf{tl}}
\newcommand{\Id}{\textit{Id}}
\newcommand{\id}{\textit{id}}
\newcommand{\cs}{\textit{cs}}
\newcommand{\st}{\textit{st}}
\newcommand{\proj}{\textit{proj}}
\newcommand{\ev}{\textit{ev}}
\newcommand{\buffer}{\rhd}
\newcommand{\baroplus}{\mathop{\bar \oplus}}
\newcommand{\nattrans}{\Longrightarrow}
\newcommand{\R}{\mathbb{R}}
\newcommand{\N}{\mathbb{N}}
\newcommand{\sem}[1]{|\![{#1}]\!|}
\newcommand{\osem}[1]{\sem{#1}_o}
\newcommand{\csem}[1]{\sem{#1}_c}
\newcommand{\asem}[1]{\sem{#1}_a}
\newcommand{\subs}{\varsigma}
\newcommand{\vx}{\mathcal{X}}
\newcommand{\vw}{\mathcal{W}}
\newcommand{\vy}{\mathcal{Y}}
\newcommand{\vz}{\mathcal{Z}}
\newcommand{\tgamma}{\tilde{\Gamma}}
\renewcommand{\b}{\mathbf{b}}
\newcommand{\f}{\mathbf{f}}
\newcommand{\tplus}{T_{\Sigma \cup \tilde \Sigma}\emptyset}
\newcommand{\cfunc}{\mathfrak{c}}
\newtheorem{theorem}{Theorem}[section]
\newtheorem{lemma}[theorem]{Lemma}
\newtheorem{corollary}[theorem]{Corollary}
\newtheorem{proposition}[theorem]{Proposition}
\theoremstyle{definition}
\newtheorem{example}[theorem]{Example}
\newtheorem{definition}[theorem]{Definition}
\begin{document}

\begin{frontmatter}

\title{Bisimilarity of Open Terms in Stream GSOS\tnoteref{mytitlenote}}

\tnotetext[mytitlenote]{
  The research leading to these results received funding from
    the European Research Council (FP7/2007-2013, grant agreement nr.~320571; as well as
	from the LABEX MILYON (ANR-10-LABX-0070, ANR-11-IDEX-0007), the
    project PACE (ANR-12IS02001) and the project REPAS (ANR-16-CE25-0011).
    \copyright 2018. This manuscript version is made available under the CC-BY-NC-ND 4.0 license \url{http://creativecommons.org/licenses/by-nc-nd/4.0/}. 
    This manuscript is published in Science of Computer Programming; the official version
    can be found at
    \url{https://doi.org/10.1016/j.scico.2018.10.007}}

%

\author[pis]{Filippo Bonchi}
\ead{filippo.bonchi@unipi.it}
\author[run]{Tom van Bussel}
\author[]{Matias David Lee}
\author[run]{Jurriaan Rot}
\ead{jrot@cs.ru.nl}


\address[run]{ICIS -- Radboud University, Nijmegen} 
\address[pis]{Department of Computer Science, University of Pisa}


\begin{abstract}
Stream GSOS is a specification format for operations and calculi on infinite sequences. 
The notion of bisimilarity provides a canonical
proof technique for equivalence of closed terms in such specifications. In this paper,
we focus on \emph{open terms}, which may contain variables, and which are equivalent 
whenever they denote the same stream for every possible instantiation of the variables. 
Our main contribution is to capture equivalence of open terms as bisimilarity
on certain Mealy machines, providing a concrete proof technique. Moreover, we introduce 
an enhancement of this technique, called bisimulation up-to substitutions, and show
how to combine it with other up-to techniques to obtain a powerful method
for proving equivalence of open terms. 
\end{abstract}

\begin{keyword}
bisimilarity \sep streams \sep open terms \sep GSOS \sep operational semantics
\end{keyword}

\end{frontmatter}

\section{Introduction}

Structural operational semantics (SOS) can be considered the de facto standard to define programming languages and process calculi.  
The SOS framework relies on defining a \emph{specification} consisting of a set of operation symbols, a set of labels or actions
and a set of inference rules. 
The inference rules describe the behaviour of each operation, typically depending on the behaviour 
of the parameters. The semantics is then defined in terms of a labelled transition 
system over  \emph{(closed) terms} constructed from the operation symbols. \emph{Bisimilarity}
of closed terms ($\bisim$) provides a canonical notion of behavioural equivalence. 

It is also interesting to study equivalence of \emph{open terms}, for instance to express properties of program constructors, like the commutativity of a non-deterministic choice operator. The latter can be formalised as the equation $\vx + \vy = \vy + \vx$, where the left and right hand sides are terms with variables $\vx,\vy$.
Equivalence of open terms ($\obisim$) is usually based on $\bisim:$ for all open terms $t_1,t_2$
\begin{equation}\label{eq:intro}t_1\obisim t_2 \text{ iff for all closed substitutions } \phi, \; \; \phi(t_1)\bisim \phi(t_2)\text{.}\end{equation}
The main problem of such a definition is the quantification over all substitutions: one would like to have an alternative characterisation, possibly amenable to the coinduction proof principle.
This issue has been investigated in several works, like~\cite{deSimone85,Rensink00,DBLP:journals/tcs/BaldanBB07,ACI12,LGCR09,PG10,ZE11}.

\begin{figure}
\setlength\arraycolsep{5pt}
$$\begin{array}{ccccc}
(a)&\dedrule{}{n \trans[n] 0} & \dedrule{x \trans[n] x' \quad y \trans[m] y'}{ x \oplus y \trans[n+m] x'\oplus y'} &
 \dedrule{x \trans[n] x' \quad y \trans[m] y'}{x \otimes y \trans[n \times m] (n \otimes y') \oplus (x' \otimes y)} &
\\[3ex]
\hline
\\
(b)&\dedrule{}{n \trans[n] 0}
& \dedrule{x \trans[n] x' \quad y \trans[m] y'}{ x \oplus y \trans[ n+m] x'\oplus y'} 
 & 
 \dedrule{x \trans[n] x' \quad y \trans[ m] y'}{x \otimes y \trans[ n \times m] (n \otimes y') \oplus (x' \otimes m.y')} &
 \dedrule{x \trans[ m] x'}{n.x \trans[n] m.x'}
\\[3ex]
\hline
\\
(c)& \dedrule{}{n \trans[b |n] 0}
& \dedrule{x \trans[b |n] x' \quad y \trans[b |m] y'}{ x \oplus y \trans[b | n+m] x'\oplus y'} 
 &
 \dedrule{x \trans[b |n] x' \quad y \trans[b | m] y'}{x \otimes y \trans[b | n \times m] (n \otimes y') \oplus (x' \otimes m.y')} &
 \dedrule{x \trans[b | m] x'}{n.x \trans[b | n] m.x'}
\\[3ex]
\hline
\\ (d)& 
\dedrule{}{\vx \trans[\subs | \subs(\vx)] \vx}
&
\dedrule{}{n \trans[\subs  |n] 0} \quad \ldots
&
 \dedrule{x \trans[\subs |n] x' \quad y \trans[\subs  | m] y'}{x \otimes y \trans[\subs  | n \times m] (n \otimes y') \oplus (x' \otimes m.y')} &
 \dedrule{x \trans[\subs | m] x'}{n.x \trans[\subs  | n] m.x'}
\end{array}$$
\caption{A stream GSOS specification (a) is transformed first into a monadic specification (b), then in a Mealy specification (c) and finally in a specification for open terms (d). In these rules, $n$ and $m$ range over real numbers, $b$ over an arbitrary set $B$, $\vx$ over variables and $\subs$ over substitutions of variables into  reals.}\label{fig:rulestreamcalculus}
\end{figure}

\medskip

In this paper, we continue this line of research, focusing on the simpler setting of \emph{streams}, which are  infinite sequences over a fixed data type. 
More precisely, we consider stream languages specified in the  \emph{stream GSOS format}~\cite{Klin11}, a syntactic rule format enforcing several interesting properties. 
 We show how to transform a stream specification into a \emph{Mealy machine} specification that defines the operational semantics of open terms. Moreover, a notion of bisimulation -- arising in a canonical way from the theory of coalgebras~\cite{Rutten00} -- exactly characterises $\obisim$ as defined in \eqref{eq:intro}.

Our approach can be illustrated by taking as running example the fragment of the stream calculus \cite{Rutten05} presented in Figure~\ref{fig:rulestreamcalculus}(a). The first step is to transform a stream GSOS specification (Section~\ref{sec:preliminaries}) into a \emph{monadic} one (Section~\ref{sec:from_full_to_monadic}). In this variant of GSOS specifications, no variable in the source of the conclusion appears in the target of the conclusion. For example, in the stream specification in Figure~\ref{fig:rulestreamcalculus}(a), the rule associated to ${\otimes}$ is not monadic. The corresponding monadic specification is illustrated in Figure~\ref{fig:rulestreamcalculus}(b). Notice this process requires the inclusion of a family of prefix operators (on the right of Figure~\ref{fig:rulestreamcalculus}(b)) that satisfy the imposed restriction. 

The second step -- based on~\cite{HK11} -- is to compute the \emph{pointwise extension} of the obtained specification (Section~\ref{sec:pe_monadic_gsos}). Intuitively,
we transform a specification of streams with outputs in a set $A$ into a specification of Mealy machines with inputs in an arbitrary set $B$ and outputs in $A$, by replacing each transition $ \trans[a] $  (for $a\in A$) with a transition $ \trans[b|a] $ for each input $b\in B$. See Figure~\ref{fig:rulestreamcalculus}(c). 

In the last step (Section \ref{sec:mm_over_ot}), we fix $B = \Var \to A$, the set of functions assigning outputs values in $A$ to variables in $\Var$.  To get the semantics of open terms, it only remains to specify the behaviours of variables in $\Var$. This is done with the leftmost rule in Figure~\ref{fig:rulestreamcalculus}(d).

As a result of this process, we obtain a notion of bisimilarity over open terms, which coincides
with behavioural equivalence of all closed instances, and provides a concrete
proof technique for equivalence of open terms. By relating open terms rather than all its possible instances, this novel technique often enables to use \emph{finite} relations, while standard bisimulation techniques usually require relations of infinite size on closed terms.
%
%
In Section~\ref{sec:compatibility}
we further enhance this novel proof technique by studying \emph{bisimulation up-to}~\cite{PS11}.
We combine several standard up-to techniques with the notion of \emph{bisimulation up-to substitutions}. 
These up-to techniques are useful to obtain small relations, and thereby simplify the bisimulation proof technique.
For instance, as we show in Section~\ref{sec:compatibility}, a rather intricate combination of up-to techniques
allows us to prove distributivity of shuffle product over sum in stream calculus,
that is, $x \otimes (y \oplus z) = (x \otimes y) \oplus (x \otimes z)$. 

Throughout the paper, we exhibit our approach through several examples, such as basic identities in the stream calculus.
More generally, since we take stream GSOS as the starting point, our approach is applicable precisely to all \emph{causal} functions
on streams: those functions where the $n$-th element of the output stream only depends on the first $n$ elements of
input. This follows from the known result that functions on streams are causal iff they are definable in stream GSOS~\cite{HansenKR16}.
For instance, the operations of the stream calculus of Rutten~\cite{Rutten05} are causal.
A typical example of a function which is one which drops every second element of the input stream. Such
functions fall outside the scope of the current approach. 

An earlier version of this work was presented at FSEN 2017.
The current paper extends the proceedings version~\cite{BonchiLR17}
with proofs (none of which could be included in the proceedings version),
and a new section (Section~\ref{sec:familiar}) which establishes a connection with very recent work on the \emph{companion}
of a functor~\cite{PousR17,BasoldPR17}. The companion provides a promising foundation 
for a general presentation of open terms in \emph{abstract GSOS}~\cite{TP97,Klin11}
 (a general format for operations on coalgebras formulated in terms of distributive laws, 
 of which stream GSOS is a special case). The aim of the last section is to establish a precise
link between the current approach (for streams) and the abstract
theory of distributive laws in~\cite{PousR17,BasoldPR17}, thus placing the current
work in a wider context. In particular, we find that 
the semantics of open terms GSOS introduced in this paper coincides with
the semantics that arises canonically from the companion. Further, we show
that the Mealy machine over open terms, which arises from the main construction in this paper,
can equivalently be obtained through a general correspondence between distributive laws and
coalgebras (over a functor category).
By its very nature, this section is more technical than
the previous sections, and assumes a deeper familiarity with category theory;
it is aimed at the reader with an interest in the more general
theory of coalgebra and distributive laws. 

\section{Preliminaries}
\label{sec:preliminaries}

We define the two basic models that form the focus of this paper: stream systems,
that generate infinite sequences (streams), and Mealy machines, that generate output streams
given input streams.

\begin{definition}
 A \emph{stream system} with outputs in a set $A$ is a pair $(X, \tuple{o,d})$ where 
 $X$ is a set of states and $\tuple{o,d}\colon X \to A \times X $ is a function, 
 which maps a state $x\in X$ to both an \emph{output value} $o(x)\in A$ and to a \emph{next state} $d(x) \in X$.
 We write $x \trans[a] y$ whenever $o(x)=a$ and $d(x) = y$.
\end{definition}

\begin{definition}
 A \emph{Mealy machine} with inputs in a set $B$ and outputs in a set $A$
 is a pair $(X, m)$ where $X$ is a set of states 
 and $m \colon X \to (A \times X)^B$ is a function assigning to each $x\in X$ a map $m(x) = \tuple{o_x, d_x}\colon B \to A \times X$.
 For all inputs $b \in B$, $o_x(b)\in A$ represents an output and $d_x(b) \in X$ a next state.
 We write $x \trans[b|a] y$ whenever $o_x(b)=a$ and $d_x(b) = y$.
\end{definition}

We recall the notion of \emph{bisimulation} for both models.

\begin{definition}
\label{def:bs:stream}
Let $(X,\tuple{o, d})$ be a stream system.
A relation ${\relR} \subseteq X\times X$ is a \emph{bisimulation} if for all $(x,y) \in {\relR}$,
$o(x) = o(y)$ and $(d(x),d(y)) \in {\relR}$. 
\end{definition}

\begin{definition}
\label{def:bs:mmachine}
Let $(X, m)$ be a Mealy machine.
A relation ${\relR} \subseteq X\times X$ is a \emph{bisimulation} if for all $(x,y) \in {\relR}$ and $b\in B$,
$o_x(b) = o_y(b)$ and $(d_x(b), d_y(b)) \in {\relR}$. 
\end{definition}

For both kind of systems, we say that $x$ and $y$ are \emph{bisimilar}, 
notation $x \bisim y$, if there is a bisimulation $\relR$ s.t. $x \relR y$.

Stream systems and Mealy machines, as well as the associated notions of bisimulation, are standard examples 
in the theory of \emph{coalgebras}~\cite{Rutten00}, a mathematical framework that allows to study state-based systems and their semantics at a high level of generality. In the current paper, the theory of coalgebras underlies and enables our main results.
 
Throughout this paper, we denote by $\set$ the category of sets and functions, 
by $\Id \colon \set \rightarrow \set$ the identity functor and by
$\id \colon X \rightarrow X$ the identity function on an object $X$.

\begin{definition}\label{def:coalgebra}
 Given a functor $F \colon \set \to \set$, an \emph{$F$-coalgebra} is a pair $(X,d)$,
 where $X$ is a set (called the carrier) and $d \colon X \to FX$ is a function (called the structure).
 An \emph{$F$-coalgebra morphism} from $d \colon X \to FX$ to $d'\colon Y \to FY$ is a map 
 $h \colon X \to Y$ such that $F h \circ d = d' \circ h$.
\end{definition}

Stream systems and Mealy machines are $F$-coalgebras for the functors
$FX = A \times X$ and $FX = (A \times X)^B$, respectively.

The semantics of systems modelled as coalgebras for a functor $F$ is provided
by the notion of \emph{final coalgebra}. 
A coalgebra $\zeta \colon Z \to FZ$ is called \emph{final} if
for every $F$-coalgebra $d \colon X \to FX$ there is a unique coalgebra morphism $\sem{-} \colon (X, d) \to (Z,\zeta)$.
We call $\sem{-}$ the \emph{coinductive extension} of $d$.
Intuitively, the set $Z$ of a final $F$-coalgebra $\zeta \colon Z \to FZ$ represents the universe of all possible $F$-behaviours and, for a coalgebra $(X,d)$, $\sem{-}\colon X \to Z$ represents its semantics: a function assigning a behaviour to all states in $X$. 
This motivates the following definition: given two states $x,y \in X$, 
$x$ and $y$ are said to be \emph{behaviourally equivalent} iff
$\sem{x} = \sem{y}$.
If $F$ preserves weak pullbacks then behavioural equivalence
coincides with bisimilarity, i.e., $x \bisim y$ iff $\sem{x} = \sem{y}$ (see~\cite{Rutten00}). 
This condition is satisfied by (the functors for) stream systems
and Mealy machines. In the sequel, by $\bisim$ we hence refer both to bisimilarity
and behavioural equivalence.

Final coalgebras for stream systems and Mealy machines will be pivotal for our exposition. We briefly recall them, following~\cite{Rutten00} and ~\cite{DBLP:journals/cuza/HansenR10}. 
The set $A^\omega$ of streams over $A$ carries a final coalgebra for the functor $FX=A\times X$. For every stream system 
$\tuple{o,d} \colon X \to A \times X$, the coinductive extension $\sem{-} \colon X \rightarrow A^\omega$  
assigns to a state $x \in X$ the stream $a_0a_1a_2 \dots$ whenever $x\trans[a_0]x_1\trans[a_1]x_2 \trans[a_2] \dots$

Recalling a final coalgebra for Mealy machines requires some more care. Given a stream $\beta\in B^\omega$, we write $\beta {\upharpoonright_n}$ for the prefix of $\beta$ of length $n$. A function $\cfunc \colon B^\omega \to  A^\omega$ is \emph{causal} if for all $n \in \N$ and all $\beta, \beta' \in B^\omega$:
$\beta {\upharpoonright_n} = \beta' {\upharpoonright_n}$ entails 
$\cfunc(\beta){\upharpoonright_n} = \cfunc(\beta'){\upharpoonright_n}$. Intuitively, $\cfunc$ is causal if the first $n$ symbols of the output stream depends only on the first $n$ symbols of the input stream.
The set $\Gamma(B^\omega,A^\omega) = \{\cfunc \colon B^\omega \to A^\omega \mid \cfunc \text{ is causal}\}$ carries a final coalgebra for the functor $FX=(A\times X)^B$. For every Mealy machine $m \colon X \to (A \times X)^B$, the coinductive extension $\sem{-} \colon X \rightarrow \Gamma(B^\omega,A^\omega) $  assigns to each state $x \in X$ and each input stream $b_0b_1b_2\dots \in B^\omega$ the output stream $a_0a_1a_2\dots \in A^\omega$ whenever $x\trans[b_0|a_0]x_1\trans[b_1|a_1]x_2 \trans[b_2|a_2] \dots$ The interested reader is referred to~\cite{DBLP:journals/cuza/HansenR10} for more details.

\subsection{System Specifications}\label{sec:spec}

Different kinds of transition systems, like stream systems or Mealy machines, can be specified by means of algebraic specification languages. The syntax is given by an \emph{algebraic signature} $\Sigma$, namely a collection of operation symbols $\{f_i \mid  i \in I\}$ 
where each operator $f_i$ has a (finite) arity $n_i \in \mathbb{N}$. For a set $X$, $T_\Sigma X$ denotes the set of $\Sigma$-terms with variables over $X$.
The set of closed $\Sigma$-terms is denoted by $T_{\Sigma}\emptyset$.
We omit the subscript when $\Sigma$ is clear from the context.

A standard way to define the operational semantics of these languages is by means of structural operational semantics (SOS) \cite{MRG07}. In this approach, the semantics of each of the operators is described by syntactic rules, 
and the behaviour of a composite system is given in terms of the behaviour of its components.
We recall \emph{stream GSOS}~\cite{Klin11,HansenKR16}, a specification format for stream systems.

\begin{definition}
\label{def:stream_gsos_rule}
A \emph{stream GSOS rule} $r$ for a signature $\Sigma$ and a set $A$ is a rule 
\begin{equation}
\label{eq:stream_gsos_rule}
\dedrule{x_1 \trans[a_1] x'_1 \quad \cdots \quad x_n \trans[a_n]x'_n}{f(x_1, \ldots, x_n) \trans[a] t}
\end{equation}
where 
$f \in \Sigma$ with arity $n$, 
$x_1, \ldots, x_n, x'_1, \ldots, x'_n $ are pairwise distinct variables, 
$t$ is a term built over variables $\{x_1, \ldots, x_n, x'_1, \ldots, x'_n\}$
and $a,a_1, \ldots, a_n \in A$.
We say that $r$ is triggered by $(a_1, \ldots, a_n) \in A^n$.

A \emph{stream GSOS specification} is a tuple $(\Sigma, A, R)$ where 
$\Sigma$ is a signature, $A$ is a set of actions and $R$ is a set of stream GSOS rules for $\Sigma$ and $A$ s.t. 
for each $f\in \Sigma$ of arity $n$ and each tuple $(a_1, \ldots, a_n)\in A^n$, 
there is only one rule $r \in R$ for $f$ that is triggered by $(a_1, \ldots, a_n)$.
\end{definition}

A stream GSOS specification allows us to extend any given stream system
$\tuple{o,d} \colon X \rightarrow A \times X$ to a stream
system $\overline{\tuple{o,d}} \colon TX \rightarrow A \times TX$, by induction:
the base case is given by $\tuple{o,d}$, and the inductive cases by the specification. 
This construction can be defined formally in terms of proof trees, or by coalgebraic 
means; we adopt the latter approach, which is recalled later in this section.

There are two important uses of the above construction:
(A) applying it to the (unique) stream system carried by the empty set $\emptyset$
yields a stream system over closed terms, i.e., of the form $\closedTerms \rightarrow A \times\closedTerms$;
(B) applying the construction to the final coalgebra yields
a stream system of the form $TA^\omega \rightarrow A \times TA^\omega$. 
The coinductive extension $\sem{-} \colon TA^\omega \rightarrow A^\omega$ of this stream system 
is, intuitively, the interpretation of the operations in $\Sigma$ on streams in $A^\omega$. 

\begin{figure}[th]
  \centering
  \makebox[0pt][c]{\parbox{1\textwidth}{%
      \begin{minipage}[b]{0.7\hsize}\centering
      \begin{gather*}
      \dedrule{}{\ca \trans[a] \ca}\ \forall a \in A \qquad 
      \dedrule{x \trans[a] x' \quad y \trans[b] y'}{\alt(x,y) \trans[a] \alt(y',x')}\ \forall a,b \in A
      \end{gather*}
      \caption{The GSOS-rules of our running example}
      \label{fig:rules}
    \end{minipage}
    \hfill
    \begin{minipage}[b]{0.285\hsize}\centering
 \begin{tikzpicture}
 \node (s1) at (0,0) {$\alt(\ca, \alt(\cb,\cc))$}; 
 \node (s2) at (0,-1) {$\alt(\alt(\cc,\cb),\ca)$};
  \path[->] (s1)  edge [out=200,in=160] node [left] {$a$} (s2);
  \path[->] (s2)  edge [out= 20,in=-20]node [right] {$c$} (s1);
      \end{tikzpicture}
      \caption{A stream system}
      \label{fig:ts}
    \end{minipage}
}}
\end{figure}

\begin{example}
\label{ex:running_example_specification}
Let $(\Sigma, A, R)$ be a stream GSOS specification where 
the signature $\Sigma$ consists of constants $\{\ca \mid a \in A\}$ and a binary operation $\alt$.
The set $R$ contains the rules in Figure~\ref{fig:rules}.  For an instance of (A), the term $\alt(\ca, \alt(\cb,\cc)) \in \closedTerms[\Sigma]$ defines the stream system depicted in Figure~\ref{fig:ts}. 
For an instance of (B),  the operation $\alt \colon  A^\omega \times A^\omega \to A^\omega$ maps streams $a_0a_1a_2\dots$, $b_0b_1b_2\dots $ to $a_0b_1a_2b_2 \dots$.
\end{example}

\begin{example}
 \label{ex:stream_calculus}
We now consider the specification $(\Sigma, \R, R)$ which is the fragment of the \emph{stream calculus} \cite{Rutten01,Rutten05} consisting of the constants $n \in \R$ and the binary operators \emph{sum} ${\oplus}$ and \emph{(convolution) product}~${\otimes}$. The set $R$ is defined in Figure~\ref{fig:rulestreamcalculus} (a).
 For an example of (A), consider $n \oplus m \trans[n+m] 0\oplus 0 \trans[0] 0\oplus 0 \trans[0] \dots$. For (B), the induced operation $\oplus \colon  \R^\omega \times \R^\omega \to \R^\omega$ is the pointwise sum of streams, i.e.,  it maps any two streams $n_0n_1\dots$, $m_0m_1\dots $ to $(n_0+m_0)(n_1+m_1) \dots$.
\end{example}

\begin{definition}
We say that a stream GSOS rule $r$ as in \eqref{eq:stream_gsos_rule} is \emph{monadic} 
if $t$ is a term built over variables $\{x'_1, \ldots, x'_n\}$.
A stream GSOS specification is \emph{monadic} if all its rules are monadic. 
\end{definition}

The specification of Example~\ref{ex:running_example_specification} satisfies the monadic stream GSOS format, while the one of Example~\ref{ex:stream_calculus} does not since, in the rules for $\otimes$, the variable $y$ occurs in the arriving state of the conclusion.

\medskip

The notions introduced above for stream GSOS, as well as the analogous ones for standard (labelled transition systems) GSOS \cite{BloomIM88}, can be reformulated in an abstract framework -- the so-called \emph{abstract GSOS}~\cite{TP97,Klin11} -- that will be pivotal for the proof of our main result.
In this setting, signatures are represented by (polynomial) functors, as follows. A signature $\Sigma$ presented
as a collection $\{f_i \mid  i \in I\}$ of operations (with arities $n_i \in \mathbb{N}$) 
corresponds to the following polynomial functor which, abusing notation, we also denote by $\Sigma$:
$$
\Sigma X = \coprod_{i \in I} X^{n_i} \qquad \Sigma(g)(f_i(x_1, \ldots, x_{n_i})) = f_i(g(x_1), \ldots, g(x_{n_i}))
$$
for every set $X$ and every map $g \colon X \rightarrow Y$ where,
in the definition of $\Sigma(g)$, we use $f_i(-)$ to refer to the $i$-th coproduct injection. 
For instance, the signature $\Sigma$ in Example~\ref{ex:running_example_specification} corresponds to the functor $\Sigma X = A + (X \times X)$, while the signature of Example \ref{ex:stream_calculus} corresponds to the functor $\Sigma X = \R + (X \times X) + (X \times X)$. Models of a signature are seen as algebras for the corresponding functor. 

\begin{definition}
 \label{def:algebra}
 Given a functor $F \colon  \set \to \set$, an \emph{$F$-algebra} is a pair $(X,d)$,
 where $X$ is the \emph{carrier set} and $d \colon  FX \to X$ is a function. An \emph{algebra homomorphism}
 from an $F$-algebra $(X,d)$ to an $F$-algebra $(Y,d')$ is a map
 $h \colon X \rightarrow Y$ such that $h \circ d = d' \circ Fh$. 
\end{definition}
Particularly interesting are \emph{initial algebras}: an $F$-algebra is called initial
if there exists a unique algebra homomorphism from it to every $F$-algebra.
For a functor corresponding to a signature $\Sigma$, the initial algebra is $(\closedTerms,\kappa)$ where $\kappa\colon  \Sigma \closedTerms \to \closedTerms$ maps, for each $i\in I$,  the tuple of closed terms $t_1,\dots t_{n_i}$ to the closed term $f_i(t_1,\dots t_{n_i})$. For every set $X$, we can define in a similar way $\kappa_X \colon \Sigma TX \to TX$. The \emph{free monad over $\Sigma$} consists of the endofunctor $T \colon  \set \to \set$, mapping every set $X$ to $TX$, together with the natural transformations
$\eta \colon  \Id\nattrans T$---interpretation of variables as terms---and 
$\mu \colon  TT \nattrans T$---glueing terms built of terms. (We use
the standard convention of denoting natural transformations with a double arrow $\nattrans$). 
Given an algebra $\sigma \colon  \Sigma Y \to Y$, for any
function $f \colon X \rightarrow Y$ there is a unique algebra
homomorphism $f^\dagger \colon TX \rightarrow Y$ from $(TX,\kappa_X)$ to $(Y,\sigma)$. 
%

\begin{definition} 
 \label{def:gsos_specification}
 An \emph{abstract GSOS specification} (of $\Sigma$ over $F$) is a natural transformation
 $\lambda \colon  \Sigma(\Id\times F) \nattrans FT$. 
 A \emph{monadic  abstract GSOS specification} (in short, monadic specification) is a natural transformation
 $\lambda \colon  \Sigma F \nattrans FT$.
\end{definition}
By instantiating the functor $F$ in the above definition to the functor for streams ($FX=A\times X$) one obtains all and only the stream GSOS specifications. Instead, by taking the functor for Mealy machines  ($FX=(A\times X)^B$) one obtains the Mealy GSOS format \cite{Klin11}: for the sake of brevity, we do not report the concrete definition here but this notion will be important in Section~\ref{sec:mm_over_ot} where, to deal with open terms, we transform stream specifications into Mealy GSOS specifications.

\begin{example}
For every set $X$, the rules in Example \ref{ex:running_example_specification} define a function $\lambda_X : A + (A\times X) \times (A\times X) \to (A\times \tsigma X)$ as follows: each $a\in A$ is mapped to $(a, \ca)$ and each pair $(a,x'),(b,y') \in (A\times X) \times (A\times X)$ is mapped to $(a, \alt(y',x'))$~\cite{Klin11}.
\end{example}

We focus on monadic distributive laws for most of the paper, and since they are slightly simpler
than abstract GSOS specifications, 
we only recall the relevant concepts for monadic distributive laws. However, we note that
 the concepts below can be extended to abstract GSOS specifications; see, e.g.,~\cite{Bartels04,Klin11} for details.

A monadic abstract GSOS specification induces a distributive law $\rho \colon T F \nattrans FT$,
which is a distributive law of the (free) monad $(T,\eta,\mu)$ over the functor $F$,
i.e., it makes the following diagram commute:
\begin{equation}
\label{eq:dl-monad}
\xymatrix{
	TTF \ar@{=>}[d]_{\mu} \ar@{=>}[r]^{T\rho}
		& TFT \ar@{=>}[r]^{\rho T}
		& FTT \ar@{=>}[d]^{F\mu} \\
	TF \ar@{=>}[rr]^{\rho} 
		& & FT  \\
	F \ar@{=>}[u]^{\eta F} \ar@{=>}[urr]_{F\eta}
}
\end{equation}
The distributive law $\rho$ 
allows us to extend any $F$-coalgebra $d \colon X \rightarrow FX$ to an $F$-coalgebra
on terms:

$$
\xymatrix{TX \ar[r]^{Td} & TFX \ar[r]^{\rho_X} & FTX}
$$
This construction generalises and formalises the aforementioned extension of stream systems to terms by means of a
stream GSOS specification. 
In particular, (A) the unique coalgebra on the empty set $!\colon  \emptyset \to F \emptyset$ yields an $F$-coalgebra
on closed terms $\closedTerms \to F\closedTerms$. If $F$ has a final coalgebra $(Z,\zeta)$, the unique (hence depicted by a dotted line) morphism $\sem{-}_c \colon  \closedTerms \to Z$ defines the \emph{semantics of closed terms}.
\begin{center}
\begin{tikzpicture}
  \matrix (m) [matrix of math nodes,row sep=0.3em,column sep=3em]
  {
     T\emptyset & T F \emptyset  & FT\emptyset \\
     & (A) & \\
     Z &  & FZ \\};
  \path[-stealth]
    (m-1-1) edge [dotted] node [left] {\scriptsize $\sem{-}_c $} (m-3-1)
            edge node [above] {\scriptsize $T !$} (m-1-2)
    (m-1-2) edge node [above] {\scriptsize $\rho_\emptyset$} (m-1-3)
    (m-3-1) edge node [below] {\scriptsize $\zeta$}  (m-3-3)
    (m-1-3) edge [dotted] node [right] {\scriptsize $F\sem{-}_c $} (m-3-3);
\end{tikzpicture}
\begin{tikzpicture}
  \matrix (m) [matrix of math nodes,row sep=0.3em,column sep=3em]
  {
     TZ & T F Z  & FTZ \\
     & (B) & \\
     Z &  & FZ \\};
  \path[-stealth]
    (m-1-1) edge [dotted] node [left] {\scriptsize $\asem{-} $} (m-3-1)
            edge node [above] {\scriptsize $T \zeta$} (m-1-2)
    (m-1-2) edge node [above] {\scriptsize $\rho_Z$} (m-1-3)
    (m-3-1) edge node [below] {\scriptsize $\zeta$}  (m-3-3)
    (m-1-3) edge [dotted] node [right] {\scriptsize $F\asem{-} $} (m-3-3);
\end{tikzpicture} 
\end{center}

Further (B), the final coalgebra $(Z,\zeta)$ yields
a coalgebra on $T Z$. By finality, we then
obtain a $T$-algebra over the final $F$-coalgebra, which we denote by $\asem{-} \colon T Z \to Z$ and we call it the \emph{abstract semantics}. We define the \emph{algebra induced by}  $\lambda$ as the $\Sigma$-algebra $\sigma \colon  \Sigma Z \to Z$ given by
\begin{equation}\label{eq:alg_induced_by_lambda}
\xymatrix@C=1.3cm{
	 \Sigma Z \ar[r]^{\Sigma \eta_Z}
	 	& \Sigma TZ \ar[r]^{\kappa_Z}
	 	& TZ \ar[r]^{\sem{-}_a} & Z
}.
\end{equation}

\section{Making arbitrary stream GSOS specifications monadic}\label{sec:from_full_to_monadic}

The results presented in the next section are restricted to monadic specifications, but one can prove them for arbitrary GSOS specifications by exploiting some auxiliary operators, introduced in~\cite{HK11} with the 
name of \emph{buffer}. 
Theorem~\ref{th:upto} in Section~\ref{sec:compatibility}
only holds for monadic GSOS specifications. This does not restrict the applicability of our approach: as we show below, arbitrary stream GSOS specifications can be turned into monadic ones.

Let  $(\Sigma, A, R)$ be a stream GSOS specification. 
The extended signature $\tilde \Sigma$ is given by $\{\tilde f \mid f \in \Sigma\} \cup \{a.\_ \mid a \in A\}$.
The set of rules $\tilde R$ is defined as follows:
\begin{itemize}
 \item For all $a, b \in A$, $\tilde R$ contains the following rule 
 
\begin{equation} \label{rule:prefix}
 \dedrule{x \trans[b] x'}{a.x \trans[a] b.x'}
 \end{equation}
 \item For each rule 
 $r=\dedrule{x_1 \trans[a_1] x'_1 \quad \cdots \quad x_n \trans[a_n]x'_n}%
            {f(x_1, \ldots, x_n) \trans[a] t(x_1, \ldots, x_n,x'_1, \ldots, x'_n)} \in R$, the set
 $\tilde R$ contains 
 \begin{equation}
  \label{rule:general_rule_trans}
 \tilde r= \dedrule{x_1 \trans[a_1] x'_1 \quad \cdots \quad x_n \trans[a_n]x'_n}
            {\tilde f(x_1, \ldots, x_n) \trans[a] \tilde t(a_1.x'_1, \ldots, a'_n.x'_n,x'_1, \ldots, x'_n)} 
 \end{equation} 
 where $\tilde t$ is the term obtained from $t$ by replacing each $g \in \Sigma$ 
 by $\tilde g \in \tilde \Sigma$.
\end{itemize}

\noindent
The specification $(\tilde \Sigma, A, \tilde R)$ is now monadic and preserves the original semantics as stated by the following result.
For a detailed proof, see~\ref{sec:the_enc_is_sound}.

\begin{theorem}
\label{th:translation}
Let $(\Sigma, A, R)$ be a stream GSOS specification and $(\tilde \Sigma, A, \tilde R)$ be the corresponding monadic one. Then, 
for all $t \in T_{\Sigma}\emptyset$, $t \bisim \tilde t$.
\end{theorem}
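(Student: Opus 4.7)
The plan is to exhibit, for every $t \in T_\Sigma\emptyset$, an explicit bisimulation relating $t$ in the original stream system to $\tilde t$ in the monadic one, working in the combined stream system over $T_{\Sigma \cup \tilde\Sigma}\emptyset$ in which the rules of both specifications coexist (and where $T_\Sigma\emptyset$ and $T_{\tilde\Sigma}\emptyset$ are disjoint subsets of closed terms). My candidate relation $\mathcal R \subseteq T_\Sigma\emptyset \times T_{\tilde\Sigma}\emptyset$ is the smallest one closed under three clauses: (A) $(t, \tilde t) \in \mathcal R$ for every $t \in T_\Sigma\emptyset$; (B) if $t \trans[a] t'$ in $(\Sigma, A, R)$ and $(t', v) \in \mathcal R$, then $(t, a.v) \in \mathcal R$; and (C) if $f \in \Sigma$ has arity $n$ and $(u_i, v_i) \in \mathcal R$ for $1 \le i \le n$, then $(f(u_1, \ldots, u_n), \tilde f(v_1, \ldots, v_n)) \in \mathcal R$. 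Clause (B) encodes the defining behaviour of the prefix operator, namely that a state $a.v$ emits $a$ and then evolves like $v$; it therefore faithfully mimics any state $t$ that emits $a$ and continues as some $t'$ related to $v$. Clause (C) is the congruence-like closure that will lift $\mathcal R$ through the $\Sigma$-contexts $t^*$ appearing in the targets of rules in $\tilde R$.

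To verify that $\mathcal R$ is a bisimulation I would proceed by induction on the derivation of $(u,v) \in \mathcal R$. The main case is (A): for $(t, \tilde t)$ with $t = f(t_1,\ldots,t_n)$, determinacy of stream GSOS yields unique $t_i \trans[a_i] t'_i$ and a unique rule $r$ triggered by $(a_1,\ldots,a_n)$ whose conclusion is $t \trans[a] t^*(t_1,\ldots,t_n,t'_1,\ldots,t'_n)$; applying the inductive hypothesis to the smaller pairs $(t_i, \tilde t_i) \in \mathcal R$ gives $\tilde t_i \trans[a_i] v_i$ with $(t'_i, v_i) \in \mathcal R$, so the corresponding $\tilde r$ fires and produces $\tilde t \trans[a] \tilde{t^*}(a_1.v_1,\ldots,a_n.v_n,v_1,\ldots,v_n)$. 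The outputs $a$ agree on both sides, and the derivatives are $\mathcal R$-related by combining $(t_i, a_i.v_i) \in \mathcal R$ from clause (B), $(t'_i, v_i) \in \mathcal R$ from the IH, and iterated application of (C) along the $\Sigma$-structure of $t^*$. Cases (B) and (C) of the verification proceed analogously: for (B), the prefix rule $a.v \trans[a] b.v'$ combines the IH on $(t', v)$ with a further application of clause (B); for (C), the argument mirrors case (A) but starts directly from the hypothesised pairs on the children.

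The technical heart of the argument, and the step I expect to be the hardest to get right, is coping with the non-monadic structure of $r$: its target $t^*(\vec x, \vec{x}')$ may contain occurrences of the source variables $x_i$, whereas the corresponding monadic rule $\tilde r$ substitutes the prefixed terms $a_i.x'_i$ into those positions. Clause (B) of $\mathcal R$ is exactly what is needed to certify that the original source $t_i$ is bisimilar to $a_i.v_i$, and without it the step from $t^*(t_1,\ldots,t_n,t'_1,\ldots,t'_n)$ to $\tilde{t^*}(a_1.v_1,\ldots,a_n.v_n,v_1,\ldots,v_n)$ could not be closed under $\mathcal R$. A secondary subtlety is the induction itself: since $\mathcal R$ is defined inductively rather than structurally, the verification must simultaneously treat pairs coming from each of the three clauses, and the IH has to be invoked only on strictly smaller derivations, which forces careful bookkeeping when (B) is applied to a pair that is itself obtained via (B).
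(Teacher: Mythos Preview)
Your approach is correct and takes a slightly different route from the paper. The paper's relation has, in place of your (B), the narrower clause ``$(t, a.\tilde{t'}) \in \mathcal R$ whenever $t \trans[a] t'$'' (always using $\tilde{t'}$ on the right, not an arbitrary $v$ with $(t',v)\in\mathcal R$), and its clause corresponding to your (A) is stated only for constants. Because of the narrower prefix clause, the paper's relation is \emph{not} a plain bisimulation: when $a.\tilde{t'} \trans[a] b.s''$ one only gets $t' \mathrel{\mathcal R} \tilde{t'} \bisim b.s''$, so the paper proves it is a bisimulation \emph{up-to bisimilarity}, relying along the way on a separate lemma ($t \bisim a.t'$ whenever $t \trans[a] t'$) and on $\bisim$ being a congruence for GSOS operators. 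Your recursive clause (B) absorbs exactly that detour into the relation itself, so your $\mathcal R$ is a plain bisimulation and neither the auxiliary lemma nor the congruence fact is needed. That is a genuine, if modest, simplification.

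One small wrinkle: your induction scheme is not quite consistent as stated. Clause (A), written for \emph{all} $t \in T_\Sigma\emptyset$, is an axiom schema, so every pair $(t,\tilde t)$ has a one-step derivation and the subpairs $(t_i,\tilde t_i)$ are not ``smaller'' in the derivation order you invoke. The fix is immediate: either restrict (A) to constants---compound instances then arise via iterated (C), and induction on derivations goes through cleanly---or, as the paper does, argue by structural induction on the left component with a case split on the generating clause. With either adjustment your argument for all three cases (in particular your handling of the non-monadic target $t^*$ via (B) on the $x_i$-positions and the IH on the $x'_i$-positions, then iterated (C)) is sound.
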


\begin{example}
 \label{ex:stream_calculus:monadic}
Consider the non-monadic specification in Example~\ref{ex:stream_calculus}. The corresponding monadic specification consists of the rules in Figure~\ref{fig:rulestreamcalculus} (b) where, to keep the notation light, we used operation symbols $f$ rather than $\tilde{f}$. 
\end{example}

In~\cite{BasoldPR17}, it is conjectured that a variant of Theorem~\ref{th:translation} goes through
at the more general level of abstract/monadic GSOS for arbitrary endofunctors (rather than only streams systems),
but a proof of such a general statement is still missing. 
In~\cite{PousR17}, it is shown more abstractly that, for polynomial functors on $\set$ (such as our stream functor), any operation expressible by a GSOS specification is also expressible by a plain distributive law (of one functor over the other),
but that proof is less constructive: it does not result in a concrete new specification, but only shows existence.

\section{Pointwise Extensions of Monadic GSOS Specifications} 
\label{sec:pe_monadic_gsos}

The first step to deal with the semantics of open terms induced by a stream GSOS specification is to transform the latter into a Mealy GSOS specification.
We follow the approach in~\cite{HK11} which is defined for arbitrary GSOS but, as motivated in Section \ref{sec:from_full_to_monadic}, we restrict our attention to  monadic specifications.

Let  $(\Sigma, A, R)$ be a monadic stream GSOS specification and $B$ some input alphabet.
The corresponding monadic Mealy GSOS specification is a tuple $(\Sigma, A, B, \overline{R})$, 
where $\overline{R}$ is the least set of Mealy rules which contains, 
for each stream GSOS rule $
  r=\dedrule{x_1 \trans[a_1] x'_1 \quad \cdots \quad x_n \trans[a_n]x'_n}%
            {f(x_1, \ldots, x_n) \trans[a] t(x'_1, \ldots, x'_n)}\in R$ and $b\in B$, 
            the Mealy rule $\overline r_b$ defined by 
 \begin{equation}
  \label{rule:toMealy}
 \overline r_b= \dedrule{x_1 \trans[b | a_1] x'_1 \quad \cdots \quad x_n \trans[b | a_n]x'_n}
            { f(x_1, \ldots, x_n) \trans[b | a]  t(x'_1, \ldots, x'_n)} 
 \end{equation} 
An example of this construction is shown in Figure~\ref{fig:rulestreamcalculus} (c).

Recall from Section \ref{sec:preliminaries} that any abstract GSOS specification induces a $\Sigma$-algebra on the final $F$-coalgebra. Let $\sigma \colon  \Sigma A^\omega \to A^\omega $ be the algebra induced by the stream specification and $\overline{\sigma} \colon  \Sigma \Gamma(B^\omega,A^\omega)  \to \Gamma(B^\omega,A^\omega) $ the one induced by the corresponding Mealy specification. Theorem~\ref{th:pointwise_ext}, later in this section, informs us that $\overline{\sigma}$ is the \emph{pointwise extension} of $\sigma$.

\begin{definition}\label{def:pw-ext-concrete}
Let  $g \colon  (A^\omega)^n \to A^\omega$ and $\bar{g} \colon (\Gamma(B^\omega,A^\omega))^n \to  \Gamma(B^\omega,A^\omega) $ be two functions.
We say that $\bar{g}$ is  the \emph{pointwise extension} of $g$ iff for all $\cfunc_1, \ldots ,\cfunc_{n} \in  \Gamma(B^\omega,A^\omega) $  and $\beta \in B^\omega$,
$\bar{g}(\cfunc_1, \ldots, \cfunc_{n})(\beta) = g (\cfunc_1(\beta), \ldots, \cfunc_{n}(\beta ))$.
This notion is lifted in the obvious way to $\Sigma$-algebras for an arbitrary signature $\Sigma$.
\end{definition}

\begin{example}
Recall the operation $\oplus \colon  A^\omega \times A^\omega \to A^\omega$ from Example~\ref{ex:stream_calculus} that arises from the specification in Figure~\ref{fig:rulestreamcalculus} (a) (it is easy to see that the same operation also arises from the monadic specification in Figure~\ref{fig:rulestreamcalculus} (b)). Its pointwise extension $\bar{\oplus} \colon  \Gamma(B^\omega,\R^\omega) \times \Gamma(B^\omega,\R^\omega) \to \Gamma(B^\omega,\R^\omega)$ is defined for all  $\cfunc_1,\cfunc_2 \in \Gamma(B^\omega, \R^\omega)$ and $\beta\in B^\omega$ as
$(\cfunc_1 \bar{\oplus} \cfunc_2)(\beta) = \cfunc_1(\beta) \oplus \cfunc_2(\beta)$.  
Theorem~\ref{th:pointwise_ext} tells us that $\bar{\oplus}$ arises from the corresponding Mealy GSOS specification (Figure~\ref{fig:rulestreamcalculus}(c)).
\end{example}

In \cite{HK11}, the construction in~\eqref{rule:toMealy} is generalised from stream specifications to arbitrary abstract GSOS. The key categorical tool is the notion of \emph{costrength} for an endofunctor $F\colon  \set \to \set$. Given two sets $B$ and $X$, we first define $\epsilon^b \colon  X^B \to X$ as $\epsilon^b(f) = f(b)$ for all $b\in B$. Then, $\cs^F_{B,X} \colon  F (X^B) \to (FX)^B$ is a natural map in $B$ and $X$, given by $\cs^F_{B,X}(t)(b) = (F\epsilon^b)(t)$.
See \ref{sec:strengthco} for some additional basic properties of costrength.

Now, given a monadic specification $\lambda \colon  \Sigma F \nattrans FT$, we define $\bar{\lambda} \colon  \Sigma(F^B) \nattrans (FT)^B$ as the natural transformation that is defined for all sets $X$ by
\begin{equation}\label{eq:lambdabar}
\xymatrix@C=1.3cm{
	 \Sigma(FX)^B \ar[r]^{\cs^\Sigma_{B,FX}}
	 	& (\Sigma FX)^B \ar[r]^{\lambda^B_X}
	 	& (FT X)^B
}.
\end{equation}
Observe that $\bar{\lambda}$ is also a monadic specification, but for the functor $F^B$ rather than the functor $F$. The reader can easily check that for $F$ being the stream functor $FX=A\times X$, the resulting $\bar{\lambda}$ is indeed the Mealy specification corresponding to $\lambda$ as defined in~\eqref{rule:toMealy}.

It is worth to note that the construction of $\bar{\lambda}$ for an arbitrary abstract GSOS $\lambda \colon  \Sigma (\Id\times  F) \nattrans FT$, rather than a monadic one, would not work as in~\eqref{eq:lambdabar}. The solution devised in~\cite{HK11} consists of introducing some auxiliary operators as already discussed in Section \ref{sec:from_full_to_monadic}. The following result has been proved in~\cite{HK11} for arbitrary abstract GSOS, with these auxiliary operators. Our formulation is restricted to monadic specifications. 
\begin{theorem}
\label{th:pointwise_ext}
Let $F$ be a functor with a final coalgebra $(Z, \zeta)$, 
and let $(\bar{Z}, \bar{\zeta})$ be a final $F^B$-coalgebra. 
Let $\lambda \colon  \Sigma F \nattrans FT$ be a monadic distributive law, and
$\sigma \colon  \Sigma Z \to Z$ the algebra induced by it. 
The algebra $\bar \sigma \colon  \Sigma \bar Z \to \bar Z$ induced by $\bar{\lambda}$ is a pointwise extension of $\sigma$.
\end{theorem}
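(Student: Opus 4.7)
The plan is to reduce the theorem to the following claim: for each ``input'' $\beta$, the evaluation map $\ev_\beta \colon \bar Z \to Z$ is a $\Sigma$-algebra homomorphism from $(\bar Z, \bar\sigma)$ to $(Z, \sigma)$. In the concrete case of streams and Mealy machines, where $\bar Z = \Gamma(B^\omega, A^\omega)$ and $Z = A^\omega$, the map $\ev_\beta$ sends $\cfunc$ to $\cfunc(\beta)$, and the homomorphism equation spells out exactly the pointwise extension condition of Definition~\ref{def:pw-ext-concrete}; for a general functor $F$ this captures the ``obvious lifting'' alluded to there.

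First I would construct $\ev_\beta$ abstractly. Each $b \in B$ turns $\bar\zeta \colon \bar Z \to (F\bar Z)^B$ into an ordinary $F$-coalgebra $\bar\zeta_b = \epsilon^b \circ \bar\zeta$ on $\bar Z$; iterating these structures along $\beta$ gives $\ev_\beta$ by finality of $(Z,\zeta)$. More conveniently, I would equip $B^\omega \times \bar Z$ with an $F$-coalgebra structure built from the head/tail structure on $B^\omega$ together with $\bar\zeta$, and let $\ev \colon B^\omega \times \bar Z \to Z$ be the unique coalgebra morphism into $(Z,\zeta)$, writing $\ev_\beta = \ev(\beta, -)$. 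Because \eqref{eq:alg_induced_by_lambda} factors $\sigma$ and $\bar\sigma$ through $\asem{-} \colon TZ \to Z$ and $\overline{\asem{-}} \colon T\bar Z \to \bar Z$ respectively, naturality of $\eta$ and $\kappa$ reduces the homomorphism property of $\ev_\beta$ to the single equation
\[
\ev_\beta \circ \overline{\asem{-}} \;=\; \asem{-} \circ T\,\ev_\beta .
\]

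To establish this equation I would appeal to the uniqueness clause of finality: both sides are $F$-coalgebra morphisms into $(Z,\zeta)$ from a suitable $F$-coalgebra on $T\bar Z$. The required compatibility between the induced distributive laws $\rho$ and $\bar\rho$---namely, that $\ev_\beta$ transports one to the other---reduces one step at a time to the defining identity $\epsilon^b \circ \cs^F_{B,X} = F\epsilon^b$ of costrength (see~\ref{sec:strengthco}), which is precisely what sits behind the formula $\bar\lambda = \lambda^B \circ \cs^\Sigma$ in~\eqref{eq:lambdabar}. The main obstacle I anticipate is lifting this single-step compatibility coherently along the whole stream $\beta$: the one-step property is immediate from costrength, but combining it with the free-monad extension in~\eqref{eq:dl-monad} and propagating it along the coinductive definition of $\ev$ requires a careful induction on the term structure of $T\bar Z$ together with a coinduction on $\beta$. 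Once this bookkeeping is settled, the rest of the argument is a routine diagram chase.
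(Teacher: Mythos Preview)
Your overall plan is sound and close in spirit to the paper's argument, but the route differs in one important structural way, and there is a subtle gap in how you set up the finality argument.

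The paper does not argue pointwise in $\beta$. Instead it works with the single map $\ev \colon B^\omega \times \bar Z \to Z$ and the strength-based formulation of pointwise extension (Definition~\ref{def:pw-ext}). The actual proof has two steps: first, Lemma~\ref{lm:various-dls} shows that the induced distributive law $\bar\rho$ factors as $(\rho_X)^B \circ \cs^T_{B,FX}$; second, this shape is exactly the hypothesis of \cite[Theorem~6.1]{HK11}, which is invoked as a black box to conclude that the $T$-algebra $\asem{-}^{\bar\rho}$ on $\bar Z$ is a pointwise extension of $\asem{-}^{\rho}$. A short naturality square then transports this to the $\Sigma$-algebras $\bar\sigma$ and $\sigma$. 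So the paper's technical work is entirely in the factorisation lemma; the coinduction along $\beta$ you anticipate is outsourced to~\cite{HK11}.

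Your approach, by contrast, tries to prove the cited theorem directly, and the place where it wobbles is the sentence ``both sides are $F$-coalgebra morphisms into $(Z,\zeta)$ from a suitable $F$-coalgebra on $T\bar Z$''. For a \emph{fixed} $\beta$ there is no such coalgebra: unfolding $\ev_\beta$ one step produces $\ev_{\tl(\beta)}$, not $\ev_\beta$, so neither $\ev_\beta \circ \overline{\asem{-}}$ nor $\asem{-}\circ T\ev_\beta$ is a coalgebra morphism out of any $F$-coalgebra carried by $T\bar Z$ alone. The finality argument only goes through if you carry $B^\omega$ as state, i.e.\ put an $F$-coalgebra on $B^\omega \times T\bar Z$ (via the head/tail structure, $\bar\zeta$, and $\rho$) and compare the two induced maps $B^\omega \times T\bar Z \to Z$. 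Once you do that, the one-step compatibility you identify, namely $\epsilon^b \circ \cs = F\epsilon^b$, together with the factorisation $\bar\rho = \rho^B \circ \cs^T$, is exactly what makes both maps coalgebra morphisms, and uniqueness finishes the job. So your ``obstacle'' is not merely bookkeeping: it forces you to abandon the per-$\beta$ formulation and move to the carrier $B^\omega \times T\bar Z$, which is precisely the strength-based picture the paper uses. After that reformulation your argument is correct and gives a self-contained proof of what the paper cites from~\cite{HK11}.
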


In the theorem above, the notion of pointwise extension should be understood as a generalisation of Definition~\ref{def:pw-ext-concrete} to arbitrary final $F$ and $F^B$-coalgebras. This generalised notion, that has been introduced in~\cite{HK11}, will not play a role for our paper where $F$ is fixed to be the stream functor $FX=A\times X$. Nevertheless, for the sake of completeness, we report its definition in~\ref{app:pointwise_ext},
and prove Theorem~\ref{th:pointwise_ext} at this general level (\ref{sec:proof_of_pointwise_ext}).

\begin{example}
We show how to obtain, in the abstract setting, the corresponding Mealy GSOS specification from a stream GSOS specification.
To simplify the example we only consider the operator of ${\oplus}$ of our running example.
The syntax, i.e. $\oplus$, defines a functor $\Sigma X = X \times X$
and the behavior functor of stream systems is $F = \R \times X$.
The monadic GSOS specification, i.e. the family of rules for ${\oplus}$ in Figure~\ref{fig:rulestreamcalculus}~{(b)}, induces a natural transformation
$$
\lambda : (\R \times {-}) \times (\R \times {-}) \nattrans (\R \times T{-})
$$
whose $X$-component is given by: 
\begin{equation}
 \begin{tabular}{lccc}
 $\lambda_X :$ & $(\R \times X ) \times (\R \times X)$ & $\nattrans$ & $\R \times TX$\\[1ex]
 & $\tuple{\tuple{a, x_1}, \tuple{b, x_2}}$ & $\longmapsto$ & $\tuple{a+b, \tuple{x_1, x_2}}$ 
 \end{tabular}
\end{equation}
For defining the pointwise extension $\baroplus$ of $\oplus$, 
we have to consider the behavior functor $F^BX = (\R \times X)^B$.
Following the construction of $\bar \lambda$, an $X$-component of a $\bar \lambda$ is of the type:
$$
\bar \lambda_X : (\R \times X)^B \times (\R \times X)^B \nattrans (\R \times TX)^B
$$
Let $\phi(b) = \tuple{\phi_0(b), \phi_1(b)}$ for $\phi \in (\R \times X)^B$ and $b\in B$.
By instantiation of (\ref{eq:lambdabar}), $\bar \lambda_X$ is defined by:
\begin{center}
\begin{tikzpicture}
  \matrix (m) [matrix of math nodes,row sep=2em,column sep=2.5em]
  {
     \tuple{\phi, \psi} & \in  &  (\R \times X)^B \times (\R \times X)^B \\
     \lambda b.\tuple{\phi_0(b), \phi_1(b), \psi_0(b), \psi_1(b)} 
     & \in  
     & (\R \times X \times \R \times X)^B \\
     \lambda b.\tuple{\phi_0(b) + \psi_0(b), \tuple{\phi_1(b), \psi_1(b)}}
     & \in 
     & (\R \times TX)^B \\
   };     
   \path[-stealth]
     (m-1-1) edge (m-2-1)
     (m-2-1) edge (m-3-1)
     (m-1-3) edge node [right] {\scriptsize $\cs^\Sigma_{B, \R\times X}$} (m-2-3)
     (m-2-3) edge node [right] {\scriptsize $\lambda_X^B$} (m-3-3);
\end{tikzpicture}
\end{center}
Using the notation $\phi \trans[b|\phi_0(b)] \phi_1(b)$ to denote $\phi(b) = \tuple{\phi_0(b), \phi_1(b)}$
and taking into account the definition of $\bar \lambda$, we get what we want, 
i.e. the rules defining the semantics of  $\baroplus$ are the family of the family of Mealy GSOS rules for ${\oplus}$ in Figure~\ref{fig:rulestreamcalculus}~(c). Notice we have reused the operation symbols ${f}$ rather than ${\bar f}$ to keep the notation light.
\end{example}

\section{Mealy Machines over Open Terms}
\label{sec:mm_over_ot}

We now consider the problem of defining a semantics for the set of open terms $\openTerms$ for a fixed set of variables $\Var$.
Our approach is based on the results in the previous sections: we transform a monadic GSOS specification for streams with outputs in $A$ into a Mealy machine with inputs in $A^\Var$ and outputs in $A$, i.e.,  a coalgebra for the functor $FX=(A \times X)^{A^\Var}$. 
The coinductive extension of this Mealy machine provides the open semantics:
for 
each open term $t\in \openTerms$ and variable assignment $\psi \colon \Var \rightarrow A^\omega$, it gives an appropriate output stream in $A^{\omega}$.
This is computed in a stepwise manner: for an input $\subs \colon  \Var \to A$, representing ``one step'' of a variable assignment $\psi$, we obtain one step of the output stream.

We start by defining a Mealy machine $c \colon  \Var \to (A \times \Var)^{A^\Var}$ on the set of variables $\Var$ as
on the left below, for all $\vx \in \Var$ and $\subs \in A^\Var$:
\begin{equation}\label{eq:proj-mm}
\begin{gathered}
c(\vx)(\subs) = (\subs(\vx), \vx) 
\qquad \qquad 
\xymatrix{
	\vx \ar@(ru,dr)^{\subs \mid \subs(\vx)}
}
\end{gathered}
\end{equation}
Concretely, this machine has variables as states and for each $\subs \colon  \Var \to A$ a self-loop, as depicted on the right.
Now, let $\lambda \colon  \Sigma (A \times {-}) \Rightarrow A \times T$ be 
a monadic abstract stream specification and 
$\bar{\lambda} \colon  \Sigma ((A \times {-})^{A^\Var}) \Rightarrow (A \times T({-}))^{A^\Var}$
be the induced Mealy specification, as defined in~\eqref{eq:lambdabar}. 
As mentioned in Section~\ref{sec:preliminaries}, $\bar{\lambda}$
defines a distributive law 
$\rho \colon T((A \times {-})^{A^\Var}) \Rightarrow (A \times T({-}))^{A^\Var}$,
which allows to extend $c$ (see~\eqref{eq:proj-mm}) to a coalgebra
$m_\lambda \colon \openTerms \rightarrow (A \times \openTerms)^{A^\Var}$, given by
\begin{equation}\label{eq:mm-ot}
\xymatrix{
\openTerms \ar[r]^-{Tc} 
	& T(A \times \Var)^{A^\Var} \ar[r]^-{\rho_\Var}
	& (A \times \openTerms)^{A^\Var}\text{.}
}
\end{equation}
\noindent This is the Mealy machine of interest.
Intuitively, it is constructed by computing the pointwise extension
of the original stream GSOS specification (as in the previous section) and
then defining the Mealy machine by induction on terms. The base
case is given by~\eqref{eq:proj-mm}, and the inductive cases 
by the (pointwise extended) specification. We first give a few examples.

\begin{example}\label{ex:alt-ex-open}
Consider the stream specification $\lambda$ of the operation $\alt$, given in Example~\ref{ex:running_example_specification}.
The states of the Mealy machine $m_\lambda$ are the open terms $\openTerms$.
The transitions of terms are defined by the set of rules 
     $$ \dedrule{}{\ca \trans[\subs|a] \ca} \qquad
      \dedrule{x \trans[\subs|a] x' \quad y \trans[\subs|b] y'}{\alt(x,y) \trans[\subs|a] \alt(y',x')} \qquad 
      \text{ for all } \subs \colon  \Var \rightarrow A \text{ and } a,b \in A 
      $$
together with the transitions for the variables as in~\eqref{eq:proj-mm}. For instance, for each $\vx, \vy, \vz \in \Var$ and all $\subs,\subs' \colon  \Var \rightarrow A$, we have the following transitions
in $m_\lambda$:
\begin{center}
 \begin{tikzpicture}
 \node (s1) at (0,0) {$\alt(\vx, \alt(\vy,\vz))$}; 
 \node (s2) at (0,-1) {$\alt(\alt(\vz,\vy),\vx)$};
  \path[->] (s1)  edge [out=200,in=160] node [left] {$\subs | \subs(\vx)$} (s2);
  \path[->] (s2)  edge [out= 20,in=-20]node [right] {$\subs' | \subs'(\vz)$} (s1);
      \end{tikzpicture}
\end{center}
\end{example}

\begin{example}\label{ex:streamcalculus-open}
For the fragment of the stream calculus introduced in Example~\ref{ex:stream_calculus}, the Mealy machine over open terms is defined by the rules in Figure \ref{fig:rulestreamcalculus}(d). Below we draw the Mealy machines of some open terms that will be useful later. 
\begin{equation*}
\xymatrix{
	\vx \oplus \vy \ar@(ul,ur)^{\subs \mid \subs(\vx) + \subs(\vy)}
}
\quad
\xymatrix{
	\vy \oplus \vx \ar@(ul,ur)^{\subs \mid \subs(\vy) + \subs(\vx)}
}
\quad 
\quad 
\xymatrix{
	(\vx \oplus \vy) \oplus \vz \ar@(ul,ur)^{\subs \mid  (\subs(\vx) + \subs(\vy)) +  \subs(\vz)  }
}
\quad 
\xymatrix{
	\vx \oplus (\vy \oplus \vz) \ar@(ul,ur)^{\subs \mid  \subs(\vx) + (\subs(\vy) +  \subs(\vz))  }
}
\end{equation*}
\end{example}

We define the open semantics below
by the coinductive extension of $m_{\lambda}$. Let $\tgamma = \Gamma((A^\Var)^\omega, A^\omega)$
be the set of causal functions $\cfunc \colon  (A^\Var)^\omega \to A^\omega$, which is
the carrier of the final coalgebra for the functor $FX= (A\times X)^{A^\Var}$.
Notice that a function $\cfunc \colon (A^\Var)^\omega \rightarrow A^\omega$ can equivalently be presented as a function
$\tilde{\cfunc} \colon  (A^\omega)^\Var \to A^\omega$ (swapping the arguments in the domain). Given such a function $\cfunc \colon (A^\Var)^\omega \rightarrow A^\omega$ and a function $\psi \colon \Var \rightarrow A^\omega$, in the sequel,
we sometimes abuse of notation by writing $\cfunc(\psi)$ where we formally mean $\tilde{\cfunc}(\psi)$.

\begin{definition}\label{def:open}
	Let $\lambda \colon  \Sigma (A \times {-}) \Rightarrow A \times T$ be a monadic abstract stream GSOS specification.
	The \emph{open semantics} of $\lambda$ is the coinductive extension $\osem{-} \colon \openTerms \rightarrow \tgamma$
	of the Mealy machine $m_\lambda \colon \openTerms \rightarrow (A\times \openTerms)^{A^\Var}$ defined in~\eqref{eq:mm-ot}.
\end{definition}
Note that the open semantics $\osem{-}$ assigns to every open term $t$ a causal function on streams. Thus, two 
terms are behaviourally equivalent (identified by $\osem{-}$) precisely if they denote the same causal function. 

Since $\osem{-}$ is defined coinductively (as the unique morphism into the final coalgebra), behavioural equivalence of open terms can now be checked by means of bisimulations on Mealy machines (Definition \ref{def:bs:mmachine}). We define \emph{open bisimilarity}, denoted by $\obisim$, as the greatest bisimulation on $m_\lambda$. Obviously, for all open terms $t_1,t_2\in \openTerms$ it holds that $t_1 \obisim t_2$ iff $\osem{t_1}=\osem{t_2}$. The following result provides another useful characterisation of $\osem{-}$.

\begin{lemma}
 \label{lm:semantics}
 	Let $\lambda$ be a monadic abstract stream GSOS specification, with induced
 	algebra $\sigma \colon \Sigma A^\omega \rightarrow A^\omega$. 
 	Let $\bar{\lambda}$ be the corresponding Mealy specification, with induced algebra
 	 $\bar{\sigma} \colon \Sigma \tgamma \rightarrow \tgamma$.
 	Then the open semantics $\osem{-}$ is the unique homomorphism making the diagram below commute:

\begin{equation}
\begin{tabular}{c}
 \begin{tikzpicture}[every node/.style={text depth=0.5ex,text height=1.7ex}]
  \matrix (m) [matrix of math nodes,row sep=1.5em,column sep=3em]
  {
    \Sigma T \Var & \Sigma \tgamma   \\
    T\Var & \tgamma  \\
    \Var & \\
   };     
   \path[-stealth]
     (m-1-1) edge node [above] {\scriptsize $\Sigma \osem{\_ }$} (m-1-2)
             edge node [left] {\scriptsize $\kappa_{\Var}$} (m-2-1)
     (m-1-2) edge node [right] {\scriptsize $\bar \sigma$} (m-2-2)
     (m-2-1) edge node [above] {\scriptsize $\osem{\_ }$} (m-2-2)
     (m-3-1) edge node [left] {\scriptsize $\eta_{\Var}$} (m-2-1)
             edge node [below] {\scriptsize $\proj$} (m-2-2)     ;
\end{tikzpicture}
\end{tabular}
\end{equation} 

\noindent
where $\eta$ and $\kappa$ are defined by initiality (Section~\ref{sec:preliminaries}), 
and for each $\vx \in \Var$ and $\psi \colon  \Var \to A^\omega$, $\proj(\vx)(\psi) = \psi(\vx)$.
\end{lemma}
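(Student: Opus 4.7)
The plan is to split the statement into three parts: commutativity of the lower triangle $\osem{-} \circ \eta_\Var = \proj$, commutativity of the upper square $\osem{-} \circ \kappa_\Var = \bar{\sigma} \circ \Sigma \osem{-}$ (i.e., that $\osem{-}$ is a $\Sigma$-algebra morphism from $(T\Var, \kappa_\Var)$ to $(\tgamma,\bar{\sigma})$), and uniqueness. Once these are in place, the claim follows because uniqueness is automatic from the universal property of the free $\Sigma$-algebra $T\Var$ on $\Var$: any $\Sigma$-algebra homomorphism from $(T\Var,\kappa_\Var)$ into $(\tgamma,\bar{\sigma})$ whose restriction along $\eta_\Var$ is $\proj$ must coincide with the free extension $\proj^\dagger$, so at most one such map exists.

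For the lower triangle I would argue directly from~\eqref{eq:mm-ot}. By the unit law in the distributive-law diagram~\eqref{eq:dl-monad}, namely $\rho \circ \eta F = F\eta$, the composite $m_\lambda \circ \eta_\Var$ equals the original coalgebra $c$ from~\eqref{eq:proj-mm} followed by the functorial action of $(A \times {-})^{A^\Var}$ on $\eta_\Var$. Hence $\eta_\Var$ is an $F$-coalgebra morphism from $c$ to $m_\lambda$, so $\osem{-} \circ \eta_\Var$ is the coinductive extension of $c$. Since $c$ is the one-state self-loop machine $c(\vx)(\subs) = (\subs(\vx), \vx)$, the coinductive extension of $c$ at $\vx$ is the causal function sending an input stream $\subs_0\subs_1\dots$ to $\subs_0(\vx)\subs_1(\vx)\dots$; under the bijection $\tgamma \cong \{\tilde{\cfunc} \colon (A^\omega)^\Var \to A^\omega\}$ noted before Definition~\ref{def:open}, this is exactly $\proj(\vx)$.

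For the upper square I would invoke the standard bialgebraic semantics result from abstract GSOS: given a monadic distributive law $\rho$ and an $F$-coalgebra $c \colon X \to FX$, the coinductive extension $X \to Z$ of the $\rho$-extended coalgebra $\rho_X \circ Tc \colon TX \to FTX$ is a $\Sigma$-algebra homomorphism from $(TX, \kappa_X)$ into the algebra induced by $\rho$ on the final $F$-coalgebra (see~\cite{TP97,Bartels04,Klin11}). Instantiating this with $F = (A \times {-})^{A^\Var}$, with $c$ the machine in~\eqref{eq:proj-mm}, and with $\rho$ the distributive law arising from $\bar{\lambda}$, the induced algebra on the final coalgebra $\tgamma$ is exactly $\bar{\sigma}$ by Theorem~\ref{th:pointwise_ext} (and the discussion around~\eqref{eq:alg_induced_by_lambda}), so the square commutes. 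The main obstacle is this step: it requires appealing to the bialgebraic correspondence which was only sketched in Section~\ref{sec:preliminaries}, and one must check that the algebra this result produces on $\tgamma$ really is $\bar{\sigma}$ rather than some other $\Sigma$-algebra. This is what Theorem~\ref{th:pointwise_ext} guarantees, but the bookkeeping between $\lambda$, $\bar{\lambda}$, and the two induced algebras must be laid out carefully.
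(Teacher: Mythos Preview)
Your proposal is correct and follows essentially the same approach as the paper: both argue that $\osem{-}$ is a $\Sigma$-algebra morphism by the standard bialgebraic result, and both establish the lower triangle by showing that $\eta_\Var$ is a coalgebra morphism from $c$ to $m_\lambda$ and then using finality (the paper phrases this as ``$\proj$ is a coalgebra morphism from $c$ to the final coalgebra'', which is equivalent to your direct computation of the coinductive extension of $c$). One small point: your appeal to Theorem~\ref{th:pointwise_ext} for the upper square is unnecessary --- $\bar{\sigma}$ is \emph{by definition} the algebra induced on $\tgamma$ by $\bar{\lambda}$ (hence by the associated $\rho$), so there is nothing to check there; Theorem~\ref{th:pointwise_ext} relates $\bar{\sigma}$ to $\sigma$, which is not needed in this lemma.
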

\begin{proof}
The open semantics is, by definition, the coinductive extension of $m_\lambda$, as
depicted on the right below:
$$
\xymatrix@C=2cm{
	\Var \ar[dd]_{c} \ar[r]^{\eta_\Var}&
\openTerms \ar[d]^-{Tc} \ar[r]^{\osem{-}}
& \tgamma \ar[dd]^{\zeta}
\\
	& \tsigma(A \times \Var)^{A^\Var} \ar[d]^-{\rho_\Var} & \\
	(A \times \Var)^{A^\Var} \ar[r]_{(\id \times \eta_{\Var})^{A^\Var}}
	& (A \times \openTerms)^{A^\Var} \ar[r]_{(\id \times \osem{-})^{A^\Var}}
	& (A \times \tgamma)^{A^\Var}
}
$$
where $(\tgamma, \zeta)$ is the final $(A \times -)^{A^\Var}$-coalgebra.
It is a standard fact in the theory of bialgebras that $\osem{-}$ is
an algebra morphism as in the statement of the lemma, so it remains to prove 
$\proj = \osem{-} \circ \eta$. To this end, observe that $\osem{-} \circ \eta_\Var$
is a coalgebra morphism from $c \colon \Var \rightarrow (A \times \Var)^{A^\Var}$ to the final coalgebra (above diagram;
the fact that $\eta_\Var$ is a coalgebra morphism is again standard).
By finality, it suffices to prove that $\proj$ is a coalgebra morphism. This easily
follows from the definition of $\proj$ and $c$.
\end{proof}
Observe that, by virtue of Theorem \ref{th:pointwise_ext}, the algebra  $\bar{\sigma}$ is the pointwise extension of $\sigma$. This fact will be useful in the next section to relate $\obisim$ with bisimilarity on the original stream system.

\subsection{Abstract, Open and Closed Semantics}
\label{sec:int-openterms}

Recall from Section \ref{sec:preliminaries} the abstract semantics $\asem{-} \colon T A^\omega \rightarrow A^\omega$ arising as in (B) from a monadic abstract stream specification $\lambda$.
The following proposition is the key  to prove Theorem \ref{th:prooftechnique} relating open bisimilarity and abstract semantics.

\begin{proposition}
\label{prop:osem_and_csem}
	Let $\asem{-}$ and $\osem{-}$ be the abstract and open semantics respectively
	of a monadic abstract stream GSOS specification $\lambda$. For any $t \in \openTerms$, $\psi \colon \Var \to A^\omega$:
	$$\osem{t}(\psi) = \asem{(T\psi)(t)} \, .$$
\end{proposition}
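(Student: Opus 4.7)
The plan is to prove the equality by induction on the structure of $t\in\openTerms$, exploiting the characterisation of $\openTerms$ as the free $\Sigma$-algebra on the set $\Var$. The two cases correspond to whether $t$ arises from $\eta_\Var$ (a variable) or from $\kappa_\Var$ (an operation applied to subterms), which is exactly the split highlighted by Lemma~\ref{lm:semantics}.

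For the base case $t=\eta_\Var(\vx)$, I would unfold $\osem{\eta_\Var(\vx)}(\psi)$ using Lemma~\ref{lm:semantics}, which gives $\proj(\vx)(\psi)=\psi(\vx)$. For the right-hand side, naturality of $\eta$ yields $(T\psi)(\eta_\Var(\vx))=\eta_{A^\omega}(\psi(\vx))$, and then the unit law $\asem{-}\circ\eta_{A^\omega}=\id$ of the Eilenberg--Moore algebra structure induced by $\lambda$ gives $\asem{\eta_{A^\omega}(\psi(\vx))}=\psi(\vx)$, so both sides match.

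For the inductive step $t=\kappa_\Var(f(t_1,\ldots,t_n))$, I would chain four equalities. First, by Lemma~\ref{lm:semantics}, $\osem{t}(\psi)=\bar\sigma(f(\osem{t_1},\ldots,\osem{t_n}))(\psi)$. Second, by Theorem~\ref{th:pointwise_ext} (pointwise extension), this equals $\sigma(f(\osem{t_1}(\psi),\ldots,\osem{t_n}(\psi)))$. Third, the induction hypothesis applied to each $t_i$ rewrites it as $\sigma(f(\asem{(T\psi)(t_1)},\ldots,\asem{(T\psi)(t_n)}))$. Fourth, the $\Sigma$-algebra morphism property of $\asem{-}$ (namely $\asem{-}\circ\kappa_{A^\omega}=\sigma\circ\Sigma\asem{-}$), combined with naturality of $\kappa$ (so that $(T\psi)\circ\kappa_\Var=\kappa_{A^\omega}\circ\Sigma(T\psi)$), transforms this into $\asem{(T\psi)(t)}$.

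The main obstacle will be justifying cleanly the two standard facts about $\asem{-}$ that the argument rests on: the unit law $\asem{-}\circ\eta_{A^\omega}=\id$ and the $\Sigma$-algebra morphism property relating $\kappa_{A^\omega}$ and $\sigma$. Both follow from the theory of $\lambda$-bialgebras recalled in Section~\ref{sec:preliminaries}: the distributive law $\rho$ induced by $\lambda$ equips the final coalgebra $A^\omega$ with a canonical Eilenberg--Moore $T$-algebra structure, which is exactly $\asem{-}$. Any such $T$-algebra automatically satisfies the unit law, and, as shown by \eqref{eq:alg_induced_by_lambda}, descends to the $\Sigma$-algebra $\sigma$ that appears in the pointwise-extension reasoning of step two. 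Once these ingredients are in place, the induction itself is routine.
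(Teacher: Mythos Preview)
Your proposal is correct and follows essentially the same approach as the paper: both rely on Lemma~\ref{lm:semantics}, the pointwise-extension property of $\bar\sigma$ (Theorem~\ref{th:pointwise_ext}), and the Eilenberg--Moore algebra laws for $\asem{-}$. The only difference is presentational: the paper packages the argument as ``both $\ev(\psi,-)\circ\osem{-}$ and $\asem{-}\circ T\psi$ are $\Sigma$-algebra morphisms from $\kappa_\Var$ to $\sigma$ agreeing on $\eta_\Var$, hence equal by freeness of $T\Var$'', whereas you unfold that uniqueness principle into an explicit structural induction --- same proof, different phrasing.
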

\begin{proof}
Let $\psi \colon \Var \to A^\omega$. We formulate what we need to prove
as commutativity of the following diagram:
$$
\begin{tabular}{c}
 \begin{tikzpicture}
  \matrix (m) [matrix of math nodes,row sep=2.5em,column sep=4em]
  {
     T\Var& T A^{\omega}  \\
     \tgamma & A^{\omega} \\
   };     
   \path[-stealth]
     (m-1-1) edge node [above] {\scriptsize $T\psi$} (m-1-2)
             edge node [left] {\scriptsize $\osem{-}$} (m-2-1)
     (m-1-2) edge node [right] {\scriptsize $\asem{-}$} (m-2-2)
     (m-2-1) edge node [above] {\scriptsize $\ev(\psi,-)$} (m-2-2) ;
\end{tikzpicture}
\end{tabular}
$$
where $\ev(\psi,-)$ is defined, for $f \in \tgamma$, by 
\begin{equation}\label{eq:def-ev-psi}
	\ev(\psi,-)(f) = f(\psi) \,.
\end{equation}
It is convenient
below to observe the following equality:
\begin{equation}\label{eq:evpsi}
 \ev(\psi, -) = 
 (\tgamma \simeq 1 \times \tgamma 
 \trans[\psi' \times \id] (A^\Var)^\omega \times \tgamma  
 \trans[\ev] A^\omega)
\end{equation}
 where $\psi' : 1 \to (A^\Var)^\omega$ is a transpose of $\psi$.

Let $\sigma \colon \Sigma A^\omega \rightarrow A^\omega$ be the algebra
 induced by the distributive law $\lambda$.
The proof proceeds by showing that 
$\asem{-} \circ T\psi$ and $\ev(\psi,-) \circ \osem{-}$ are both
 $\Sigma$-algebra morphism from the algebra $\kappa_\Var \colon \Sigma T\Var \rightarrow T\Var$ to $\sigma$,
 such that $\asem{-} \circ T\psi \circ \eta_\Var = \ev(\psi,-)\circ \osem{-} \circ \eta_\Var$. 
 The equality then follows by uniqueness, see Section~\ref{sec:spec}. 
 
 For $\asem{-} \circ T\psi$, consider the following diagram:
\begin{equation}
\begin{tabular}{c}
 \begin{tikzpicture}
  \node at (1.3cm, 0.8cm) {};
  \matrix (m) [matrix of math nodes,row sep=2.5em,column sep=4em]
  {
    \Sigma T\Var&  \Sigma{TA^{\omega}}  & \Sigma A^\omega \\
     T\Var & TA^\omega & A^\omega \\
     V & A^\omega & \\
   };     
    \path[-stealth]
      (m-1-1) edge node [above] {\scriptsize $\Sigma{T\psi}$} (m-1-2)
              edge node [left] {\scriptsize $\kappa_\Var$} (m-2-1)
      (m-2-1) edge node [above] {\scriptsize $T\psi$} (m-2-2)
      (m-3-1) edge node [above]  {\scriptsize $\psi$} (m-3-2)      
              edge node [left] {\scriptsize $\eta_{\Var}$} (m-2-1)
      (m-1-2) edge node [above] {\scriptsize $\Sigma \asem{-}$} (m-1-3)
              edge node [left] {\scriptsize $\kappa_{A^\omega}$} (m-2-2)
      (m-2-2) edge node [above] {\scriptsize $\asem{-}$} (m-2-3)
      (m-3-2) edge node [left]  {\scriptsize $\eta_{A^\omega}$} (m-2-2)
              edge node [below] {\scriptsize $\id$} (m-2-3)      
      (m-1-3) edge node [right] {\scriptsize $\sigma$} (m-2-3) 
     ;
\end{tikzpicture}
\end{tabular}
\end{equation} 
The squares on the left commute by naturality. The triangle on the right below
holds since $\asem{-}$ is an algebra for the (free) monad $T$, and
the upper square is a standard fact in the theory of bialgebras (it follows
from the definition of $\sigma$ and that $\asem{-}$ is an algebra for the monad $T$).
This  shows that $\asem{-} \circ T\psi$ is an algebra morphism, and
$\asem{-} \circ T\psi \circ \eta_\Var = \psi$.
 
For $\ev(\psi,-) \circ \osem{-}$, consider the diagram below.
Here $\st^{\Sigma}_{(A^\Var)^\omega,\tgamma} \colon (A^\Var)^\omega \times \Sigma\tgamma \rightarrow \Sigma((A^\Var)^\omega \times \tgamma)$
is the \emph{strength} of the functor $\Sigma$, see \ref{sec:strengthco} for some basic properties which are used in the rest of the proof.
\begin{equation}
\begin{tabular}{c}
 \begin{tikzpicture} 
  \node at (-4.9cm, 0.7cm) {\scriptsize\textit{(i)}};
  \node at (-5cm, -1.3cm) {\scriptsize\textit{(ii)}};  
  \node at (-3cm, 0.7cm) {\scriptsize\textit{nat. iso.}};  
  \node at (0.5cm, 1.6cm) {\scriptsize\textit{nat. }$\st^\Sigma_{-,\tgamma}$};    
  \node at (3.5cm, 0.7cm) {\scriptsize\textit{pointw. ext.}};    
  \node at (-1.9cm, 2.8cm) {\scriptsize (\ref{eq:strength})};
  \node at (0.8cm, -1.6cm) {\scriptsize (\ref{eq:evpsi})};
  \matrix (m) [matrix of math nodes,row sep=2.5em,column sep=2em]
  {
    \Sigma T\Var&  \Sigma{\tgamma}  &  1 \times \Sigma{\tgamma} & 
    \Sigma(1 \times {\tgamma}) &  \Sigma((A^\Var)^\omega \times {\tgamma})
    & \Sigma A^\omega \\
    & & & (A^\Var)^\omega \times \Sigma {\tgamma} & & \\
    T\Var&  \tgamma  &  1 \times \tgamma & 
    (A^\Var)^\omega \times \tgamma  & & A^\omega \\
    \Var & & & & & \\
   };     
     \path[-stealth]
       (m-1-1) edge node [above] {\scriptsize } (m-1-2)
               edge node [left] {\scriptsize $\kappa_V$} (m-3-1)
       (m-3-1) edge node [above] {\scriptsize $\osem{-}$} (m-3-2)
       (m-4-1) edge node [left]  {\scriptsize $\eta_\Var$} (m-3-1)      
               edge node [below right]  {\scriptsize $\proj$} (m-3-2)
       (m-1-2) edge node [above] {\scriptsize $\simeq$} (m-1-3)
               edge node [right] {\scriptsize $\bar \sigma$} (m-3-2)
               edge [bend left=45] node [above]{\scriptsize $\Sigma \simeq$} (m-1-4)
       (m-3-2) edge node [above] {\scriptsize $\simeq$} (m-3-3)
               edge [bend right=25] node [below]{\scriptsize $\ev(\psi,-)$} (m-3-6) 
       (m-1-3) edge node [above]  {\scriptsize $\st^{\Sigma}_{1,\tgamma}$} (m-1-4)
               edge node [left=1mm] {\scriptsize $\psi' \times \id$} (m-2-4)
               edge node [right] {\scriptsize $\id \times \bar \sigma$} (m-3-3)
       (m-3-3) edge node [above]  {\scriptsize $\psi' \times \id$} (m-3-4)
       (m-1-4) edge node [above]  {\scriptsize $\Sigma(\psi' \times \id)$} (m-1-5)
       (m-2-4) edge node [right=1mm] {\scriptsize $\st^{\Sigma}_{(A^\Var)^\omega,\tgamma}$} (m-1-5)
               edge node [right]  {\scriptsize $\id \times \bar \sigma$} (m-3-4)
       (m-3-4) edge node [above]  {\scriptsize $\ev$} (m-3-6)        
       (m-1-5) edge node [above]  {\scriptsize $\Sigma\ev$} (m-1-6)
       (m-1-6) edge node [right]  {\scriptsize $\sigma$} (m-3-6)
      ;
\end{tikzpicture}
\end{tabular}
\end{equation} 
where $(i),(ii)$ commute by Lemma~\ref{lm:semantics} (see also the lemma for the definition of $\proj$). Hence $\ev(\psi,-) \circ \osem{-}$
is an algebra morphism, and it only remains to prove that
$\ev(\psi,-) \circ \proj = \psi$:
\begin{align*}
\ev(\psi,-) \circ \proj(\vx)
& = \proj(\vx)(\psi) & \eqref{eq:def-ev-psi} \\ 
& = \psi(\vx) & \text{def. of }\proj 
\end{align*}
From $\ev(\psi,-) \circ \osem{-} \circ \eta_\Var = \ev(\psi,-) \circ \proj = \psi = \csem{-} \circ T\psi \circ \eta_\Var$
and the fact (shown above) that $\ev(\psi,-) \circ \osem{-}$ and $\csem{-} \circ T\psi$ are algebra morphisms of
the same type, we conclude that they are equal.
\end{proof}

As a simple consequence, we obtain the following characterization of $\obisim$.
\begin{theorem}
  \label{th:prooftechnique}
	For all $t_1, t_2 \in T\Var$, 
 $\osem{t_1} = \osem{t_2}$  iff for all $\psi\colon  \Var \to A^\omega$: $\asem{T\psi(t_1)} = \asem{T\psi(t_2)}$.
\end{theorem}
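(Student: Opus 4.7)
My plan is to derive this theorem as an essentially immediate corollary of Proposition~\ref{prop:osem_and_csem}, modulo the correspondence (noted just before Definition~\ref{def:open}) between causal functions $\cfunc \colon (A^\Var)^\omega \to A^\omega$ and their transposes $\tilde{\cfunc} \colon (A^\omega)^\Var \to A^\omega$. Once that correspondence is in place, both directions reduce to pointwise evaluation.

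For the forward direction, assume $\osem{t_1} = \osem{t_2}$. Then for every $\psi \colon \Var \to A^\omega$ we have $\osem{t_1}(\psi) = \osem{t_2}(\psi)$ (reading $\osem{t_i}(\psi)$ via the transpose as explained before Definition~\ref{def:open}). Applying Proposition~\ref{prop:osem_and_csem} on both sides yields $\asem{T\psi(t_1)} = \asem{T\psi(t_2)}$, as required.

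For the converse, assume $\asem{T\psi(t_1)} = \asem{T\psi(t_2)}$ for every $\psi \colon \Var \to A^\omega$. By Proposition~\ref{prop:osem_and_csem} this gives $\osem{t_1}(\psi) = \osem{t_2}(\psi)$ for all such $\psi$; that is, the transposes $\widetilde{\osem{t_1}}, \widetilde{\osem{t_2}} \colon (A^\omega)^\Var \to A^\omega$ agree on every argument, hence are equal. Since the transposition operation is a bijection between $\tgamma$ and its image in $(A^\omega)^{(A^\omega)^\Var}$, we conclude $\osem{t_1} = \osem{t_2}$ in $\tgamma$.

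There is no real obstacle here: all of the work has already been done in Proposition~\ref{prop:osem_and_csem}, which established that evaluating the open semantics at a closed substitution coincides with the abstract semantics applied to the substituted term. The only minor subtlety worth being explicit about is that quantifying over all $\psi \colon \Var \to A^\omega$ exhausts the domain of the transposed causal functions, so pointwise equality of $\osem{t_1}$ and $\osem{t_2}$ on such $\psi$ is genuinely the same as equality in $\tgamma$.
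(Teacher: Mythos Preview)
Your proposal is correct and matches the paper's approach: the paper itself presents Theorem~\ref{th:prooftechnique} as ``a simple consequence'' of Proposition~\ref{prop:osem_and_csem} without further proof, and your argument spells out exactly that consequence. The only point you make explicit that the paper leaves implicit is the bijectivity of the transposition, which is indeed needed for the converse direction.
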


 This is one of the main results of this paper: $T\psi(t_1)$ and $T\psi(t_2)$ are expressions in $TA^{\omega}$ built from symbols of the signature $\Sigma$ and streams $\alpha_1, \dots \alpha_n \in A^\omega$. By checking $t_1 \obisim t_2$ one can prove that the two expressions are equivalent for all possible streams $\alpha_1, \dots \alpha_n \in A^\omega$.

\begin{example}
By using the Mealy machine $m_\lambda$ in Example~\ref{ex:alt-ex-open}, the relation
\begin{align*}
 R = &\{(\alt(\vx, \alt(\vy,\vz)),\alt(\vx, \alt(\vw,\vz))), (\alt(\alt(\vz,\vy),\vx),\alt(\alt(\vz,\vw),\vx))\}
\end{align*}
is easily verified to be a bisimulation (Definition \ref{def:bs:mmachine}). In particular this shows that $\osem{(\alt(\vx, \alt(\vy,\vz))} = \osem{ \alt(\vx, \alt(\vw,\vz))}$. By Theorem~\ref{th:prooftechnique}, we have that $\asem{T\psi(\alt(\vx, \alt(\vy,\vz))} = \asem{T\psi(\alt(\vx, \alt(\vw,\vz)))}$ for all $\psi\colon  \Var \to A^\omega$, i.e.,
$$ \alt(\alpha_1, \alt(\alpha_2,\alpha_3)) \bisim \alt(\alpha_1, \alt(\alpha_4,\alpha_3)) \text{ for all } \alpha_1,\alpha_2,\alpha_3, \alpha_4 \in A^\omega\text{.}$$
The above law can be understood as an equivalence of program schemes stating that one can always replace the stream $\alpha_2$ by an arbitrary stream $\alpha_4$, without changing the result.
\end{example}

\begin{example}\label{ex:acommutativity}
By using the Mealy machines in Example \ref{ex:streamcalculus-open}, it is easy to check that both $\{ ((\vx \oplus \vy)  \oplus \vz , \vx \oplus ( \vy  \oplus \vz )) \}$ and $\{ ( \vx \oplus \vy  ,  \vy \oplus \vx) \}$ are bisimulations. This means that $\osem{(\vx \oplus \vy)  \oplus \vz} = \osem{ \vx \oplus ( \vy  \oplus \vz )}$ and $\osem{\vx \oplus \vy}= \osem{  \vy \oplus \vx}$. By Theorem \ref{th:prooftechnique} we obtain associativity and commutativity of $\oplus$:
$$(\alpha_1 \oplus \alpha_2)  \oplus \alpha_3 \sim \alpha_1 \oplus ( \alpha_2  \oplus \alpha_3 ) \text{ and }   \alpha_1 \oplus \alpha_2  \sim  \alpha_2 \oplus \alpha_1  \text{ for all } \alpha_1,\alpha_2,\alpha_3 \in A^\omega \text{.}$$
\end{example}

\begin{example}\label{ex:prefixsum}
In a similar way, one can check that  $\{((a+b).(\vx \oplus \vy), a.\vx \oplus b.\vy) \mid a,b \in \R \}$ is a  bisimulation. This means that $\osem{(a+b).(\vx \oplus \vy)} = \osem{ a.\vx \oplus b.\vy}$ for all $a,b \in \R $ and, using again Theorem \ref{th:prooftechnique}, we conclude that 
$(a+b).(\alpha_1 \oplus \alpha_2)  \sim a.\alpha_1 \oplus b.\alpha_2  \text{ for all } \alpha_1,\alpha_2 \in A^\omega  \text{.}$
\end{example}

Often, equivalence of open terms is defined by relying on the equivalence of closed terms: two open terms are equivalent iff under all possible closed substitutions, the resulting closed terms are equivalent. For $\obisim$, this property does not follow immediately  by Theorem \ref{th:prooftechnique}, where variables range over streams, i.e., elements of the final coalgebra.  One could assume that all the behaviours of the final coalgebra are denoted by some term, however this restriction would rule out most of the languages we are aware of: in particular, the stream calculus that can express only the so-called rational streams \cite{Rutten05}.

The following theorem, which is the second main result of this paper, only requires that the stream GSOS specification is sufficiently expressive to describe arbitrary finite prefixes. We use that any closed substitution $\phi \colon \Var \to \closedTerms$ defines  $\phi^\dagger \colon \openTerms \to \closedTerms$ (see Section~\ref{sec:spec}).
\begin{theorem}\label{th:closed-terms}
	Suppose $\lambda \colon  \Sigma (A \times {-}) \Rightarrow A \times \tsigma$ is a monadic abstract stream GSOS specification
	which contains, for each $a \in A$, the prefix operator $a.-$ as specified in~\eqref{rule:prefix} in
	Section~\ref{sec:from_full_to_monadic}. Further, assume $\closedTerms$ is non-empty. 
	
	Let $\csem{-}$ and $\osem{-}$ be the closed and open semantics respectively of $\lambda$. 
	Then for all $t_1, t_2 \in T\Var$: 
 $\osem{t_1} = \osem{t_2}$  iff $\csem{\phi^\dagger(t_1)} = \csem{\phi^\dagger(t_2)}$ for all $\phi\colon  \Var \to \closedTerms$.
\end{theorem}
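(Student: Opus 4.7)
The plan is to reduce the statement to Theorem~\ref{th:prooftechnique}, using as a bridge the identity $\csem{\phi^\dagger(t)} = \asem{T(\csem{-} \circ \phi)(t)}$ for every closed substitution $\phi \colon \Var \to \closedTerms$ and every open term $t$. This identity follows by uniqueness of $\Sigma$-algebra morphisms from $(T\Var, \kappa_\Var)$ into $(A^\omega, \sigma)$: the left-hand side is such a morphism because $\csem{-}$ is an algebra morphism (the standard bialgebra fact), the right-hand side is an algebra morphism by the argument already used in the proof of Proposition~\ref{prop:osem_and_csem}, and both extend the function $\csem{-} \circ \phi \colon \Var \to A^\omega$. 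Combining this with Proposition~\ref{prop:osem_and_csem}, we get the convenient reformulation $\csem{\phi^\dagger(t)} = \osem{t}(\csem{-}\circ\phi)$.

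The forward direction is then immediate: if $\osem{t_1} = \osem{t_2}$, Theorem~\ref{th:prooftechnique} gives $\asem{T\psi(t_1)} = \asem{T\psi(t_2)}$ for every $\psi\colon\Var\to A^\omega$; instantiating $\psi = \csem{-} \circ \phi$ and applying the bridging identity yields $\csem{\phi^\dagger(t_1)} = \csem{\phi^\dagger(t_2)}$.

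For the converse, assume the closed-term equality for every $\phi$, and fix an arbitrary $\psi\colon \Var \to A^\omega$. By Theorem~\ref{th:prooftechnique} and Proposition~\ref{prop:osem_and_csem}, it suffices to show $\osem{t_1}(\psi) = \osem{t_2}(\psi)$. The key idea is to approximate $\psi$ by closed substitutions to arbitrary finite depth. Fix some $t_0 \in \closedTerms$, which exists by the nonemptiness hypothesis, and for each $n \in \N$ define $\phi_n \colon \Var \to \closedTerms$ by $\phi_n(\vx) = a_0.a_1.\cdots a_{n-1}.t_0$, where $a_i = \psi(\vx)(i)$. A direct unfolding of rule~\eqref{rule:prefix} shows that $\csem{\phi_n(\vx)}$ agrees with $\psi(\vx)$ on the first $n$ symbols for every $\vx \in \Var$.

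By the hypothesis and the bridging identity, $\osem{t_1}(\csem{-}\circ\phi_n) = \osem{t_2}(\csem{-}\circ\phi_n)$. Now $\osem{t_i} \in \tgamma$ is a \emph{causal} map, viewed as a function $(A^\Var)^\omega \to A^\omega$ under the canonical isomorphism with $(A^\omega)^\Var$; under this isomorphism, the componentwise agreement of $\csem{-}\circ\phi_n$ and $\psi$ on the first $n$ symbols corresponds exactly to agreement of the corresponding streams in $(A^\Var)^\omega$ on their first $n$ positions. Causality therefore yields that $\osem{t_i}(\csem{-}\circ\phi_n)$ and $\osem{t_i}(\psi)$ agree on the first $n$ symbols for $i\in\{1,2\}$. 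Combining these observations, $\osem{t_1}(\psi)$ and $\osem{t_2}(\psi)$ coincide on the first $n$ symbols for every $n$, hence they are equal. The main obstacle is marshalling the causality argument cleanly; it is exactly at the construction of the $\phi_n$ that both hypotheses -- the presence of the prefix operators and the nonemptiness of $\closedTerms$ -- are needed to manufacture closed approximants to an arbitrary variable assignment.
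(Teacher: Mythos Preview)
Your proof is correct and follows essentially the same approach as the paper: the same bridging identity $\csem{\phi^\dagger(t)} = \asem{T(\csem{-}\circ\phi)(t)}$ (which the paper proves via the diagram $\asem{-}\circ T\csem{-}\circ T\phi = \csem{-}\circ\mu_\emptyset\circ T\phi = \csem{-}\circ\phi^\dagger$, while you phrase it as uniqueness of algebra morphisms extending $\csem{-}\circ\phi$), the same forward direction via Theorem~\ref{th:prooftechnique} with $\psi = \csem{-}\circ\phi$, and the same converse via the finite approximants $\phi_n$ and causality of $\osem{t_i}$.
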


\begin{proof}
	First, we prove that for any $\phi \colon \Var \to \closedTerms$, the following diagram commutes:
	\begin{equation}\label{eq:csem-asem-sub}
	\xymatrix{
		T\Var \ar[r]^{T \phi} \ar[dr]_{\phi^\dagger}
			& TT\emptyset \ar[r]^{T \csem{-}} \ar[d]^{\mu_\emptyset}
			& T A^\omega \ar[d]^{\asem{-}} \\
		& 	T\emptyset \ar[r]_{\csem{-}}
		& A^\omega
	}
	\end{equation}
	Indeed, commutativity of the left-hand side is a general property of monads, commutativity of the right-hand side a 
	general property of algebras induced by distributive laws. 
	
	Now, for the implication from left to right, 
	suppose $\osem{t_1} = \osem{t_2}$, and let $\phi \colon \Var \to \closedTerms$. Then
	$$
		\csem{\phi^\dagger(t_1)} = \asem{T(\csem{-} \circ \phi)(t_1)} = \asem{T(\csem{-}\circ \phi)(t_2)} = \csem{\phi^\dagger(t_2)} \,.
	$$
	The first and last equality hold by~\eqref{eq:csem-asem-sub}. The middle equality holds
	by Theorem~\ref{th:prooftechnique}, instantiated to $\psi = \csem{-} \circ \phi \colon \Var \rightarrow A^\omega$.
	
	For the converse, suppose 
	\begin{equation}\label{eq:assump-fdagger}
		\csem{\phi^\dagger(t_1)} = \csem{\phi^\dagger(t_2)}
	\end{equation} for all $\phi: \Var \to \closedTerms$,
	and let $\psi \colon \Var \rightarrow A^\omega$. We need to prove that $\osem{t_1}(\psi) = \osem{t_2}(\psi)$.
	Let $n \geq 0$, and define $\phi_n \colon \Var \rightarrow T\emptyset$ by
	$$
	\phi_n(\vx) = \psi(\vx)(0).\psi(\vx)(1). (\ldots) .\psi(\vx)(n-1).t
	$$
	for some $t \in T\emptyset$ (assumed to exist); this is indeed a term in $T\emptyset$, since
	we assumed the presence of the prefix operator.
	By definition of the prefix operator, it follows that
	\begin{equation}\label{eq:proof-approx}
		\csem{\phi_n(\vx)} {\upharpoonright_n} = \psi(\vx) {\upharpoonright_n}
	\end{equation}
	i.e., the first $n$ elements of $\csem{\phi_n(\vx)}$ coincide with the first $n$ elements of $\psi(\vx)$.
	For all $t \in \openTerms$, since $\osem{t}$ is causal, we have
	\begin{equation}\label{eq:t-eq-upton}
		\osem{t}(\csem{-} \circ \phi_n) {\upharpoonright_n} = \osem{t}(\psi){\upharpoonright_n} \,.
	\end{equation}
	Further, we have, for any $t \in \openTerms$,
	\begin{equation}\label{eq:some-osem-csem-stuff}
		\osem{t}(\csem{-} \circ \phi_n) 
			 = \asem{T(\csem{-} \circ \phi_n)(t)} 
			 = \csem{\phi_n^\dagger(t)}
	\end{equation}
	by Proposition~\ref{prop:osem_and_csem} and~\eqref{eq:csem-asem-sub} respectively.
	We obtain 
	\begin{align*}
	  \osem{t_1}(\psi){\upharpoonright_n} 
	  = \osem{t_1}(\csem{-} \circ \phi_n) {\upharpoonright_n}
	  &= \csem{\phi_n^\dagger(t_1)}{\upharpoonright_n} \\
	  &= \csem{\phi_n^\dagger(t_2)}{\upharpoonright_n}
	  = \osem{t_2}(\csem{-} \circ \phi_n) {\upharpoonright_n}
	  =\osem{t_2}(\psi){\upharpoonright_n} 
	\end{align*}
	by~\eqref{eq:t-eq-upton},~\eqref{eq:some-osem-csem-stuff} and assumption~\eqref{eq:assump-fdagger}.
	Since this works for all $n$, we conclude $\osem{t_1}(\psi) = \osem{t_2}(\psi)$ as desired.
\end{proof}

\begin{example}
The specification in Figure \ref{fig:rules} does not include the prefix operator, therefore it does not meet the assumptions of Theorem \ref{th:closed-terms}.
Instead, the monadic GSOS specification in Figure \ref{fig:rulestreamcalculus}(b) contains the prefix. Recall from Example \ref{ex:prefixsum} that $(a+b).(\vx \oplus \vy) \obisim a.\vx \oplus b.\vy$. Using Theorem \ref{th:closed-terms}, we can conclude that
$(a+b).(t_1 \oplus t_2) \bisim a.t_1 \oplus b.t_2 \text{ for all } t_1,t_2\in \closedTerms \text{.}$
\end{example}


\section{Bisimulation up-to substitutions}
\label{sec:compatibility}


In the previous section, we have shown that bisimulations on Mealy machines can be used to prove equivalences of open terms specified in the stream GSOS format. In this section we introduce \emph{up-to substitutions}, an enhancement of  the bisimulation proof method that allows to deal with smaller, often finite, relations. We also show that up-to substitutions can be effectively combined with other well-known up-to techniques such as up-to bisimilarity and up-to context.
We note that our results here strongly rely on the specification being monadic:
in \ref{app:is_not_compatible} we show that they fail in general for non-monadic specifications.

Intuitively, in a bisimulation up-to substitutions ${\relR}$, the states reached by a pair of states do not need to be related by ${\relR}$, but rather by 
${\theta(\relR})$, for some substitution $\theta \colon \Var \to \openTerms$.
We give a concrete example. Suppose we extend the stream calculus of Example~\ref{ex:stream_calculus} with
the operators $f$ and $g$ defined by the rules in Figure~\ref{fig:f_and_g}.
In Figure~\ref{fig:f_and_g:ext}, we have the pointwise extensions of these new operators. 
It should be clear that $f(\vx) \bisim g(\vx)$. 
To try to formally prove $ f(\vx) \bisim g(\vx)$, consider the relation 
${\relR} = \{(f(\vx), g(\vx))\}$.
For all $\subs \colon \Var \to A$, there are transitions
$f(\vx) \trans[\subs | \subs(\vx)] f(\vx \oplus \vx)$ and $g(\vx) \trans[\subs | \subs(\vx)] g(\vx \oplus \vx)$. 
The outputs of both transitions coincide but the reached states are not in ${\relR}$, 
hence ${\relR}$ is not a bisimulation. However it is a bisimulation up-to substitutions, since the arriving states are related by ${\theta(\relR})$, for some substitution $\theta$ mapping $\vx$ to $\vx \oplus  \vx$. In fact, without this technique, any bisimulation relating $f(\vx)$ and $g(\vx)$ should contain infinitely many pairs.

\begin{figure}[t]
  \centering
  \makebox[0pt][c]{\parbox{1\textwidth}{%
      \begin{minipage}[b]{0.49\hsize}\centering
      \begin{gather*}
       \dedrule{x \trans[a] x'}{f(x) \trans[a] f(x'\oplus x')}
       \     
       \dedrule{x \trans[a] x'}{g(x) \trans[a] g(x'\oplus x')}
      \end{gather*}
      
      \caption{$f$ and $g$, operators over streams}
      \label{fig:f_and_g}
    \end{minipage}
    \begin{minipage}[b]{0.49\hsize}\centering
    \begin{gather*}
    \dedrule{x \trans[\subs | a] x'}{f(x) \trans[\subs |a] f(x' \oplus x')}
    \ 
    \dedrule{x \trans[\subs |a] x'}{g(x) \trans[\subs |a] g(x'\oplus x')}
    \end{gather*}

    \caption{Pointwise extensions of $f$ and $g$.}
    \label{fig:f_and_g:ext}
    \end{minipage}
}}
\end{figure}

In order to prove the soundness of this technique, as well as the fact that it can be safely combined with other known up-to techniques, we need to recall some notions of the theory of up-to techniques in lattices from~\cite{PS11}.
Given a Mealy machine $(X,m)$, we consider the lattice $(\mathcal{P}(X\times X), \subseteq)$ of relations over $X$, ordered by inclusion, and the monotone map $\b \colon \mathcal{P}(X\times X) \to \mathcal{P}(X\times X)$ defined for all $ {\relR} \subseteq X\times X$ as
\begin{equation}
\label{eq:def:b}
\b(\relR) = \{(s,t) \in X\times X \mid  \forall b \in B, o_s(b)=o_t(b) \text{ and } d_s(b) \relR d_t(b)\}\text{.}
\end{equation}
It is easy to see that post fixed points of $\b$, i.e., relations $\relR$ such that ${\relR} \subseteq \b(\relR)$, are exactly bisimulations for Mealy machines (Definition~\ref{def:bs:mmachine}) and that its greatest fixed point is $\sim$. 

For a monotone map $\f \colon \mathcal{P}(X\times X) \to \mathcal{P}(X\times X)$, a \emph{bisimulation up-to $\f$} is a relation ${\relR}$ such that ${\relR} \subseteq \b \f(\relR)$. We say that $\f$ is \emph{$\b$-compatible} if ${\f \b(\relR)} \subseteq \b \f (\relR)$ for all relations ${\relR}$. Two results in~\cite{PS11} are pivotal for us: first, if $\f$ is compatible and ${\relR} \subseteq \b\f(\relR)$ then ${\relR} \subseteq {\sim}$; second if $\f_1$ and $\f_2$ are $\b$-compatible then $\f_1\circ \f_2$ is $\b$-compatible. The first result informs us that bisimilarity can be proved by means of bisimulations up-to $\f$, whenever $\f$ is compatible. The second result states that compatible up-to techniques can be composed. 

We now consider up-to techniques for the Mealy machine over open terms $(\openTerms,m_\lambda)$ as defined in Section \ref{sec:mm_over_ot}. Recall that bisimilarity over this machine is called open bisimilarity, denoted by $\obisim$.
Up-to substitutions is the monotone function $(-)_{\forall\theta} \colon \mathcal{P}( \openTerms\times \openTerms) \to \mathcal{P}( \openTerms\times \openTerms)$ mapping a relation ${\relR}\subseteq \openTerms\times \openTerms$ to
\begin{equation*}
(\relR)_{\forall\theta} = \{(\theta(t_1),\theta(t_2)) \mid \theta \colon \Var \to \openTerms \text{ and } t_1 \relR t_2 \}\text{.}
\end{equation*}
Similarly, we define up-to context as the monotone function mapping every relation ${\relR} \subseteq \openTerms\times \openTerms$ to its contextual closure $\mathcal C(\relR)$ and up-to (open) bisimilarity as the function mapping $\relR$ to ${\obisim} \relR {\obisim} = \{(t_1,t_2) \mid \exists t_1',t_2' \text{ s.t. } t_1 \obisim t_1' \relR t_2' \obisim t_2 \}$.

Compatibility with $\b$ of up-to context and up-to bisimilarity hold immediately by the results in \cite{BonchiPPR17}. For up-to substitutions, we will next prove compatibility (Theorem~\ref{th:upto}). First,
we prove a technical lemma, which states a relation between the derivative of a term after a substitution for a particular input 
and the derivative of the original term w.r.t.\ an input constructed based on the substitution and the first input. 
We use the following notation: given $t \in \openTerms$, $t(\vx_1, \ldots, \vx_n)$ 
denotes that $\Var(t) \subseteq \{\vx_1, \ldots, \vx_n\}$ and $t(t_1, \ldots, t_n)$ the term obtained from
$t$ by replacing $\vx_i$ by $t_i$.

\begin{lemma}
\label{lemma:aux}
 Let $m : \openTerms \to (A \times \openTerms)^{A^{\Var}}$ be a Mealy machine induced by
 a monadic Mealy GSOS specification that is the pointwise extension of a monadic stream GSOS specification.
 Consider a substitution $\theta \colon \Var \to \openTerms$ and an input $\subs \colon \Var \to A$.
 Then there exists a subsitution $\theta' \colon \Var \to \openTerms$ and input $\subs \colon \Var \to A$ such that $o_{\theta(t)}(\subs) = o_t(\subs')$ and $d_{\theta(t)}(\subs) = \theta'(d_{t}(\subs'))$ for all terms $t \in \openTerms$.
 \footnote{Recall $m(t) = \tuple{o_t, d_t} : A^{\Var} \to A \times \openTerms$ where $o_t(\subs)$ and $d_t(\subs)$ are respectively the output and the derivative of $t \in \openTerms$ for the input $\subs \in A^{\Var}$.}
\end{lemma}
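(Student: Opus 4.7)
The plan is to define $\subs'$ and $\theta'$ explicitly from $\theta$ and $\subs$, and then prove both equalities by structural induction on $t$. Specifically, I would set $\subs'(\vx) := o_{\theta(\vx)}(\subs)$ and $\theta'(\vx) := d_{\theta(\vx)}(\subs)$ for each $\vx \in \Var$. Note that these depend only on $\theta$ and $\subs$, and crucially not on $t$, which is exactly the uniformity needed in order to use the lemma later when establishing compatibility of up-to substitutions.

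For the base case $t = \vx \in \Var$, rule~\eqref{eq:proj-mm} gives $o_{\vx}(\subs') = \subs'(\vx)$ and $d_{\vx}(\subs') = \vx$, so the definitions immediately yield $o_{\theta(\vx)}(\subs) = \subs'(\vx) = o_{\vx}(\subs')$ and $d_{\theta(\vx)}(\subs) = \theta'(\vx) = \theta'(d_{\vx}(\subs'))$. For the inductive step, $t = f(t_1, \ldots, t_n)$ and $\theta(t) = f(\theta(t_1), \ldots, \theta(t_n))$. By Definition~\ref{def:stream_gsos_rule} and the construction~\eqref{rule:toMealy}, for every tuple of outputs there is a unique monadic Mealy rule for $f$; the induction hypothesis gives $o_{\theta(t_i)}(\subs) = o_{t_i}(\subs')$, so the same rule
\[
\dedrule{x_1 \trans[a_1] x'_1 \quad \cdots \quad x_n \trans[a_n] x'_n}{f(x_1, \ldots, x_n) \trans[a] u(x'_1, \ldots, x'_n)}
\]
fires for $\theta(t)$ at input $\subs$ and for $t$ at input $\subs'$, producing equal outputs. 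The derivative of $\theta(t)$ at $\subs$ is therefore $u(d_{\theta(t_1)}(\subs), \ldots, d_{\theta(t_n)}(\subs))$, which by the induction hypothesis equals $u(\theta'(d_{t_1}(\subs')), \ldots, \theta'(d_{t_n}(\subs')))$, and since substitution commutes with term formation this is $\theta'(u(d_{t_1}(\subs'), \ldots, d_{t_n}(\subs'))) = \theta'(d_t(\subs'))$.

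The real content, and the only potential obstacle, is conceptual: the inductive step works only because the specification is monadic. If the conclusion of the rule for $f$ were allowed to contain the unprimed variables $x_i$, reconstructing the derivative of $\theta(t)$ from the derivatives of the $\theta(t_i)$ would also require access to $\theta(t_i)$ itself, not merely to $d_{\theta(t_i)}(\subs) = \theta'(d_{t_i}(\subs'))$, so no single $\theta'$ independent of $t$ could do the job. This is precisely why monadicity is assumed, and it matches the counterexample in Appendix~\ref{app:is_not_compatible}. A fully abstract rendering would go through the distributive law $\rho$ obtained from $\bar{\lambda}$ via~\eqref{eq:lambdabar} and the monad law~\eqref{eq:dl-monad}, but the structural induction above delivers the required equalities directly.
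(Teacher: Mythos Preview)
Your proposal is correct and follows essentially the same approach as the paper: the explicit definitions $\subs'(\vx) = o_{\theta(\vx)}(\subs)$ and $\theta'(\vx) = d_{\theta(\vx)}(\subs)$ are exactly those used there, and the structural induction with the same base and inductive cases matches the paper's proof step for step. Your added remark on why monadicity is essential is accurate and goes slightly beyond what the paper spells out in the proof itself.
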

\begin{proof}
 We prove this by induction on the structure of the term $t$,
 with $\theta' \colon \Var \to \openTerms$ and $\subs' \colon \Var \to A$ given by
 \begin{enumerate}[(i)]
   \item $\theta'(\vx) = d_{\theta(\vx)}(\subs)$ for all $\vx \in \Var$, and
   \item $\subs'(\vx) = o_{\theta(\vx)}(\subs)$ for all $\vx \in \Var$.
 \end{enumerate}
 First we show that the base case holds, i.e. $t = \vx$ for some $\vx \in \Var$. 
 First observe that $o_{\vx}(\subs) = \subs(\vx)$ and $d_{\vx}(\subs) = \vx$
 for every input $\subs \colon \Var \rightarrow A$,
 as on variables the Mealy machine $m$ is fully defined by the Mealy machine $c$.
 From this it follows that $o_{\theta(\vx)}(\subs) = \subs'(\vx) = o_{\vx}(\subs')$,
 and that $d_{\theta(\vx)}(\subs) = \theta'(\vx) = \theta'(d_{\vx}(\subs'))$.
 
 Next we prove the inductive case.
 Consider the case $t = f(t_1, \ldots, t_n)$ where each $t_i$ satisfies the inductive hypothesis.
 For all $a_1,\ldots,a_n \in A$ and $\subs \in A^\Var$ there is a rule in the Mealy GSOS specification with the following shape:
 \begin{equation}
 \label{eq:rule_in_the_proof}
 \dedrule{\vx_1 \trans[\subs|a_1] \vx'_1 \quad \ldots \quad \vx_n \trans[\subs|a_n] \vx'_n}
 {f(\vx_1, \ldots,\vx_n) \trans[\subs|a^f_{a_1, \ldots, a_n}] t^f_{a_1, \ldots, a_n} (\vx'_1, \ldots,\vx'_n)} 
 \end{equation}
 i.e. the output in the conclusion for $f$ only depends of the outputs of the premises.
 First we show that $o_{\theta(f(t_1,\ldots,t_n))}(\subs) = o_{f(t_1,\ldots,t_n)}(\subs')$ using the following calculation:
 \begin{align*}
  & o_{\theta(f(t_1,\ldots,t_n))}(\subs) \\
  & = \{\text{by the def. of substitution on terms}\}\\
  & o_{f(\theta(t_1),\ldots,\theta(t_n))}(\subs) \\
  & = \{\text{by the appropriate instantiation of rule (\ref{eq:rule_in_the_proof})}\}\\
  & a^f_{o_{\theta(t_1)}(\subs),\ldots,o_{\theta(t_n)}(\subs)} \\
  & = \{\text{by the induction hypothesis}\} \\
  & a^f_{o_{t_1}(\subs'),\ldots,o_{t_n}(\subs')} \\
  & = \{\text{by the appropriate instantiation of rule (\ref{eq:rule_in_the_proof})}\}\\
  & o_{f(t_1,\ldots,t_n)}(\subs')
 \end{align*}
 Finally, $d_{\theta(f(t_1, \ldots, t_n))}(\subs) = \theta'(d_{f(t_1, \ldots, t_n)}(\subs'))$ follows from another calculation:
 \begin{align*}
  & d_{\theta(f(t_1,\ldots, t_n))}(\subs) \\
  & = \{\text{by def. of substitution on terms}\} \\
  & d_{f(\theta(t_1),\ldots, \theta(t_n))}(\subs) \\
  & = \{\text{by the appropriate instantiation of rule (\ref{eq:rule_in_the_proof})}\} \\  
  & t^f_{o_{\theta(t_1)}(\subs), \ldots, o_{\theta(t_n)}(\subs)}(d_{\theta(t_1)}(\subs), \ldots, d_{\theta(t_n)}(\subs)) \\  
  &  = \{\text{by induction, $o_{\theta(t_i)}(\subs) = o_{t_i}(\subs')$ for $i = 1, \ldots, n$ }\} \\ 
  & t^f_{o_{t_1}(\subs'), \ldots, o_{t_n}(\subs')}(d_{\theta(t_1)}(\subs), \ldots, d_{\theta(t_n)}(\subs)) \\
  &  = \{\text{by induction, $d_{\theta(t_i)}(\subs) = \theta'(d_{t_i}(\subs'))$ for $i = 1, \ldots, n$ }\} \\ 
  & t^f_{o_{t_1}(\subs'), \ldots, o_{t_n}(\subs')}(\theta'(d_{t_1}(\subs')), \ldots, \theta'(d_{t_n}(\subs')))\\
  & = \{\text{by def. of substitution on terms}\} \\
  & \theta'(t^f_{o_{t_1}(\subs'), \ldots, o_{t_n}(\subs')}(d_{t_1}(\subs'), \ldots, d_{t_n}(\subs')))\\
  & = \{\text{by the appropriate instantiation of rule (\ref{eq:rule_in_the_proof})}\} \\
  & \theta'(d_{f(t_1, \ldots, t_n)}(\subs'))
 \end{align*} 
\end{proof}

\begin{theorem}
 \label{th:upto}
 The function $(-)_{\forall\theta}$ is $\b$-compatible.
\end{theorem}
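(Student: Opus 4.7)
The plan is to unfold the definitions of $\b$-compatibility and of $(-)_{\forall\theta}$ and reduce the statement directly to Lemma~\ref{lemma:aux}. Concretely, I want to show that for every relation $\relR \subseteq \openTerms \times \openTerms$, the inclusion $(\b(\relR))_{\forall\theta} \subseteq \b((\relR)_{\forall\theta})$ holds. So I fix a pair $(s,t) \in (\b(\relR))_{\forall\theta}$: by definition of $(-)_{\forall\theta}$, there is a substitution $\theta \colon \Var \to \openTerms$ and terms $t_1, t_2 \in \openTerms$ with $(t_1, t_2) \in \b(\relR)$, $s = \theta(t_1)$ and $t = \theta(t_2)$. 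The goal is to verify that for every input $\subs \colon \Var \to A$, the outputs $o_{\theta(t_1)}(\subs)$ and $o_{\theta(t_2)}(\subs)$ coincide, and the derivatives form a pair in $(\relR)_{\forall\theta}$.

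Here Lemma~\ref{lemma:aux} does essentially all the work. The crucial point to highlight is that the substitution $\theta'$ and input $\subs'$ produced by the lemma depend only on $\theta$ and $\subs$, not on the term being analysed: inspecting the proof, $\theta'(\vx) = d_{\theta(\vx)}(\subs)$ and $\subs'(\vx) = o_{\theta(\vx)}(\subs)$ for every $\vx \in \Var$. Hence I may apply the lemma simultaneously to $t_1$ and $t_2$ with the same $\theta'$ and $\subs'$, obtaining $o_{\theta(t_i)}(\subs) = o_{t_i}(\subs')$ and $d_{\theta(t_i)}(\subs) = \theta'(d_{t_i}(\subs'))$ for $i = 1,2$.

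From $(t_1, t_2) \in \b(\relR)$ applied to input $\subs'$ I get $o_{t_1}(\subs') = o_{t_2}(\subs')$ and $(d_{t_1}(\subs'), d_{t_2}(\subs')) \in \relR$. Combining these with the identities above yields $o_{\theta(t_1)}(\subs) = o_{\theta(t_2)}(\subs)$ and $(d_{\theta(t_1)}(\subs), d_{\theta(t_2)}(\subs)) = (\theta'(d_{t_1}(\subs')), \theta'(d_{t_2}(\subs')))$, which is in $(\relR)_{\forall\theta}$ by the very definition of up-to substitutions (using $\theta'$ as the witnessing substitution). Since this holds for arbitrary $\subs$, we conclude $(\theta(t_1), \theta(t_2)) \in \b((\relR)_{\forall\theta})$, finishing the proof.

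The main obstacle was really packaged into Lemma~\ref{lemma:aux}: once one knows that pushing an arbitrary substitution $\theta$ inside one step of the Mealy dynamics amounts to transporting the step along a \emph{uniform} pair $(\theta',\subs')$, the compatibility proof is a one-line diagram chase. This uniformity is exactly what fails for non-monadic specifications (cf.~\ref{app:is_not_compatible}), which explains why the result relies on the pointwise-extension structure of $m_\lambda$.
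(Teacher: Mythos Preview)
Your proof is correct and follows essentially the same approach as the paper's own proof: both reduce compatibility to Lemma~\ref{lemma:aux}, using that the pair $(\theta',\subs')$ produced there is uniform in the term, then conclude by applying the hypothesis $(t_1,t_2)\in\b(\relR)$ at the transported input $\subs'$. Your explicit emphasis on the uniformity of $(\theta',\subs')$ is a welcome clarification of a point the paper leaves implicit in the quantifier ``for all $t\in\openTerms$''.
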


\begin{proof}
  For convencience we will write $\f = (-)_{\forall\theta}$.
  Let $R \subseteq \openTerms \times \openTerms$ be a relation,
  and let $(\theta(s), \theta(t)) \in \f(\b(R))$ for some substitution $\theta \colon \Var \to \openTerms$ and pair of terms $(s, t) \in \b(R)$.
  In order to show that $(\theta(s), \theta(t)) \in \b(\f(R))$ we need to prove that
  \begin{enumerate}[(i)]
      \item $o_{\theta(s)}(\subs) = o_{\theta(t)}(\subs)$, and
      \item $(d_{\theta(s)}(\subs), d_{\theta(t)}(\subs)) \in \f(R)$,
  \end{enumerate}
  for every input $\subs \colon \Var \rightarrow A$.

  Now fix $\subs$.
  By Lemma~\ref{lemma:aux} there exists a $\subs' \colon \Var \to A$,
  such that $o_{\theta(t)}(\subs) = o_t(\subs')$ for all $t \in \openTerms$.
  Then, as $(s, t) \in \b(R)$, we have
  \begin{equation*}
    o_{\theta(s)}(\subs) = o_s(\subs') = o_t(\subs') = o_{\theta(t)}(\subs).
  \end{equation*}
  Next we need to show that $(d_{\theta(s)}(\subs), d_{\theta(t)}(\subs)) \in \f(R)$,
  i.e.\ that there exists a substitution $\theta' \colon \Var \to \openTerms$
  and terms $s', t' \in \openTerms$ such that $\theta'(s') = d_{\theta(s)}(\subs)$,
  $\theta'(t') = d_{\theta(t)}(\subs)$, and $(s', t') \in R$.
  By Lemma~\ref{lemma:aux} there exist $\theta' \colon \Var \to \openTerms$ and $\subs' \colon \Var \to A$
  such that $d_{\theta(t)}(\subs) = \theta'(d_t(\subs'))$ for all $t \in \openTerms$,
  and thus
  \begin{equation*}
    (d_{\theta(s)}(\subs), d_{\theta(t)}(\subs)) = (\theta'(d_s(\subs')), \theta'(d_t(\subs'))).
  \end{equation*}
  By the assumption that $(s, t) \in \b(R)$, we get that $(d_s(\subs'), d_t(\subs')) \in R$.
  So we can conclude that $(d_{\theta(s)}(\subs), d_{\theta(t)}(\subs)) \in \f(R)$.
\end{proof}

As a consequence of the above theorem and the results in \cite{PS11}, up-to substitutions can be used in combination with up-to bisimilarity and up-to context (as well as any another compatible up-to technique) to prove open bisimilarity. We will show this in the next, concluding example, for which a last remark is useful: the theory in \cite{PS11} also ensures that if $\f$ is  $\b$-compatible, then $\f(\sim) \subseteq {\sim}$. By Theorem \ref{th:upto}, this means that $(\obisim)_{\forall\theta} \subseteq {\obisim}$. The same obviously holds for the contextual closure: $\mathcal C(\obisim)\subseteq {\obisim}$.

\begin{example}\label{ex:uptosubs}
We prove that the convolution product $\otimes$ distributes over the sum $\oplus$, i.e.,
$\alpha_1 \otimes (\alpha_2 \oplus \alpha_3) \sim (\alpha_1 \otimes \alpha_2) \oplus (\alpha_1 \otimes \alpha_3) \text{ for all streams } \alpha_1,\alpha_2,\alpha_3 \in \R^\omega \text{.}$
By Theorems \ref{th:prooftechnique} and \ref{th:upto}, to prove our statement it is enough to show that  
${\relR} = \{(\vx \otimes (\vy \oplus \vz) , (\vx \otimes \vy) \oplus (\vx \otimes \vz))\}$
is a bisimulation up-to  ${\obisim} \mathrel{\mathcal{C}(\obisim (-)_{\forall \theta}\obisim)}{\obisim}$.

By rules in Figure~\ref{fig:rulestreamcalculus}{(d)}, for all $\subs \colon \Var \to \R$,
the transitions of the open terms are


\begin{itemize}
 \item 
 $\vx \otimes (\vy \oplus \vz) \trans[\subs | \subs(\vx) \times (\subs(\vy) + \subs(\vz))] ( \subs(\vx) \otimes (\vy \oplus \vz) ) \oplus ( \vx \otimes (\subs(\vy) + \subs(\vz)).(\vy \oplus \vz) )$
 \item 
 $(\vx \otimes \vy) \oplus (\vx \otimes \vz) \trans[\subs | \subs(\vx) \times \subs(\vy) + \subs(\vx) \times \subs(\vz)]$ \\ 
 $((\subs(\vx) \otimes \vy) \oplus (\vx \otimes \subs(\vy).\vy)) \oplus ( (\subs(\vx) \otimes \vz) \oplus (\vx \otimes \subs(\vz).\vz) )$
\end{itemize}

For the outputs, it is evident that $\subs(\vx) \times (\subs(\vy) + \subs(\vz)) =  \subs(\vx) \times \subs(\vy) + \subs(\vx) \times \subs(\vz)$.
For the arriving states we need a few steps, where for all $\subs \colon \Var \to \R$ and $\vx\in \Var$, $\subs (\vx)$ denotes either a real number (used as a prefix) or a constant of the syntax (Example \ref{ex:stream_calculus}).

 \begin{enumerate}[(a)]

  \item
  \label{st3}
  $\vx \otimes (\subs(\vy).\vy \oplus \subs(\vz).\vz) 
  \mathrel{{\relR}_{\forall \theta}} (\vx \otimes \subs(\vy).\vy ) \oplus (\vx \otimes \subs(\vz).\vz)$.

  \item 
  \label{st2}
  By Example \ref{ex:prefixsum} and $\mathcal C(\obisim)\subseteq \obisim$, we have that:\\
  $\vx \otimes (\subs(\vy) + \subs(\vz)).(\vy \oplus \vz) \obisim \vx \otimes (\subs(\vy).\vy \oplus \subs(\vz)).\vz)$.
  
      \item 
  \label{st4}
  By (\ref{st2}) and (\ref{st3}):\\
  $\vx \otimes (\subs(\vy) + \subs(\vz)).(\vy \oplus \vz) 
  \mathrel{{\obisim}{\relR}_{\forall \theta}{\obisim}}\ \!
  (\vx \otimes \subs(\vy).\vy ) \oplus (\vx \otimes \subs(\vz).\vz)$.

    \item 
  \label{st1}
  $\subs(\vx) \otimes (\vy \oplus \vz) \mathrel{\relR_{\forall \theta}} (\subs(\vx) \otimes \vy) \oplus (\subs(\vx) \otimes \vz)$.

  \item 
  \label{st5}
  Using (\ref{st1}) and (\ref{st4}) with context $\mathcal{C} = \_ \oplus \_$: \\
  $(\subs(\vx) \otimes (\vy \oplus \vz)) \oplus (\vx \otimes (\subs(\vy) + \subs(\vz)).(\vy \oplus \vz)) \\
  \mathrel{\mathcal{C}(\obisim\relR_{\forall \theta}\obisim)}~
  ((\subs(\vx) \otimes \vy) \oplus (\subs(\vx) \oplus \vz)) \oplus ((\vx \otimes \subs(\vy).\vy ) \oplus (\vx \otimes \subs(\vz).\vz))$.
  \item 
  \label{st6}
  By Example \ref{ex:acommutativity} (associativity and commutativity of $\oplus$) and $(\obisim)_{\forall\rho} \subseteq {\obisim}$:\\
  $((\subs(\vx) \otimes \vy) \oplus (\subs(\vx) \oplus \vz)) \oplus ((\vx \otimes \subs(\vy).\vy ) \oplus (\vx \otimes \subs(\vz)).\vz)) \\
  \obisim 
  ((\subs(\vx) \otimes \vy) \oplus (\vx \otimes \subs(\vy).\vy)) \oplus ((\subs(\vx) \otimes \vz) \oplus (\vx \otimes \subs(\vz).\vz))
  $.
 \item \label{st7}
 By (\ref{st5}) and (\ref{st6}):\\
 $(\subs(\vx) \otimes (\vy \oplus \vz) ) \oplus ( \vx \times (\subs(\vy) + \subs(\vz)).(\vy \oplus \vz) ) \\
 \mathrel{{\obisim}\mathcal{C}(\obisim\relR_{\forall \theta}\obisim){\obisim}}
 ((\subs(\vx) \otimes \vy) \oplus (\vx \otimes \subs(\vy).\vy) ) \oplus ( (\subs(\vx) \otimes \vz) \oplus (\vx \otimes \subs(\vz).\vz) )$.
 \end{enumerate}
\end{example}

\section{A ``familiar'' construction of the Mealy Machine of Open Terms}\label{sec:familiar}

In the previous sections, we transformed a monadic abstract GSOS specification into a `Mealy machine
of open terms', using the pointwise extension. 
While it is still unclear if---and how---our approach generalises
beyond streams and Mealy machines to arbitrary coalgebras, in the current section we make the
first steps toward such a generalisation by connecting our
work to recent developments in the theory of distributive laws~\cite{PousR17,BasoldPR17}.
These developments provide a `semantics of distributive laws',
by organising them into a category with a final object, the so-called \emph{companion}. 

The main aim of this section is to connect the concrete construction
that we provided in the current paper to this abstract work on distributive laws 
and the companion. On the one hand, we show that the construction of a
`Mealy machine of open terms' and the association of causal functions to
open terms (Section~\ref{sec:mm_over_ot}), 
is an instance of a more general construction in the theory of distributive laws. 
On the other hand, it provides a concrete case study for some
of the abstract techniques in~\cite{PousR17,BasoldPR17}.
We therefore view the technical development in the current section as a potentially useful
step towards a more general coalgebraic theory of open terms.

\medskip

The connection with~\cite{PousR17,BasoldPR17} can be anticipated in a nutshell.  
Let $[\set,\set]$ be the category of $\set$ endofunctors and natural transformations between them. 
In~\cite{PousR17,BasoldPR17}, a functor $\Fam{F} \colon [\set, \set] \rightarrow [\set, \set]$ is defined
with the property that $\Fam{F}$-coalgebras are in bijective correspondence with distributive laws over a 
fixed functor $F\colon\set \to \set$
(we use the blackboard
font $\Fam{F}$ to distinguish from the $\set$ endofunctors used throughout the paper, denoted by plain capital letters).
When instantiated to the functor for stream systems $FX = A \times X$,  a distributive law $\rho\colon T F \Rightarrow F T$, corresponds via the above construction to an $\Fam{F}$-coalgebra, i.e., a natural transformation  $\widehat\rho \colon T \Rightarrow \Fam{F}(T)$. Interestingly, this natural transformation $\widehat{\rho}$, at the component $\Var$, turns out to be the Mealy machine $m_\lambda$ of open terms defined in \eqref{eq:mm-ot}.
In this way, the Mealy machine $m_\lambda$ is thus constructed via the correspondence between distributive laws
and coalgebras.  

In the remainder of this section we make this more precise. We start with the basic necessary definitions.
\begin{definition}
  Given an endofunctor $F \colon \set \rightarrow \set$,
  the category $\DL{F}$ has pairs $(\Sigma, \lambda)$ as objects,
  where $\Sigma \colon \set \rightarrow \set$ is a functor and $\lambda \colon \Sigma F \nattrans F\Sigma$ is a distributive law,
  and a morphism from $(\Sigma_1, \lambda_1)$ to $(\Sigma_2, \lambda_2)$ is a natural transformation $\theta \colon \Sigma_1 \nattrans \Sigma_2$ such that $\lambda_2 \circ \theta F = F\theta \circ \lambda_1$,
  see \cite{KlinN15,LenisaPW00,PowerW02,Watanabe02}.
  The final object $(C,\gamma \colon CF \nattrans FC)$ in $\DL{F}$ is called the \emph{companion} of $F$,
  if it exists. The companion is thus characterised by the property that 
  for every distributive law $\lambda \colon \Sigma F \nattrans F \Sigma$
  there exists a unique natural transformation $\kappa \colon \Sigma \nattrans F$ making the diagram below commute.
  \begin{equation}
    \xymatrix{
      \Sigma F \ar@{=>}[r]^{\kappa F} \ar@{=>}[d]_{\lambda}
      & CF \ar@{=>}[d]^{\gamma}
      \\ F \Sigma \ar@{=>}[r]_{F \kappa}
      & FC
    }
  \end{equation}
\end{definition}
See~\cite{PousR17,BasoldPR17} for a systematic study of the above notion of companion.   
In the following theorem, $\coalg{\Fam{F}}$ denotes the category of $\Fam{F}$-coalgebras
and coalgebra morphisms between them. 

\begin{theorem}[\cite{BasoldPR17}]\label{thm:iso} There exists a functor $\Fam{F} \colon [\set, \set] \rightarrow [\set, \set]$
which gives a one-to-one correspondence between distributive laws $\lambda \colon \Sigma F \nattrans F\Sigma$
and coalgebras $\widehat \lambda \colon \Sigma \nattrans \Fam{F}(\Sigma)$ (natural in $F$). In particular,
there is an isomorphism of categories:
    \begin{equation}\label{eq:iso2}
    \DL{F} \cong \coalg{\Fam{F}} \,.
    \end{equation}
\end{theorem}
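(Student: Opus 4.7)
The plan is to take $\Fam{F}(H) = \Ran{F}(FH)$, the right Kan extension of $FH$ along $F$. Since $\set$ is complete, this right Kan extension exists for every endofunctor $H$, and by functoriality of Kan extensions this assignment yields a functor $\Fam{F} \colon [\set,\set] \rightarrow [\set,\set]$. The defining universal property of $\Ran{F}(FH)$ then gives a bijection
\[
  [\set,\set](\Sigma F,\, F\Sigma) \;\cong\; [\set,\set](\Sigma,\, \Fam{F}(\Sigma))
\]
natural in $\Sigma$, sending a distributive law $\lambda \colon \Sigma F \nattrans F\Sigma$ to its transpose $\widehat{\lambda} \colon \Sigma \nattrans \Fam{F}(\Sigma)$ along the counit $\counit \colon \Fam{F}(\Sigma) F \nattrans F\Sigma$ of the Kan extension, with inverse sending $\widehat{\lambda}$ to $\counit \circ \widehat{\lambda} F$.

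Next I would promote this object-level bijection to a full isomorphism of categories $\DL{F} \cong \coalg{\Fam{F}}$. Recall that a morphism $\theta \colon (\Sigma_1,\lambda_1) \to (\Sigma_2,\lambda_2)$ in $\DL{F}$ is a natural transformation $\theta \colon \Sigma_1 \nattrans \Sigma_2$ satisfying $\lambda_2 \circ \theta F = F\theta \circ \lambda_1$. Using that $\Fam{F}(\theta) = \Ran{F}(F\theta)$ and transposing both sides of this equation across the Kan extension adjunction, an elementary diagram chase shows that it is equivalent to $\Fam{F}(\theta) \circ \widehat{\lambda_1} = \widehat{\lambda_2} \circ \theta$, which is exactly the condition for $\theta$ to be an $\Fam{F}$-coalgebra morphism. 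Identity and composition on both sides are preserved because the transpose is defined by whiskering with a natural transformation.

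Finally, the naturality in $F$ asserts that given a natural transformation $\alpha \colon F \nattrans F'$, the induced map between distributive laws (via pre/post-composition with $\alpha$) commutes with the correspondences $\lambda \mapsto \widehat{\lambda}$. This follows from the functoriality of the Kan extension in its second argument together with the induced natural transformation $\Ran{F} \nattrans \Ran{F'}$ coming from $\alpha$. The main obstacle is bureaucratic rather than conceptual: one must unpack the end formula $\Fam{F}(H)(X) = \int_Y (F(HY))^{[X,FY]}$ to verify the naturality squares explicitly, and confirm that the abstract construction specialises correctly. As a sanity check, one should verify that for the stream functor $FX = A \times X$ and $H = T$, evaluating the corresponding coalgebra $\widehat{\rho}$ at the component $\Var$ recovers the Mealy machine $m_\lambda$ of open terms constructed in~\eqref{eq:mm-ot}, so that the theorem indeed places the present paper's construction inside the abstract framework of~\cite{PousR17,BasoldPR17}.
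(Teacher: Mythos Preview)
Your proposal is correct and matches the approach that the paper attributes to the cited reference. Note, however, that the paper does not actually prove this theorem: it is stated with the citation \cite{BasoldPR17} and no proof is given in the present paper. Immediately after the statement, the paper remarks that in \cite{BasoldPR17} the familiar is ``characterised abstractly using right Kan extensions,'' and in the proof of the subsequent Theorem~\ref{thm:streamsfamiliar} it records explicitly that $\Fam{F}(G) = \Ran{F}(FG)$. Your construction via the universal property of $\Ran{F}(FH)$, the transpose along the counit, and the routine check that morphisms of distributive laws correspond to $\Fam{F}$-coalgebra morphisms is precisely that argument, so there is nothing to contrast. Your closing sanity check (that $\widehat{\rho}_\Var = m_\lambda$) is not part of this theorem but is exactly the content of the Corollary following Theorem~\ref{thm:streamsfamiliar}.
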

In~\cite{BasoldPR17}, $\Fam{F}$ is called the \emph{familiar} of $F$, and is characterised abstractly using
right Kan extensions. The theorem below provides a concrete characterization for the familiar of the functor $FX = A \times X$ and for the isomorphism of Theorem~\ref{thm:iso}. First, $\Fam{F} \colon [\set, \set] \rightarrow [\set, \set]$  is defined for all $\Sigma \colon \set \rightarrow \set$ as
    \begin{equation}\label{def:familiar}
        (A \times \Sigma\--)^{A^{\--}}\colon \set \to \set \,.
    \end{equation}
Second, given a natural transformation of the form $\theta \colon HF \nattrans FG$, we define $\widehat \theta \colon H \nattrans \Fam{F} (G)$,
for all sets $X$, as 
    \begin{equation}\label{eq:unique}
        \xymatrix@C=1.3cm{
            HX \ar[r]^-{H c_X}
            & H(A \times X)^{A^X} \ar[r]^-{\cs^H_{A^X,A \times X}}
            & (H(A \times X))^{A^X} \ar[r]^-{(\theta_X)^{A^X}}
            & (A \times G X)^{A^X}
        }
    \end{equation}
   where the natural transformation $c \colon \Id \nattrans (A \times \--)^{A^{\--}}$ is given for all sets $X$ and $x\in X$ by
    \begin{equation}\label{eq:c-natural}
        c_X(x)(\subs) = (\subs(x), x).
    \end{equation}

\begin{theorem}   \label{thm:streamsfamiliar}
    The familiar of $FX = A \times X$ is $\Fam{F} \colon [\set, \set] \rightarrow [\set, \set]$ defined as in \eqref{def:familiar}. The assignment $\theta\mapsto \widehat \theta$ defined in \eqref{eq:unique}, restricted to the case where $H=G$, extends to a functor which witnesses the isomorphism in \eqref{eq:iso2}.
\end{theorem}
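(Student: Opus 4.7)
The plan is to verify directly that $\Fam{F}$ has the characterising property of the familiar by exhibiting a bijection between natural transformations $\theta : H F \nattrans F G$ and natural transformations $\widehat\theta : H \nattrans \Fam{F}(G)$, natural in $H$ and $G$. Specialising to $H = G = \Sigma$ yields the correspondence between distributive laws and $\Fam{F}$-coalgebras, and checking compatibility with morphisms of $\DL{F}$ upgrades this to the isomorphism \eqref{eq:iso2}. The forward direction is given by \eqref{eq:unique}; naturality of $\widehat\theta$ is a routine pasting of the naturalities of $c$, of the costrength $\cs^H$, and of $\theta$.

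For the inverse, I would define $\check\phi : HF \nattrans FG$ by
\[
\check\phi_X \;=\; \Bigl( H(A \times X) \xrightarrow{\phi_{A \times X}} (A \times G(A \times X))^{A^{A \times X}} \xrightarrow{\ev_{\pi_1}} A \times G(A \times X) \xrightarrow{\id \times G\pi_2} A \times G X \Bigr),
\]
where $\pi_1,\pi_2$ are the projections from $A\times X$ and $\ev_{\pi_1}$ is evaluation at $\pi_1 \in A^{A\times X}$. Naturality of $\check\phi$ follows from naturality of $\phi$ and of the projections.

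The main computational step is to check that $\theta \mapsto \widehat\theta$ and $\phi \mapsto \check\phi$ are mutually inverse. In both directions the key ingredient is the compatibility $\ev_\varsigma \circ \cs^H = H(\ev_\varsigma)$ between costrength and evaluation (recorded in the appendix on costrength), together with the defining property $c_X(x)(\varsigma) = (\varsigma(x), x)$, which immediately yields $\ev_{\pi_1} \circ c_{A \times X} = \tuple{\pi_1, \id_{A\times X}}$ as well as $\pi_1 \circ \tuple{\varsigma, \id} = \varsigma$ and $\pi_2 \circ \tuple{\varsigma, \id} = \id$. For $\theta \mapsto \widehat\theta \mapsto \check{(\widehat\theta)}$, unfolding and applying these identities reduces the composite to $(\id \times G\pi_2) \circ \theta_{A \times X} \circ H\tuple{\pi_1, \id}$, and naturality of $\theta$ at $\pi_2$ then collapses it to $\theta_X$ since $(\id \times \pi_2) \circ \tuple{\pi_1, \id} = \id_{A \times X}$. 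For $\phi \mapsto \check\phi \mapsto \widehat{(\check\phi)}$, the same identities together with naturality of $\phi$ at $\tuple{\varsigma, \id_X} : X \to A \times X$ reduce $\widehat{\check\phi}_X(s)(\varsigma)$ to $\phi_X(s)(\varsigma)$.

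The final step is functoriality. A morphism $\theta : (\Sigma_1, \lambda_1) \to (\Sigma_2, \lambda_2)$ in $\DL{F}$ is a natural transformation $\theta : \Sigma_1 \nattrans \Sigma_2$ satisfying $\lambda_2 \circ \theta F = F\theta \circ \lambda_1$; using the explicit formula $(\Fam{F}\theta)_X = (\id \times \theta_X)^{A^X}$ for the action of $\Fam{F}$ on morphisms, a short diagram chase built from the same ingredients as above shows this equation is equivalent to the coalgebra-morphism condition $\Fam{F}\theta \circ \widehat{\lambda_1} = \widehat{\lambda_2} \circ \theta$. Preservation of identities and composition is automatic, so the assignment extends to a functor which, by the pointwise bijection just established, is an isomorphism of categories. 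The main obstacle is essentially bookkeeping: both round-trips and the morphism compatibility reduce to repeated invocation of the single defining identity for $c$ together with the costrength coherence, and keeping the many indices straight is the real work.
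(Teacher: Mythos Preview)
Your proposal is correct and essentially matches the paper's proof. The paper frames the argument via the characterisation $\Fam{F}(G) = \Ran_F(FG)$ from~\cite{BasoldPR17} and verifies the universal property of the right Kan extension by exhibiting a counit $\epsilon_X(a) = (\id \times G\pi_2)(a(\pi_1))$ and checking existence and uniqueness of the factorisation; your inverse $\check\phi$ is precisely $\epsilon \circ \phi F$, and your two round-trip computations are the same calculations the paper performs for existence and uniqueness (using the identities $\pi_1 \circ \epsilon^\varsigma \circ c_X = \varsigma$, $\pi_2 \circ \epsilon^\varsigma \circ c_X = \id$, and $F\pi_2 \circ \epsilon^{\pi_1} \circ c_{FX} = \id$). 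Your treatment is in fact slightly more complete, since you explicitly verify the morphism-compatibility needed for the isomorphism of categories, which the paper's proof does not spell out.
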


\begin{proof}
In \cite{BasoldPR17}, it is shown that the familiar of an endofunctor $F$ is given as
    \begin{equation*}
        \Fam{F}(G) = \Ran{F}(FG).
    \end{equation*}
    Now we show that, given an endofunctor $G$,
    the right Kan extension $\Ran{F}(FG)$ is given by$(A \times G\--)^{A^{\--}}$,
    with the counit $\counit \colon \Fam{F}(G)F \nattrans FG$ given by
    \begin{equation*}
        \counit_X(a \colon A^{A \times X} \rightarrow A \times G (A \times X))
        = (\id \times G\pi_2)(a(\pi_1))
    \end{equation*}
    on a set $X$, which is easily seen to be natural.
    Thus we have to show that for every $H \colon \set \rightarrow \set$ and
    $\theta \colon HF \nattrans FG$,
    there exists a unique $\widehat\theta \colon H \nattrans \Fam{F}(G)$ making the diagram below commute:
    \begin{equation*}
        \xymatrix{
            {HF} \ar@{=>}[rr]^{\widehat\theta F} \ar@{=>}[dr]_{\theta}
            & {}
            & {\Fam{F}(G)F}
                \ar@{=>}[dl]^{\counit}
            \\ {}
            & {FG}
            & {}
        }
    \end{equation*}

    First we show that the diagram above commutes with the $\widehat\theta$ defined in~(\ref{eq:unique}),
    i.e.\ we show that $\counit \circ \widehat\theta F = \theta$.
    Given $t \in G X$ we have:
    \begingroup
    \allowdisplaybreaks
    \begin{align*}
        & \counit_X(\widehat\theta_{FX}(t)) \\
        & = \{\text{by definition of $\counit$ (counit)}\} \\
        & FG(\pi_2)(\widehat\theta_{FX}(t)(\pi_1)) \\
        & = \{\text{by definition of $\widehat\theta$}\} \\
        & FG(\pi_2)(\theta^{A^{FX}}_{FX}(\cs^H_{A^{FX},FFX}(H(c_{FX})(t)))(\pi_1)) \\
        & = \{\text{by definition of the functor $(\--)^{A^X}$}\} \\
        & FG(\pi_2)(\theta_{FX}(\cs^H_{A^{FX}, FFX}(H(c_{FX})(t))(\pi_1)))) \\
        & = \{\text{by definition of costrength}\} \\
        & FG(\pi_2)(\theta_{FX}(H(\epsilon^{\pi_1})(H(c_{FX})(t)))) \\
        & = \{\text{by naturality of $\theta$}\} \\
        & \theta_X(H(F\pi_2 \circ \epsilon^{\pi_1} \circ c_{FX})(t)) \\
        & = \{\text{as $F\pi_2 \circ \epsilon^{\pi_1} \circ c_{FX} = \id$}\} \\
        & \theta_X(t)
    \end{align*}
    \endgroup
    Next we show that $\widehat\theta$ is the unique natural transformation making the diagram above commute.
    Let $\kappa : H \nattrans \Fam{F}(G)$ such that $\counit \circ \kappa F = \theta$.
    Given $t \in HX$ and $\subs \colon X \rightarrow A$ we have:
    \begingroup
    \allowdisplaybreaks
    \begin{align*}
        & \widehat\theta_X(t)(\subs) \\
        & = \{\text{by definition of $\widehat\theta$}\} \\
        & \theta_X(\cs^H_{A^X, FX}(H(c_X)(t))(\subs)) \\
        & = \{\text{by assumption}\} \\
        & \counit_X(\kappa_{FX}(\cs^H_{A^X, FX}(H(c_X)(t))(\subs))) \\
        & = \{\text{by definition of $\counit$ (counit)}\} \\
        & FG(\pi_2)(\kappa_{FX}(\cs^H_{A^X, FX}(H(c_X)(t))(\subs))(\pi_1)) \\
        & = \{\text{by definition of costrength}\} \\
        & FG(\pi_2)(\kappa_{FX}(H(\epsilon^\subs \circ c_X)(t))(\pi_1))\\
        & = \{\text{by naturality of $\kappa$}\} \\
        & FG(\pi_2)(\Fam{F}(G)(\epsilon^\subs \circ c_X)(\kappa_X(t))(\pi_1)) \\
        & = \{\text{by definition of $\Fam{F}(G)$ on morphisms}\}\\
        & FG(\pi_2)(FG(\epsilon^\subs \circ c_X)(\kappa_X(t)(\pi_1 \circ \epsilon^\subs \circ c_X))) \\
        & = \{\text{as $\pi_2 \circ \epsilon^\subs \circ c_X = \id$ and $\pi_1 \circ \epsilon^\subs \circ c_X = \subs$}\} \\
        & \kappa_X(t)(\subs)
    \end{align*}
    \endgroup
    So $\kappa = \widehat\theta$ and thus $\widehat\theta$ is the natural transformation from $H$
    to $\Fam{F}(G)$ such that $\counit \circ \widehat\theta F = \theta$.
    So $\Fam{F}(G) = \Ran{F}{FG} \cong (A \times G\--)^{A^{\--}}$.
\end{proof}

Now take a monadic stream GSOS specification $\lambda \colon \Sigma F \Rightarrow FT$ and the corresponding distributive law $\rho \colon TF \Rightarrow FT$. Via \eqref{eq:unique}, $\rho$ corresponds to the $\Fam{F}$-coalgebra $\widehat \rho \colon T \Rightarrow (A\times T-)^{A^-}$, which  at the component $\Var$, is a function of the shape $$\widehat \rho_{\Var} \colon T\Var \to (A\times T\Var)^{A^\Var}\text{,}$$ namely, a Mealy machine with input $A^\Var$, output $A$ and state space $T\Var$. 
This is exactly the Mealy machine $m_\lambda$ for open terms defined in~\eqref{eq:mm-ot}, Section~\ref{sec:mm_over_ot}:
\begin{corollary}
With $\rho$ and $\lambda$ as above, we have $\widehat \rho_{\Var} = m_\lambda$.
\end{corollary}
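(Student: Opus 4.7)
The proof reduces to unfolding both sides and comparing. First, I would write out $\widehat\rho_\Var$ using the explicit formula~(\ref{eq:unique}) with $H = G = T$ and $\theta = \rho$:
$$
\widehat\rho_\Var \;=\; (\rho_\Var)^{A^\Var} \,\circ\, \cs^T_{A^\Var,\, A \times \Var} \,\circ\, Tc_\Var\,.
$$
On the other hand, the Mealy machine of open terms from~(\ref{eq:mm-ot}) is $m_\lambda = \bar\rho_\Var \circ Tc_\Var$, where $\bar\rho \colon T F^{A^\Var} \nattrans F^{A^\Var} T$ is the distributive law over $F^{A^\Var} = (A \times -)^{A^\Var}$ induced by the Mealy specification $\bar\lambda$. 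Since both composites start with $Tc_\Var$, the corollary follows once we establish the natural transformation identity
$$
\bar\rho_X \;=\; (\rho_X)^{A^\Var} \,\circ\, \cs^T_{A^\Var,\, FX}
$$
for every set $X$ (or at least at $X = \Var$).

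I would prove this identity by induction on the free-monad structure of $T = T_\Sigma$, exploiting the defining diagram~(\ref{eq:dl-monad}) for both $\rho$ and $\bar\rho$. The base case concerns the unit $\eta F \colon F \nattrans TF$: here~(\ref{eq:dl-monad}) forces $\rho \circ \eta F = F\eta$ and similarly $\bar\rho \circ \eta F^{A^\Var} = F^{A^\Var}\eta$, and the desired equation then reduces to naturality of costrength together with the interaction $\cs^T_{A^\Var,FX} \circ \eta_{(FX)^{A^\Var}} = (\eta_{FX})^{A^\Var}$. For the inductive case, I would rewrite the operation clause $\rho \circ \kappa F = F\mu \circ \lambda T \circ \Sigma \rho$ under the pointwise extension, using that $\bar\lambda$ is by definition~(\ref{eq:lambdabar}) equal to $\lambda^{A^\Var} \circ \cs^\Sigma_{A^\Var, F-}$. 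The inductive hypothesis lets one replace the inner $\bar\rho T$ by $\rho^{A^\Var}T \circ \cs^T_{A^\Var, F-}T$, after which a diagram chase using naturality of $\cs$ and its compatibility with $\mu$ (the standard monad-costrength coherence) collapses both sides.

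The main obstacle is precisely this compatibility step: one has to check that the operation "extend a $\Sigma$-distributive-law to a $T_\Sigma$-distributive-law" commutes with "take pointwise extension along $A^\Var$". Conceptually this says that the pointwise-extension construction is a morphism in the category $\DL{-}$ of distributive laws, and concretely it is a diagram chase through the defining pentagon~(\ref{eq:dl-monad}). Given the preparatory facts about costrength already used in Section~\ref{sec:pe_monadic_gsos} (in particular, those invoked in the proof of Theorem~\ref{th:pointwise_ext}), this chase is essentially mechanical, but it is the only genuinely nontrivial piece of the argument; everything else is a matter of recognising that both constructions are pinned down by the same universal property of the free monad $T$ together with the naturality clauses imposed by~(\ref{eq:unique}) and~(\ref{eq:lambdabar}).
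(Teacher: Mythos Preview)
Your decomposition is exactly the paper's: both sides are $(\text{something}) \circ Tc_\Var$, so it suffices to show
\[
\bar\rho_X \;=\; (\rho_X)^{A^\Var} \circ \cs^T_{A^\Var,\,FX}\,.
\]
The difference is that you propose to establish this identity anew by structural induction on $T$, whereas the paper simply invokes Lemma~\ref{lm:various-dls}, which is precisely this statement (with $B = A^\Var$). In fact, when you appeal to ``the preparatory facts about costrength already used in the proof of Theorem~\ref{th:pointwise_ext}'', you are pointing directly at Lemma~\ref{lm:various-dls}: that lemma is the single ingredient that proof uses. So the corollary is a two-line consequence of a result already in hand, not a place where fresh work is needed.

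It is also worth noting that the paper's proof of Lemma~\ref{lm:various-dls} itself does \emph{not} proceed by your proposed induction. Instead it argues by uniqueness: one checks that $(\rho_X)^B \circ \cs^T_{B,FX}$ is a distributive law of the monad $T$ over $F^B$ (using that $\cs^T_{B,-}$ is a distributive law of $T$ over $(-)^B$ and that $\rho$ is one of $T$ over $F$), and that it restricts along $\kappa \circ \Sigma\eta$ to $\bar\lambda$; then one invokes the one-to-one correspondence between monadic GSOS specifications and distributive laws (Bartels) to conclude it equals $\bar\rho$. Your inductive argument would also go through and is more elementary in that it avoids the external uniqueness result, but it requires spelling out the coherence of $\cs^T$ with $\eta$ and $\mu$ by hand; the uniqueness route packages all of that into the single observation that $\cs^T$ is already a monad-over-functor distributive law.
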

\begin{proof}
	Recall from Section~\ref{sec:mm_over_ot} that $m_\lambda$ is defined as
	\begin{equation}\label{eq:ml1}
\xymatrix{
\openTerms \ar[r]^-{Tc} 
	& T(A \times \Var)^{A^\Var} \ar[r]^-{\bar{\rho}_\Var}
	& (A \times \openTerms)^{A^\Var}\text{.}
}
	\end{equation}
	where $\overline{\rho}$ is the distributive law corresponding to the pointwise extension $\bar{\lambda}$
	of $\lambda$, and $c$ in the above coincides with $c_{\Var}$ in~\eqref{eq:c-natural}. 
	Now, by Lemma~\ref{lm:various-dls}, we have
	\begin{equation}\label{eq:ml2}
	\bar \rho_{\Var} = \left( T(A \times \Var)^{A^{\Var}} \trans[\cs^T_{A^{\Var},A \times \Var}] (T(A \times \Var))^{A^{\Var}} \trans[(\rho_\Var)^{A^{\Var}}] (A \times T\Var)^{A^{\Var}} \right)
	\end{equation}
	Combining~\eqref{eq:ml1} and \eqref{eq:ml2} yields 
	$\widehat \rho_{\Var}$ by definition of $\widehat{\rho}$. Hence 
	$\widehat \rho_{\Var} = m_\lambda$.
\end{proof}

A further observation sheds light on the relationship with the companion.
In~\cite{PousR17}, it is shown that companions of polynomial endofunctors on $\set$ exist,
and a characterisation is given using a generalisation of the notion of a causal function.
For the functor $FX = A \times X$, this notion ends up to be the usual definition of 
causal function, given in Section~\ref{sec:preliminaries}.
To characterise the companion, it is convenient---following Section~\ref{sec:mm_over_ot}---to
say $\cfunc \colon (A^\omega)^X \rightarrow A^\omega$ is causal if the associated $\bar{\cfunc} \colon (A^X)^\omega \rightarrow A^\omega$
(swapping arguments) is causal in the usual sense (Section~\ref{sec:preliminaries}).
Below, we denote the set of all such causal functions by $\Gamma((A^\omega)^X, A^\omega)$. Then, using
the results in \cite{PousR17},
the companion $(C,\gamma \colon CF \nattrans FC)$ of the stream system functor $FX = A \times X$ is given 
by
\begin{equation*}
    CX = \Gamma((A^\omega)^X, A^\omega)\,.
\end{equation*}

Since $(C,\gamma)$ is a final object in $\DL{F}$, there is a unique
natural transformation $\kappa \colon (T,\rho) \nattrans (C,\gamma)$. Its component at $\Var$ is exactly the open semantics from Definition \ref{def:open}, as stated formally by the following result. 

\begin{theorem}\label{thm:semantics-companion}
    Let $\lambda \colon \Sigma F \Rightarrow FT$ be an abstract monadic stream GSOS specification,
    and let $\rho \colon TF \Rightarrow FT$ be the corresponding the distributive law.
    Let $\kappa \colon T \Rightarrow C$ be the unique morphism from $(T,\rho)$ to $(C,\gamma)$.
    Then $\kappa_{\Var} = \osem{\--}$. 
\end{theorem}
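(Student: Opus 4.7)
The plan is to translate the statement through the categorical isomorphism in Theorem~\ref{thm:iso}, and then conclude by finality of an $F'$-coalgebra at the component $\Var$, where $F'Y = (A \times Y)^{A^\Var}$.

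First, by Theorem~\ref{thm:iso}, the unique morphism $\kappa \colon (T,\rho) \to (C,\gamma)$ in $\DL{F}$ corresponds bijectively to a coalgebra morphism between the associated $\Fam{F}$-coalgebras $\widehat{\rho} \colon T \nattrans \Fam{F}(T)$ and $\widehat{\gamma} \colon C \nattrans \Fam{F}(C)$. Evaluating this coalgebra morphism at the component $\Var$ yields a map $\kappa_\Var \colon T\Var \to C\Var$ satisfying $\Fam{F}(\kappa)_\Var \circ \widehat{\rho}_\Var = \widehat{\gamma}_\Var \circ \kappa_\Var$. Since $\Fam{F}(\Sigma)\Var = (A \times \Sigma\Var)^{A^\Var}$ by Theorem~\ref{thm:streamsfamiliar}, this equation exactly says that $\kappa_\Var$ is a coalgebra morphism for $F'$ from $(T\Var, \widehat{\rho}_\Var)$ to $(C\Var, \widehat{\gamma}_\Var)$.

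Second, I invoke the corollary just before this theorem, which identifies $\widehat{\rho}_\Var$ with the Mealy machine $m_\lambda$ of \eqref{eq:mm-ot}. Hence $\kappa_\Var$ is an $F'$-coalgebra morphism from $m_\lambda$ to $(C\Var, \widehat{\gamma}_\Var)$.

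Third, I need to see that $(C\Var, \widehat{\gamma}_\Var)$ is (up to the canonical swap) the final $F'$-coalgebra $(\tgamma, \zeta)$ from Section~\ref{sec:preliminaries}. On carriers, the bijection $(A^\omega)^\Var \cong (A^\Var)^\omega$ that exchanges the two arguments carries $C\Var = \Gamma((A^\omega)^\Var, A^\omega)$ to $\tgamma = \Gamma((A^\Var)^\omega, A^\omega)$; this is precisely the identification the paper has been implicitly using (see the comment just after Definition~\ref{def:open}). On structure maps, I would compute $\widehat{\gamma}_\Var$ by unfolding~\eqref{eq:unique} with $H = G = C$ and $\theta = \gamma$, using the concrete description of $\gamma$ from~\cite{PousR17}, and check that this matches the final coalgebra structure $\zeta$ that reads off the head and tail of a causal function under the swap. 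Given the final outcome, this step must work out, but it is where the bookkeeping between the two equivalent presentations of causal functions lives, so I expect it to be the main technical obstacle.

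Once $(C\Var, \widehat{\gamma}_\Var)$ is recognised as the final $F'$-coalgebra, the unique $F'$-coalgebra morphism from $m_\lambda$ into it is $\osem{-}$ by Definition~\ref{def:open}. Since $\kappa_\Var$ is also such a morphism (by the first two paragraphs), finality gives $\kappa_\Var = \osem{-}$, as required.
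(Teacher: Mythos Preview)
Your proposal is correct, but it takes a different route from the paper. The paper first establishes an explicit pointwise formula $\kappa_X(t)(\psi) = \asem{(T\psi)(t)}$ for the unique morphism into the companion (Proposition~\ref{prop:charcomp} in the appendix), and then matches this formula to $\osem{-}$ via Proposition~\ref{prop:osem_and_csem}. You instead bypass the explicit formula entirely and argue directly by finality: both $\kappa_\Var$ and $\osem{-}$ are $F'$-coalgebra morphisms from $m_\lambda$ into $(C\Var,\widehat{\gamma}_\Var)$, and the latter is final. Your argument is essentially the \emph{uniqueness} part of the paper's Proposition~\ref{prop:charcomp}, applied directly to the pair $\kappa_\Var$, $\osem{-}$ rather than to $\kappa$ and a hypothetical $\theta$; it sidesteps both the computational verification that the explicit $\kappa$ is a $\DL{F}$-morphism and the appeal to Proposition~\ref{prop:osem_and_csem}. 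The step you flag as the main obstacle---that $(C\Var,\widehat{\gamma}_\Var)$ is the final Mealy machine---is exactly Proposition~\ref{prop:finalmealy2} in the paper, and both proofs ultimately rest on it. What the paper's route buys in return is the explicit description of $\kappa$ at every component, not just at $\Var$.
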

We give a self-contained proof of this fact, together with a concrete characterisation of
of the companion (including the associated distributive law), in~\ref{sec:app-companion}.

\section{Conclusions, Related and Future work}
\label{sec:final_remarks}

In this paper we have studied the semantics of open terms specified in the stream GSOS format. 
Our recipe consists in translating the stream specification into a Mealy specification giving semantics to all open terms. Remarkably, this semantics equates two open terms if and only if they are equivalent under all possible interpretations of variables as streams (Theorem~\ref{th:prooftechnique}) or under the interpretation of variables as closed terms (Theorem~\ref{th:closed-terms}). Furthermore, semantic equivalence can be checked by means of the bisimulation proof method enhanced with a technique called up-to substitutions (Theorem~\ref{th:upto}).

Two considerations are now in order. First, the main advantage of using up-to substitutions rather than more standard coinductive techniques (e.g.~\cite{HansenKR16,Klin11}) is that the former may allow finite relations to witness equivalence of open terms (see Example~\ref{ex:uptosubs}), while the latter would often require infinite relations containing all their possible instantiations. We expect this difference to be relevant for equivalence checking algorithms; we leave as future work to properly investigate this issue. 

Second, the correspondences in Theorems~\ref{th:prooftechnique} and~\ref{th:closed-terms} are far from being expected. Indeed, for GSOS specifying labelled transition systems, the notions of bisimilarity of open terms proposed in literature (like the formal hypothesis in~\cite{deSimone85}, the hypothesis preserving in~\cite{Rensink00}, the loose and the strict in~\cite{DBLP:journals/tcs/BaldanBB07}, the rule-matching in~\cite{ACI12}) are sound w.r.t.\ the semantics obtained by the interpretation of variables as closed terms, but \emph{not} complete. Such incompleteness seems to be natural for the authors of~\cite{DBLP:journals/tcs/BaldanBB07} since, they say, in modern open systems, software can be partially specified, executed and then instantiated during its execution. From a coalgebraic perspective, the difficulties in having a complete coinductive characterisation for the semantics of open semantics seems to arises from non-determinism: passing from streams to labelled transition systems means from the coalgebraic outlook to move from the functor $FX=A \times X$ to $FX=\mathcal{P}(A \times X)$.

Endrullis et al.~\cite{EndrullisHB13} consider bisimulation up-to techniques between stream terms, in order to 
improve coinductive methods in Coq. 
Their terms include variables, and up-to substitution is also used there. It is shown
that up-to substitution is sound, in combination with other techniques such as up-to context (for which
causality is assumed; this is equivalent to GSOS-definability~\cite{HansenKR16}). 
The use of variables suggests that open terms may possibly be treated
by the techniques in~\cite{EndrullisHB13} as well; however, bisimulations between
are not mentioned there. Moreover, the approach is different: in particular, it is not based
on the construction via Mealy machines. A more precise understanding of the connection to the current paper is left for future work.

In this sense, our work can be considered as a first concrete step towards a (co)algebraic understanding of the semantics of open terms in the general setting of abstract GSOS~\cite{TP97,Klin11}. Orthogonally to the current paper, there is the more
abstract perspective on distributive laws and abstract GSOS in the recent papers~\cite{BasoldPR17,PousR17}, 
which potentially is of use for bisimilarity of open terms. 
In Section~\ref{sec:familiar}, we connected these two approaches, showing how the current
concrete work on streams fits in the more abstract perspective offered by these papers. 
This is a promising starting point for a general coalgebraic theory
of bisimilarity of open terms, for GSOS specifications for arbitrary functors. 
We leave the development of such a theory for future work.

One of the potential benefits of such a theory might be a general coalgebraic theory of complete axiomatizations. The work of Silva on Kleene coalgebra~\cite{silva2010kleene} is a successful step in this direction, but its scope is limited to regular behaviours. Interestingly enough, one of the first completeness results for regular behaviours~\cite{milner1984complete} already makes use of the semantics of open terms. This is indeed considered in the field to be one of the standard techniques to prove ($\omega$)-completeness for axiomatisations, see the survey in \cite{aceto2005finite}. 

To conclude, it is worth to stress the fact that our approach is confined to GSOS specifications. Extending it to more expressive formats, such as tyft/tyxt~\cite{groote1992structured} or ntyft/ntyxt~\cite{groote1993transition}, seems to be rather challenging. Indeed, while GSOS specification have a neat categorical description in terms of distributive laws \cite{TP97,Klin11}, we are not aware of similar results for more expressive formats. Furthermore, while bisimilarity is guaranteed to be a congruence by these formats,  bisimulations up-to context is in general not compatible: a counterexample for the soundness of up-to context and bisimilarity for the tyft format can be found in \cite{PS11}.



\section*{References}

\bibliographystyle{elsarticle-num}

\newpage

\begin{appendix}

\section{Proof of Theorem~\ref{th:translation}.}
\label{sec:the_enc_is_sound}

\begin{theorem}[Th.~\ref{th:translation}]
Let $(\Sigma, A, R)$ be a stream GSOS specification and $(\tilde \Sigma, A, \tilde R)$ be the corresponding monadic one. Then, 
for all $t \in T_{\Sigma}\emptyset$, $t \bisim \tilde t$.
\end{theorem}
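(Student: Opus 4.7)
My plan is to reduce the statement to an equality in the final coalgebra $A^\omega$. Bisimilarity on stream systems coincides with equality of the induced streams, so the claim amounts to $\sem{t}_\Sigma = \sem{\tilde t}_{\tilde\Sigma}$ in $A^\omega$, where the two subscripts indicate the coinductive extensions of the stream systems on closed terms built from $(\Sigma,A,R)$ and $(\tilde\Sigma,A,\tilde R)$ respectively. By compositionality of the abstract semantics (Section~\ref{sec:spec}) it then suffices to show that for every $f\in\Sigma$ the stream operation $\sigma_f$ induced by the rule in $R$ equals the stream operation $\tilde\sigma_{\tilde f}$ induced by the monadic rule in $\tilde R$.

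\noindent\textbf{Strengthening to open terms.} A direct structural induction on $t$ does not close: the step $t=f(t_1,\ldots,t_k)$ would require $\sigma_f=\tilde\sigma_{\tilde f}$ at arbitrary streams, which is precisely what we are trying to prove. I would therefore strengthen to open terms and interleave a prefix-length induction, proving that for every $t\in T_\Sigma(\vec x)$, every $\vec\alpha\in(A^\omega)^{|\vec x|}$ and every $n\geq 0$,
\begin{equation*}
\sem{t}_\Sigma(\vec\alpha)\upharpoonright_n \;=\; \sem{\tilde t}_{\tilde\Sigma}(\vec\alpha)\upharpoonright_n ,
\end{equation*}
where $\sem{t}_\Sigma\colon(A^\omega)^{|\vec x|}\to A^\omega$ denotes the operation induced by compositional interpretation in the $\Sigma$-system, and likewise for $\sem{\tilde t}_{\tilde\Sigma}$. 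The theorem is the special case $t\in T_\Sigma\emptyset$.

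\noindent\textbf{The inductive step.} I would proceed by induction on $n$, with a nested structural induction on $t$. For $n=m+1$ and $t=f(t_1,\ldots,t_k)$, set $\beta_i:=\sem{t_i}_\Sigma(\vec\alpha)$ and $\gamma_i:=\sem{\tilde{t_i}}_{\tilde\Sigma}(\vec\alpha)$. By the inner IH, $\beta_i\upharpoonright_{m+1} = \gamma_i\upharpoonright_{m+1}$; in particular the first outputs coincide, say $a_i$. Let $a$ and $s(\vec x,\vec{x'})$ be the output and target of the unique rule in $R$ for $f$ triggered by $(a_1,\ldots,a_k)$. Then $\sigma_f(\vec\beta)$ has first output $a$ and derivative $\sem{s}_\Sigma(\vec\beta,\vec{\beta'})$; by the corresponding monadic rule in $\tilde R$, $\tilde\sigma_{\tilde f}(\vec\gamma)$ has the same first output $a$ and derivative $\sem{\tilde s}_{\tilde\Sigma}(\vec\delta,\vec{\gamma'})$, where $\delta_i := a_i.\gamma'_i$ is the prefix operator of~\eqref{rule:prefix} applied to $\gamma'_i$. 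The key identity is that on streams the prefix operator prepends, so $\delta_i = a_i\cdot\gamma'_i = \gamma_i$, which simplifies the monadic derivative to $\sem{\tilde s}_{\tilde\Sigma}(\vec\gamma,\vec{\gamma'})$. Two hops then finish: (a) the outer IH at $m$ applied to $s$ yields $\sem{s}_\Sigma(\vec\beta,\vec{\beta'})\upharpoonright_m = \sem{\tilde s}_{\tilde\Sigma}(\vec\beta,\vec{\beta'})\upharpoonright_m$; (b) using $\vec\beta\upharpoonright_m = \vec\gamma\upharpoonright_m$ and $\vec{\beta'}\upharpoonright_m = \vec{\gamma'}\upharpoonright_m$ (both obtained from $\beta_i\upharpoonright_{m+1} = \gamma_i\upharpoonright_{m+1}$) together with causality of every stream-GSOS-definable operation (Section~\ref{sec:preliminaries}, cf.~\cite{HansenKR16}), the right-hand side of (a) equals $\sem{\tilde s}_{\tilde\Sigma}(\vec\gamma,\vec{\gamma'})\upharpoonright_m$.

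\noindent\textbf{Main obstacle.} The principal difficulty is orchestrating the induction on prefix length with the structural induction on terms, and especially invoking causality to absorb the mismatch between the stream tuples $\vec\beta$ fed to the original-system derivative and $\vec\gamma$ fed to the monadic one. Without causality one cannot pass from agreement at prefix $m+1$ of the sources to agreement at prefix $m$ under an arbitrary derived operation, so step~(b) would break. An alternative presentation would construct a bisimulation on the disjoint union of the two stream systems whose pairs are closed under the buffering pattern $(t',\,\tilde s(\vec{a.u},\vec u))$, but unfolding the bisimulation conditions leads to essentially the same nested argument.
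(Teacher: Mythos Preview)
Your argument is correct, and it takes a genuinely different route from the paper's. The paper works in the disjoint extension $S+\tilde S$, defines an explicit relation $\relR$ on $T_{\Sigma\cup\tilde\Sigma}\emptyset$ (closed under the clauses $(c,\tilde c)$, $(t,a.\tilde{t'})$ whenever $t\trans[a]t'$, and congruence under $\tilde f$), and proves by structural induction that $\relR$ is a bisimulation up-to bisimilarity; the key ingredient there is the syntactic lemma $t\bisim a.t'$ whenever $t\trans[a]t'$, together with the congruence property of $\bisim$ for GSOS. You instead work semantically in $A^\omega$, strengthen to open terms, and run a double induction on prefix length and term structure; your key ingredient is the stream identity $a_i\cdot\gamma'_i=\gamma_i$ for the prefix operator (the semantic shadow of the paper's lemma) together with causality of GSOS-definable operations to reconcile the input tuples $\vec\beta$ and $\vec\gamma$. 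Your approach is more elementary in that it avoids up-to techniques and the disjoint-extension machinery, at the price of the nested-induction bookkeeping; the paper's coinductive presentation is arguably more portable beyond streams (where a notion of prefix length may be unavailable), while your approximation argument is specific to streams but likely easier to mechanise.
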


In order to prove Theorem~\ref{th:translation} we have to recall the notion of \emph{disjoint extension}.
\begin{definition}
A (stream) GSOS specification $S' = (\Sigma',A, R')$ is a disjoint extension of
the (stream) GSOS specification $S = (\Sigma, A, R)$ if $\Sigma \subseteq \Sigma'$,
$R \subseteq R'$ and $R'$ adds no new rules for operators in $\Sigma$.
\end{definition}

Because $R'$ adds no new rules for operators in $\Sigma$ we have that the behavior of 
$t \in T_{\Sigma}\emptyset$ is the same for both specifications $S$ and its disjoint extension $S'$.

Let $S = (\Sigma, R, A)$ be a stream GSOS specification and let 
$\tilde S = (\tilde \Sigma, \tilde R, A)$ be its translation to a monadic GSOS specification, see Section~\ref{sec:from_full_to_monadic}.
If we consider the stream GSOS specification $S + \tilde S = (\Sigma \cup \tilde\Sigma, A, R \cup \tilde R)$
we have that $S + \tilde S$ is a disjoint extension of both $S$ and $\tilde S$.
Let $\tilde t$ be the term obtained from $t \in T_{\Sigma}\emptyset$ by replacing each occurence of $f \in \Sigma$ in $t$ 
by $\tilde f \in \tilde \Sigma$.
Taking into account $S + \tilde S$ we will prove that for all $t \in T_{\Sigma}\emptyset$ we have that 
$\tilde t \in T_{\tilde \Sigma}\emptyset$ is such that $t \bisim \tilde t$, 
see Lemma~\ref{lemma:the_enc_is_sound}. 
Because $S + \tilde S$ is a disjoint extension of $\tilde S$, 
we can ensure that Theorem~\ref{th:translation} is sound.

In order to prove the result we define a relation ${\relR} \subseteq \tplus \times \tplus$
and we prove that this relation is a bisimulation up-to bisimilarity. Let ${\relR} \subseteq \tplus \times \tplus$
be the smallest relation satisfying:
\begin{enumerate}[(${\relR}$-i)]
 \item \label{item:relR1}
 $(c, \tilde c) \in {\relR}$ for each constant $c \in \Sigma$.
 \item 
 \label{item:relR2}
 $(t, a.\tilde{t'}) \in {\relR}$ if $t\in T_{\Sigma}\emptyset$ and $t \trans[a] t'$.
 \item \label{item:relR3}
 $(f(t_1, \ldots, t_n), \tilde f(s_1, \ldots, s_n)) \in {\relR}$ whenever $t_i \relR s_i$ for all $i=1, \ldots, n$ and $f\in \Sigma$.
\end{enumerate}

By structural induction on $t$ and considering the different cases based on the definition of 
${\relR}$ we can prove the following results
\begin{lemma}
 \label{lemma:type_of_the_relR}
 If $(t,s) \in {\relR}$ then $t \in T_{\Sigma}\emptyset$ and $s \in T_{\tilde \Sigma} \emptyset$. 
\end{lemma}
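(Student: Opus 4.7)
The plan is to prove the statement by rule induction on the derivation of $(t,s) \in \relR$, following the three generating clauses. Although the paper announces a structural induction on $t$, it is more natural here to induct on the witness that $(t,s) \in \relR$, and the case analysis is essentially the same.

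Case ($\relR$-\ref{item:relR1}) is immediate: if $(t,s) = (c,\tilde c)$ for a constant $c \in \Sigma$, then $c \in T_\Sigma \emptyset$ and, since $\tilde c \in \tilde \Sigma$, also $\tilde c \in T_{\tilde\Sigma}\emptyset$. Case ($\relR$-\ref{item:relR3}) is equally routine: if $t = f(t_1,\dots,t_n)$ and $s = \tilde f(s_1,\dots,s_n)$ with $f \in \Sigma$ (hence $\tilde f \in \tilde\Sigma$) and $t_i \relR s_i$ for each $i$, the induction hypothesis gives $t_i \in T_\Sigma\emptyset$ and $s_i \in T_{\tilde\Sigma}\emptyset$, and the conclusion follows by applying the corresponding operator symbol.

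The only clause requiring a bit of care is ($\relR$-\ref{item:relR2}): suppose $(t,s) = (t, a.\tilde{t'})$ with $t \in T_\Sigma\emptyset$ and $t \trans[a] t'$ in $S + \tilde S$. The first conjunct, $t \in T_\Sigma\emptyset$, is part of the assumption, so only $s \in T_{\tilde\Sigma}\emptyset$ needs verification. For this I would use the fact that $S + \tilde S$ is a disjoint extension of $S$, so the transition $t \trans[a] t'$ out of a closed $\Sigma$-term $t$ must be derivable using only rules of $S$. A straightforward side induction on the derivation, using the shape of stream GSOS rules (whose right-hand-side terms are built from source and derivative variables using operators in $\Sigma$), shows that $t' \in T_\Sigma\emptyset$. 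Then the substitution $f \mapsto \tilde f$ underlying the operator $\widetilde{(\_)}$ maps closed $\Sigma$-terms to closed $\tilde\Sigma$-terms, so $\tilde{t'} \in T_{\tilde\Sigma}\emptyset$; finally, since $a.\_ \in \tilde\Sigma$, we conclude $a.\tilde{t'} \in T_{\tilde\Sigma}\emptyset$.

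The main potential obstacle is precisely the little side fact used in clause ($\relR$-\ref{item:relR2})---that $\trans[a]$ preserves membership in $T_\Sigma\emptyset$---but this reduces to inspecting the shape of stream GSOS rules together with the disjoint-extension observation recalled just above Lemma~\ref{lemma:the_enc_is_sound}. Everything else is bookkeeping, which is why a proof-by-cases along the three generating clauses of $\relR$ should suffice.
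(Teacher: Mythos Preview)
Your proposal is correct and follows essentially the same approach as the paper, which only states that the lemma is proved ``by structural induction on $t$ and considering the different cases based on the definition of $\relR$'' without giving any details. Your rule induction on the derivation of $(t,s)\in\relR$ is in fact the cleaner formulation of exactly this case analysis, and the side argument you supply for clause~($\relR$-\ref{item:relR2})---that transitions out of $T_\Sigma\emptyset$ stay in $T_\Sigma\emptyset$ by the disjoint-extension property---is the expected way to discharge that case.
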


\begin{lemma}
 \label{lemma:t_tildet_in_R}
 For all $t \in T_{\Sigma}\emptyset$, $(t,\tilde t) \in {\relR}$.
\end{lemma}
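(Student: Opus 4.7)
The plan is to proceed by a straightforward structural induction on the closed term $t \in T_{\Sigma}\emptyset$. The key observation I will rely on is that, by construction, the operation $\widetilde{(-)}$ is purely syntactic: it simply replaces every operator symbol $f \in \Sigma$ occurring in $t$ by the corresponding $\tilde f \in \tilde\Sigma$. Hence $\widetilde{f(t_1,\ldots,t_n)} = \tilde f(\widetilde{t_1},\ldots,\widetilde{t_n})$ for every $f \in \Sigma$ of arity $n$.

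For the base case, $t$ is a constant $c \in \Sigma$. Clause (${\relR}$-\ref{item:relR1}) directly gives $(c,\tilde c) \in {\relR}$.

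For the inductive step, suppose $t = f(t_1,\ldots,t_n)$ with $f \in \Sigma$ of arity $n \geq 1$ and each $t_i \in T_\Sigma\emptyset$. By the induction hypothesis, $(t_i,\widetilde{t_i}) \in {\relR}$ for every $i \in \{1,\ldots,n\}$. Instantiating clause (${\relR}$-\ref{item:relR3}) with $s_i = \widetilde{t_i}$, I obtain $(f(t_1,\ldots,t_n),\,\tilde f(\widetilde{t_1},\ldots,\widetilde{t_n})) \in {\relR}$, which by the syntactic identity above is exactly $(t,\tilde t) \in {\relR}$.

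There is no real obstacle here; clause (${\relR}$-\ref{item:relR2}) is never used in establishing this lemma—its role lies elsewhere, in proving that ${\relR}$ is a bisimulation up-to bisimilarity (which is what Lemma~\ref{lemma:t_tildet_in_R} will be combined with to yield Theorem~\ref{th:translation}). The proof is hence a one-line induction whose only subtlety is unfolding the definition of $\widetilde{(-)}$ on composite terms.
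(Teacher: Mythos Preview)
Your proof is correct and follows exactly the approach indicated in the paper, which simply states that the result follows by structural induction on $t$. Your explicit unfolding of the base case via clause (${\relR}$-\ref{item:relR1}) and the inductive step via clause (${\relR}$-\ref{item:relR3}) is precisely what is intended.
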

\begin{lemma}
 \label{lemma:R_ext_to_context}
 For all $t(x_1, \ldots, x_m) \in T_{\Sigma}\Var$, 
 $t_1, \ldots, t_m \in T_{\Sigma}\emptyset$, $s_1,\ldots, s_m \in T_{\tilde \Sigma}\emptyset$,
 if $t_i \relR s_i$ for $i=1,\ldots, m$ then $t(t_1, \ldots, t_m) \relR \tilde t(s_1, \ldots, s_m)$.
\end{lemma}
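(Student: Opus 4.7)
The plan is to prove the lemma by straightforward structural induction on the open term $t(x_1,\ldots,x_m) \in T_{\Sigma}\Var$, invoking exactly the matching clause of the inductive definition of $\relR$ in each case. Since $\Sigma$-terms are built from variables, constants and applications of operator symbols, there are three cases to consider.

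For the base cases, when $t = x_i$ for some $i$, the statement reduces to $t_i \relR s_i$, which is the hypothesis. When $t = c$ for a constant $c \in \Sigma$, both sides collapse to closed terms independent of the $t_j$ and $s_j$: $t(t_1,\ldots,t_m) = c$ and $\tilde t(s_1,\ldots,s_m) = \tilde c$ (the latter because the operation $\widetilde{(-)}$ only replaces operator symbols), and $(c,\tilde c) \in \relR$ by clause ($\relR$-\ref{item:relR1}). For the inductive case $t = f(u_1,\ldots,u_n)$ with $f \in \Sigma$ and each $u_j \in T_{\Sigma}\Var$ having variables among $x_1,\ldots,x_m$, the induction hypothesis gives $u_j(t_1,\ldots,t_m) \relR \tilde u_j(s_1,\ldots,s_m)$ for every $j = 1,\ldots,n$. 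Then clause ($\relR$-\ref{item:relR3}), applied to the operator $f$, yields
\[
f\bigl(u_1(t_1,\ldots,t_m),\ldots,u_n(t_1,\ldots,t_m)\bigr) \;\relR\; \tilde f\bigl(\tilde u_1(s_1,\ldots,s_m),\ldots,\tilde u_n(s_1,\ldots,s_m)\bigr),
\]
which is precisely $t(t_1,\ldots,t_m) \relR \tilde t(s_1,\ldots,s_m)$ once we unfold the definitions of substitution and of $\widetilde{(-)}$.

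There is essentially no obstacle here: the only subtlety worth flagging is the interaction between the syntactic substitution $t(t_1,\ldots,t_m)$ and the symbol translation $\widetilde{(-)}$, which commute on the relevant shapes because $\widetilde{(-)}$ is defined homomorphically on $\Sigma$-structure and leaves variables untouched. The clause ($\relR$-\ref{item:relR2}) is not needed at all in this induction; it will, however, be used elsewhere (in the bisimulation-up-to argument for Lemma~\ref{lemma:the_enc_is_sound}) to account for the prefix rewriting introduced by rule~\eqref{rule:prefix}. Thus the proof reduces to a clean case analysis using ($\relR$-\ref{item:relR1}) and ($\relR$-\ref{item:relR3}) together with the hypothesis.
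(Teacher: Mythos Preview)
Your proof is correct and follows exactly the approach the paper indicates: structural induction on $t$, using the hypothesis for variables and clauses ($\relR$-\ref{item:relR1}) and ($\relR$-\ref{item:relR3}) for constants and compound terms respectively. The paper itself only states that the lemma follows ``by structural induction on $t$ and considering the different cases based on the definition of ${\relR}$'' without spelling out the cases, so your write-up is a faithful (and more detailed) elaboration of the intended argument.
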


In addition we have:
\begin{lemma}
 \label{lemma:prefix_and_bisimulation}
 For all $t \in \tplus$, if $t \trans[a] t'$ then $t \bisim a.t'$.
\end{lemma}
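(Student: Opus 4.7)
The plan is to exhibit a single concrete bisimulation that handles all instances of the lemma uniformly. Define
$$\mathcal{S} = \{(s, a.s') \mid s \in \tplus,\ a \in A,\ s' \in \tplus \text{ with } s \trans[a] s'\} \subseteq \tplus \times \tplus.$$
By construction, for any $t$ with $t \trans[a] t'$ the pair $(t, a.t')$ lies in $\mathcal{S}$, so once I show $\mathcal{S}$ is a bisimulation in the sense of Definition~\ref{def:bs:stream} the conclusion $t \bisim a.t'$ is immediate.

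To verify the two bisimulation clauses for a pair $(s, a.s') \in \mathcal{S}$, I first check the outputs. By assumption $s \trans[a] s'$, so $o(s) = a$. On the right, the prefix operator rule \eqref{rule:prefix} forces $o(a.s') = a$ regardless of the behaviour of $s'$, so the outputs coincide.

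For the derivatives I exploit the fact that the stream system on $\tplus$ is deterministic and total: $s'$ has a unique outgoing transition $s' \trans[b] s''$ for some $b \in A$ and $s'' \in \tplus$. Applying the prefix rule to this transition yields $a.s' \trans[a] b.s''$, so the derivative of $a.s'$ is $b.s''$ while the derivative of $s$ is $s'$. It remains to observe that $(s', b.s'') \in \mathcal{S}$, which holds directly from the definition of $\mathcal{S}$ applied to the transition $s' \trans[b] s''$. Hence $\mathcal{S}$ is closed under taking derivatives, and the proof is complete.

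I do not anticipate a genuine obstacle here: the argument is essentially a one-step coinductive unrolling, relying only on determinism of stream systems together with the fact that the prefix rule fixes the output label to $a$ and simply shifts the one-step derivative through. The only subtlety worth flagging is that $\mathcal{S}$ is deliberately defined so as to be self-sustaining in a single step, so no up-to technique is needed and one does not have to reason inductively about nested occurrences of prefix operators.
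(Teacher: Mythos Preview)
Your proof is correct and follows essentially the same approach as the paper: both exhibit a bisimulation consisting of pairs $(s, a.s')$ with $s \trans[a] s'$, and verify it using the prefix rule and determinism of the stream system. The only cosmetic difference is that you take one universal relation containing all such pairs, whereas the paper fixes $t$ and builds the relation along the single trajectory $t = t_0 \trans[a_0] t_1 \trans[a_1] \cdots$, defining $\mathcal{S} = \{(t_i, a_i.t_{i+1}) \mid i \geq 0\}$; the verification step is identical.
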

\begin{proof}
 Suppose that $t_0, t_1, \ldots \in \tplus$ and $a_0, a_1, \ldots \in A $
 are s.t. $t = t_0$, $t' = t_1$, $a_0 = a$ and $t_0\trans[a_0] t_1 \trans[a_1] t_2 \trans[a_2] \ldots$.
 Define ${\relS} = \{(t_i, a_{i}.t_{i+1}) \mid i = 0, 1, \ldots\}$.
 Relation ${\relS}$ is a bisimulation given that for all $i$,
 $t_i \trans[a_i] t_{i+1}$, $a_{i}.t_{i+1}\trans[a_i] a_{i+1}.t_{i+2}$ by rule (\ref{rule:prefix})
 and $t_{i+1} \relS a_{i+1}.t_{i+2}$. 
 Finally $(t, a.t') = (t_0, a_0.t_1) \in {\relS}$ 
\end{proof}

\begin{lemma}
 \label{lemma:R_is_bisim_up_bisim}
 ${\relR}$ is a bisimulation up-to bisimilarity w.r.t. $S + \tilde S$. 
\end{lemma}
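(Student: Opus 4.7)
The plan is to prove the statement by induction on the derivation of $(t,s) \in \relR$, i.e.\ on which of the three clauses (${\relR}$-\ref{item:relR1})--(${\relR}$-\ref{item:relR3}) is last applied. The claim I want to establish for every $(t,s) \in \relR$ has two parts: that $o(t) = o(s)$, and that the derivatives satisfy $(d(t), d(s)) \in {\bisim} \circ {\relR} \circ {\bisim}$. Together these are exactly the condition for $\relR$ to be a bisimulation up-to $\bisim$; note that determinism of stream GSOS makes the symmetric matching condition automatic, since each closed term has a unique outgoing transition in $S + \tilde S$.

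The base cases are easy. For clause (${\relR}$-\ref{item:relR1}), a constant $c$ with rule $\frac{}{c \trans[a] t_0}$ is matched by the translated rule $\frac{}{\tilde c \trans[a] \tilde{t_0}}$, where no prefix operators appear because $c$ has no arguments; derivatives are then related via Lemma~\ref{lemma:t_tildet_in_R}. For clause (${\relR}$-\ref{item:relR2}), both $t$ and $a.\tilde{t'}$ emit $a$ (the latter by rule~\eqref{rule:prefix}), and the derivative of $a.\tilde{t'}$ is $b.s^*$ where $\tilde{t'} \trans[b] s^*$. Lemma~\ref{lemma:prefix_and_bisimulation} gives $\tilde{t'} \bisim b.s^*$, and chaining with Lemma~\ref{lemma:t_tildet_in_R} yields $t' \relR \tilde{t'} \bisim b.s^*$, which is the required membership in ${\relR} \circ {\bisim} \subseteq {\bisim} \circ {\relR} \circ {\bisim}$.

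The main step is clause (${\relR}$-\ref{item:relR3}), where $t = f(t_1,\ldots,t_n)$ and $s = \tilde f(s_1, \ldots, s_n)$ with $t_i \relR s_i$. By the inductive hypothesis, $o(t_i) = o(s_i) =: a_i$, so both $t$ and $s$ fire the rule triggered by $(a_1,\ldots,a_n)$ — the original one for $f$ in $R$ and its translated version for $\tilde f$ in $\tilde R$ — producing the same output. Writing $t_i \trans[a_i] t'_i$ and $s_i \trans[a_i] s'_i$, the derivatives are $d_L = t(t_1,\ldots,t_n,t'_1,\ldots,t'_n)$ and $d_R = \tilde t(a_1.s'_1, \ldots, a_n.s'_n, s'_1, \ldots, s'_n)$. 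For each $i$, the inductive hypothesis provides $u_i, v_i$ with $t'_i \bisim u_i \relR v_i \bisim s'_i$. Then congruence of $\bisim$ for the stream GSOS specification $S + \tilde S$ yields $d_L \bisim t(t_1,\ldots,t_n,u_1,\ldots,u_n)$; Lemma~\ref{lemma:R_ext_to_context} applied to $t_i \relR s_i$ and $u_i \relR v_i$ gives $t(t_1,\ldots,t_n,u_1,\ldots,u_n) \relR \tilde t(s_1,\ldots,s_n,v_1,\ldots,v_n)$; and a final congruence step, combining $v_i \bisim s'_i$ with $s_i \bisim a_i.s'_i$ from Lemma~\ref{lemma:prefix_and_bisimulation}, reaches $d_R$. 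Chained together, $(d_L, d_R) \in {\bisim} \circ {\relR} \circ {\bisim}$ as needed.

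The hard part is precisely this inductive step: the two derivatives live in structurally different shapes — the right-hand side carries the auxiliary prefixes $a_i.s'_i$ introduced by the translation in~\eqref{rule:general_rule_trans} — and the inductive hypothesis only relates $t'_i$ and $s'_i$ modulo $\bisim$, not directly in $\relR$. The proof hinges on two classical facts about stream GSOS that allow us to close the gap: congruence of $\bisim$ (which lets us replace bisimilar subterms in either side) and Lemma~\ref{lemma:prefix_and_bisimulation} (which absorbs the auxiliary prefix operators into bisimilarity steps), after which Lemma~\ref{lemma:R_ext_to_context} performs the actual $\relR$-lifting. This three-step maneuver is exactly what makes up-to bisimilarity, rather than plain bisimulation, the appropriate proof principle.
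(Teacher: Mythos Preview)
Your proof is correct and follows essentially the same approach as the paper's: induction over the $\relR$-clauses, with the key (${\relR}$-\ref{item:relR3}) case handled via the inductive hypothesis, congruence of $\bisim$, Lemma~\ref{lemma:prefix_and_bisimulation}, and Lemma~\ref{lemma:R_ext_to_context}. The only minor difference is your choice of intermediate terms --- you keep $t_i,s_i$ in the first $n$ slots and absorb the prefixes on the right via $s_i \bisim a_i.s'_i$, whereas the paper replaces both sides' first $n$ slots by prefixed derivatives $a_i.\hat t_i$ and $a_i.\hat s_i$; your variant is arguably cleaner, since it avoids placing the prefix operator (which belongs to $\tilde\Sigma$, not $\Sigma$) on the $T_\Sigma$-side when invoking Lemma~\ref{lemma:R_ext_to_context}.
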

\begin{proof}
 We prove ${\relR}$ is a bisimulation up-to bisimilarity by structural induction on $t \in T_{\Sigma}\emptyset$
 where $(t,s) \in {\relR}$ for some $s$.
 Considering $t \in T_{\Sigma}\emptyset$ is enough because of Lemma~\ref{lemma:type_of_the_relR}.
 
 Suppose that $t = c$ is a constant. By Def.~\ref{def:stream_gsos_rule}, $c \trans[a] t'$ with $t'\in T_{\Sigma}\emptyset$
 iff there is axiom $r$, i.e. rule without premises, $c \trans[a] t'$.
 We have two cases to analyze for~$s$ as a consequence of ($\relR$-\ref{item:relR1}) and ($\relR$-\ref{item:relR2}): 
 \begin{inparaenum}[(i)]
 \item Case $s = \tilde c$. By construction of $\tilde S$ there is an axiom $\tilde r$ s.t. 
 $\tilde r =  \tilde c \trans[a] \tilde{t'}$. By Lemma~\ref{lemma:t_tildet_in_R}, $t'\relR \tilde{t'}$.
 \item Case $s = a.\tilde{t'}$. Suppose that $\tilde{t'} \trans[b] s''$ then
 $\tilde{t'} \bisim b.s''$ by Lemma~\ref{lemma:prefix_and_bisimulation}. 
 By (\ref{rule:prefix}), $a.\tilde{t'} \trans[a] b.s''$. 
 Finally, $t' \bisim t' \relR \tilde{t'} \bisim b.s''$ because 
 ${\bisim}$ is reflexive, Lemma~\ref{lemma:t_tildet_in_R} and Lemma~\ref{lemma:prefix_and_bisimulation}. 
 \end{inparaenum}

 For the inductive case we consider $t = f(t_1, \ldots, t_n)$. Suppose
 $f(t_1, \ldots, t_n) \trans[a] t'$. We have two cases to analyze
 as a consequence of ($\relR$-\ref{item:relR2}) and ($\relR$-\ref{item:relR3}).
 Case ($\relR$-\ref{item:relR2}) follows similarly to its counterpart in the base case of the inductive proof.
 For the case~($\relR$-\ref{item:relR3}) we consider 
 $(f(t_1, \ldots, t_n), \tilde f(s_1, \ldots, s_n)) \in {\relR}$ with $t_i \relR s_i$ for all $i=1, \ldots, n$.
 Suppose that 
 $$f(t_1, \ldots, t_n) \trans[a] t(t_1, \ldots t_n, t'_1, \ldots, t'_n)$$
 because of an instantiation of the rule~(\ref{eq:stream_gsos_rule}), then 
 $t_i \trans[a_i] t'_i$ for each $i = 1, \ldots, n$.
 By induction, for each $i$, $t_i \trans[a_i] t'_i$, $s_i \trans[a_i] s'_i$ and there are 
 $\hat t_i$ and $\hat s_i$ s.t. 
 \begin{gather}
 \label{uptobisim:eq:bisim}
 t'_i \bisim \hat t_i \qquad \qquad \qquad  \hat s_i \bisim s'_i \\
  \label{uptobisim:eq:R}
 \hat t_i \relR \hat s_i
 \end{gather}
%
 Using rule~(\ref{rule:general_rule_trans}) and for each $i$, $s_i \trans[a_i] s'_i$, we can derive the following transition 
 $$\tilde f(s_1, \ldots, s_n) \trans[a] \tilde t(a_1.s'_1, \ldots a_n.s_n, s'_1, \ldots, s'_n)$$
 To conclude the proof we have to prove that there are $\hat t$ and $\hat s$ such that 
 $$
 t(t_1, \ldots t_n, t'_1, \ldots, t'_n) \bisim \hat t 
 \relR
  \hat s \bisim \tilde t(a_1.s'_1, \ldots a_n.s'_n, s'_1, \ldots, s'_n)
 $$
 To prove this, notice first that for each $i$
\begin{gather}
 \label{uptobisim:eq:bisim:ind-case}
 t_i \bisim a_i.t'_i \bisim a_i.\hat t_i 
 \qquad \qquad \qquad \qquad 
 a_i.\hat s_i \bisim a_i.s'_i \\
 \label{uptobisim:eq:R:ind-case}
 \qquad a_i.\hat t_i \relR a_i.\hat s_i
\end{gather}
 The left side equation of (\ref{uptobisim:eq:bisim:ind-case}) is a consequence of
 Lemma~\ref{lemma:prefix_and_bisimulation} and (\ref{uptobisim:eq:bisim}), taking into account 
 that $\bisim$ is a congruence for the prefix operators.
 The right side equation of (\ref{uptobisim:eq:bisim:ind-case}) is also a consequence of this last fact.
 Equation (\ref{uptobisim:eq:R:ind-case}) is a consequence of ($\relR$-\ref{item:relR3}) and (\ref{uptobisim:eq:R}).

 Define $\hat t = t(a_1.\hat t_1, \ldots, a_n.\hat t_n, \hat t_1, \ldots,\hat t_n)$ and 
 $\hat s  = \tilde t(a_1.\hat s_1, \ldots, a_n.\hat s_n, \hat s_1, \ldots,\hat s_n)$.
 Recall that $\bisim$ is a congruence for all operator defined using the stream GSOS specification format and this property can be extended to arbitrary context constructed using these operators.
 Taking into account this fact and (\ref{uptobisim:eq:bisim}) and (\ref{uptobisim:eq:bisim:ind-case})  
 we get $t(t_1, \ldots t_n, t'_1, \ldots, t'_n) \bisim \hat t$ and $\hat s \bisim \tilde t(a_1.s'_1, \ldots a_n.s_n, s'_1, \ldots, s'_n)$.  Finally $\hat t \relR \hat s$ because of Lemma~\ref{lemma:R_ext_to_context}, (\ref{uptobisim:eq:R}) and (\ref{uptobisim:eq:R:ind-case}).
\end{proof}

\begin{lemma}
 \label{lemma:the_enc_is_sound}
 Let $S$ and $\tilde S$ be, resp., a stream GSOS specification and its encoding in monadic GSOS specification.
 For the GSOS specification $S+\tilde S$ we have $t\bisim \tilde t$ for all $t \in T_{\Sigma}\emptyset$.
\end{lemma}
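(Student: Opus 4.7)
The plan is to combine the two preceding lemmas with the standard soundness of bisimulation up-to bisimilarity. First, I would invoke Lemma~\ref{lemma:t_tildet_in_R} to obtain $(t,\tilde t) \in \relR$ for every $t \in T_\Sigma \emptyset$. Then Lemma~\ref{lemma:R_is_bisim_up_bisim} tells us that $\relR$ is a bisimulation up-to bisimilarity for the specification $S + \tilde S$. The conclusion $t \bisim \tilde t$ then follows from the soundness of this up-to technique.

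The only nontrivial step is the invocation of soundness, but this is a well-known result: up-to bisimilarity is $\b$-compatible (in the sense recalled in Section~\ref{sec:compatibility}, cf.~\cite{PS11,BonchiPPR17}), and compatibility entails that every bisimulation up-to bisimilarity is contained in $\bisim$. Concretely, if $\relR$ satisfies $\relR \subseteq \b({\bisim}\relR{\bisim})$, then one can form $\relS = {\bisim}\relR{\bisim}$ and verify directly that $\relS$ is an ordinary bisimulation: given $(s,s') \in \relS$ with $s \bisim s_1 \relR s_2 \bisim s'$ and a transition $s \trans[a] s''$, we obtain $s_1 \trans[a] s_1''$ with $s'' \bisim s_1''$, then by the up-to property on $(s_1,s_2)$ a transition $s_2 \trans[a] s_2''$ with $s_1'' \in {\bisim}\relR{\bisim}\; s_2''$, and finally $s_2'' \bisim s'''$ with $s' \trans[a] s'''$; composing these yields $(s'', s''') \in \relS$ as required. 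Since $\relS \supseteq \relR$, we get $\relR \subseteq {\bisim}$, and in particular $t \bisim \tilde t$.

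The hard part, which has already been carried out in Lemma~\ref{lemma:R_is_bisim_up_bisim}, is the up-to compatibility check itself: verifying that when $f(t_1,\dots,t_n)$ fires a transition, the target $t(t_1,\dots,t_n,t_1',\dots,t_n')$ produced by the original rule can be matched (modulo $\bisim$) by the target $\tilde t(a_1.s_1',\dots,a_n.s_n',s_1',\dots,s_n')$ produced by the monadic rule~\eqref{rule:general_rule_trans}, reaching $\relR$-related states. This is where the auxiliary prefix operators and Lemma~\ref{lemma:prefix_and_bisimulation} (giving $t_i \bisim a_i.t_i'$ whenever $t_i \trans[a_i] t_i'$) are essential: they absorb the discrepancy between substituting $t_i$ itself (as in the source rule) and substituting $a_i.t_i'$ (as forced by the monadic restriction). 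Given this machinery, the present lemma is then a one-line application of soundness of up-to bisimilarity to Lemmas~\ref{lemma:t_tildet_in_R} and~\ref{lemma:R_is_bisim_up_bisim}.
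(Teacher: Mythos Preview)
Your proposal is correct and matches the paper's own proof, which is literally the one-line remark that the lemma is a straight consequence of Lemmas~\ref{lemma:t_tildet_in_R} and~\ref{lemma:R_is_bisim_up_bisim}. Your additional unpacking of the soundness of bisimulation up-to bisimilarity and of the role of Lemma~\ref{lemma:prefix_and_bisimulation} is accurate commentary, but the core argument is identical.
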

\begin{proof}
 The result is a straight consequence of Lemmas~\ref{lemma:t_tildet_in_R} and~\ref{lemma:R_is_bisim_up_bisim}
\end{proof}

\section{Strength and Costrength}\label{sec:strengthco}

The categorical notions of strength and costrength are recalled here, as they
play an important role from Section~\ref{sec:pe_monadic_gsos} onwards. 
%
The \emph{strength} of an endofunctor $F$ on $\set$ is a map 
$\st^F_{A,X} \colon A \times FX \to F(A \times X)$ natural in $A$ and $X$, defined by
$\st^F_{A,X}(a,t) = (F\eta^a)(t)$ where $\eta^a : X \to A \times X$ is given by $\eta^a(x) = (a, x)$.
The \emph{costrength} of $F$ is a map
$\cs^F_{A,X} \colon F (X^A) \to (FX)^A$ natural in $A$ and $X$, defined by
$\cs^F_{A,X}(t)(a) = (F\epsilon^a)(t)$
where $\epsilon^a : X^A \to X$ is given by $\epsilon^a(f) = f(a)$.

We recall several basic properties that will only be used in proofs, and can be safely skipped
by the reader. First of all, the following diagrams commute:
\begin{equation}
\label{eq:strength}
\begin{tabular}{cc}
\begin{tikzpicture}
  \matrix (m) [matrix of math nodes,row sep=2em,column sep=4em]
  {
     A \times \Sigma V & \Sigma (A \times V) \\
     \Sigma V &   \\};
  \path[-stealth]
    (m-1-1) edge node [left] {\scriptsize $\pi_2$} (m-2-1)
            edge node [above] {\scriptsize $\st^\Sigma_{A,V}$} (m-1-2)
    (m-1-2) edge node [below right] {\scriptsize $\Sigma \pi_2$} (m-2-1);
\end{tikzpicture}
& 
\begin{tikzpicture}
  \matrix (m) [matrix of math nodes,row sep=2em,column sep=4em]
  {
     1 \times \Sigma V & \Sigma (1 \times V) \\
     \Sigma V &   \\};
  \path[-stealth]
    (m-1-1) edge [-] node [left] {\scriptsize $\simeq$} (m-2-1)
            edge node [above] {\scriptsize $\st^\Sigma_{1,V}$} (m-1-2)
    (m-1-2) edge [-] node [below right] {\scriptsize $\Sigma \simeq$} (m-2-1);
\end{tikzpicture}
\end{tabular}
\end{equation}
For the triangle on the left-hand side, we have
$
\Sigma \pi_2 \circ \st^\Sigma_{A,V}(a,t) = 
\Sigma \pi_2 \circ \Sigma\eta^a(t) = \Sigma (\pi_2 \circ \eta^a)(t) =
t = 
\pi_2(a,t)
$. 
The triangle on the right-hand side is a special case. 

We will also use that costrength is natural in the endofunctor involved, i.e., for any natural transformation
$\gamma \colon F \Rightarrow G$ and any sets $A,X$, the following diagram commutes:
\begin{equation}\label{eq:cs-nat}
\begin{tabular}{c}
 \begin{tikzpicture}
  \matrix (m) [matrix of math nodes,row sep=2.5em,column sep=4em]
  {
    F(X^A)&  (FX)^A \\
    G(X^A) & (GX)^A \\
   };     
    \path[-stealth]
      (m-1-1) edge node [above] {\scriptsize $\cs^F_{A,X}$} (m-1-2)
              edge node [left] {\scriptsize $\gamma_{X^A}$} (m-2-1)
      (m-2-1) edge node [above] {\scriptsize $\cs^G_{A,X}$} (m-2-2)
      (m-1-2) edge node [right] {\scriptsize $(\gamma_X)^A$} (m-2-2)
     ;
\end{tikzpicture}
\end{tabular}
\end{equation} 
An analogous property holds for strength, see~\cite[Appendix A]{HK11}.

Finally, when $T$ is a monad, the costrength $\cs^T_{A,-}$ is always a distributive law of 
the monad $T$ over the functor $(-)^A$ (Diagram~\eqref{eq:dl-monad} with $F = (-)^A$), see, e.g.,~\cite[Example 2]{Jacobs06}.

\section{The Pointwise Extension}
\label{app:pointwise_ext}

We recall the general definition of pointwise extension.
This requires a few preliminary notions, in particular of a map $\ev$; the
reader can safely skip these general definitions and move 
immediately to the instance where the functor at hand is the one for streams, 
for which we give a concrete characterisation.
\begin{definition}
 Let $s : Y \to B \times Y$ be an $(B \times {-})$-coalgebra. 
 For an $F^B$-coalgebra $m : X \to (FX)^B$, define the $F$-coalgebra
 $$s \ltimes m : Y \times X \to F (Y \times X)$$
 to be the function composition:
 \begin{align*}
 Y \times X 
 & \trans[s \times m] (B\times Y) \times (FX)^B \trans[\simeq] Y \times (B \times (FX)^B) \\
 & \trans[\id_Y \times \epsilon_{FX}] Y \times FX \trans[\st^F_{Y,X}] F(Y\times X)
 \end{align*}
\end{definition}

Let $\tuple{Z, \zeta}$ be a final $F$-coalgebra. 
By finality, for all $s : Y \to B \times Y$ and $m : X \to (FX)^B$,
there is a unique $F$-coalgebra morphism $\sem{-}_{s \ltimes m} : Y \times X \to Z$:
\begin{equation}
\label{eq:beh_s_ltimes_m}
\begin{tabular}{c}
\begin{tikzpicture}
  \matrix (m) [matrix of math nodes,row sep=2em,column sep=6em]
  {
     Y \times X & Z \\
     F(Y \times X) & FZ \\
};     
   \path[-stealth]
     (m-1-1) edge [dotted] node [above] {\scriptsize $\sem{-}_{s \ltimes m}$}  (m-1-2)
             edge node [left] {\scriptsize $s \ltimes m$}  (m-2-1)
     (m-2-1) edge node [above] {\scriptsize $F\sem{-}_{s \ltimes m}$} (m-2-2)
     (m-1-2) edge node [right] {\scriptsize $\zeta$}  (m-2-2); 
\end{tikzpicture}
\end{tabular}
\end{equation}


We instantiate the above by taking $s$ to
be the final ($B \times {-}$)-coalgebra $(B^\omega, \tuple{\hd,\tl})$, and
$m$ to be the final $F^B$-coalgebra $(\bar{Z},\bar{\zeta})$; $\ev$ is then defined
to be the coinductive extension below. 

\begin{equation}
\label{eq:ev_diagram}
\begin{tabular}{c}
\begin{tikzpicture}
  \matrix (m) [matrix of math nodes,row sep=2em,column sep=6em]
  {
     B^\omega \times \bar{Z} & Z \\
     F(B^\omega \times \bar{Z}) & FZ\\
};     
   \path[-stealth]
     (m-1-1) edge [dotted] node [above] {\scriptsize $\ev$}  (m-1-2)
             edge node [left] {\scriptsize $\tuple{\hd,\tl} \ltimes \bar \zeta$}  (m-2-1)
     (m-2-1) edge node [above] {\scriptsize $F\ev$} (m-2-2)
     (m-1-2) edge node [right] {\scriptsize $\zeta$}  (m-2-2); 
\end{tikzpicture}
\end{tabular}
\end{equation}
In the current paper,
we are interested in the case that $Z = A^\omega$ consists of streams (the final coalgebra of stream systems),
and $\bar{Z} = \Gamma(B^\omega,A^\omega)$ consists of causal functions (the final coalgebra of Mealy machines). 
In that case, $\ev \colon B^\omega \times \Gamma(B^\omega, A^\omega) \rightarrow A^\omega$ is given concretely by
\begin{equation}\label{eq:ev}
 \ev(\alpha, f) = f(\alpha) \,.
\end{equation}
as reported in~\cite[Example 4.3]{HK11}. 


\begin{definition}\label{def:pw-ext}
 For a signature $\Sigma$, an algebra $\bar \sigma : \Sigma \bar Z \to \bar Z$ is a pointwise
 extension of $\sigma : \Sigma  Z \to Z$ if the following diagram commutes:
 \begin{equation}
\label{eq:cat_point_ext}
\begin{tabular}{c}
\begin{tikzpicture}
  \matrix (m) [matrix of math nodes,row sep=3em,column sep=4.5em]
  {
     B^\omega \times \Sigma \bar Z &  \Sigma(B^\omega \times \bar Z) & \Sigma (Z) \\
     B^\omega \times \bar Z & & Z\\
};     
   \path[-stealth]
     (m-1-1) edge node [above] {\scriptsize $\st^\Sigma_{B^\omega, \bar Z}$}  (m-1-2)
             edge node [left] {\scriptsize $\id\times \bar \sigma$}  (m-2-1)
     (m-2-1) edge node [above] {\scriptsize $\ev$} (m-2-3)
     (m-1-2) edge node [above] {\scriptsize $\Sigma\ev$}  (m-1-3)
     (m-1-3) edge node [right] {\scriptsize $\sigma$}  (m-2-3);
\end{tikzpicture}
\end{tabular}
\end{equation} 
\end{definition}
This generalises the concrete instance for streams in Definition~\ref{def:pw-ext-concrete}.

\subsection{Proof of Theorem~\ref{th:pointwise_ext}}
\label{sec:proof_of_pointwise_ext}

We need the following lemma to prove Theorem~\ref{th:pointwise_ext}:
\begin{lemma}\label{lm:various-dls}
	Let $\lambda \colon \Sigma F \nattrans FT$  be a monadic abstract GSOS specification,
	with pointwise extension $\bar{\lambda} \colon \Sigma F^B \Rightarrow (FT)^B$. Let
	$\bar{\rho} \colon T F^B \Rightarrow (FT)^B$ be the distributive law 
	induced by $\bar{\lambda}$. Then
$$
\bar \rho = \left( T(FX)^B \trans[\cs^T_{B,FX}] (TFX)^B \trans[(\rho_X)^B] (FTX)^B \right)
$$
where $\rho \colon TF \Rightarrow FT$ is the distributive law induced by $\lambda$.
\end{lemma}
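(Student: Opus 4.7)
The plan is to exploit the uniqueness of the distributive law induced by a monadic specification: if I can exhibit any distributive law of $T$ over $F^B$ that restricts back to $\bar\lambda$ via the canonical embedding of $\Sigma$ into $T$, it must coincide with $\bar\rho$. I will denote the candidate by $\rho'_X = (\rho_X)^B \circ \cs^T_{B,FX}$ and aim to verify two properties: (i) $\rho'$ is a distributive law of the monad $T$ over the functor $F^B$, and (ii) $\rho'$ is induced by $\bar\lambda$, in the sense that precomposing with $\kappa_{F^B X} \circ \Sigma\eta_{F^B X} \colon \Sigma F^B X \to T F^B X$ recovers $\bar\lambda_X$.

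For (i), I would appeal to the standard categorical fact that distributive laws of a fixed monad over two functors can be composed vertically. The two ingredients are: the costrength $\cs^T_{B,-} \colon T(-)^B \nattrans (T{-})^B$, which (as recalled at the end of~\ref{sec:strengthco}) is a distributive law of $T$ over $(-)^B$, and the distributive law $\rho \colon TF \nattrans FT$ induced by $\lambda$. Composing them in the evident way yields a distributive law of $T$ over the composite functor $(-)^B \circ F = F^B$, which unwinds to exactly $\rho'$. The unit axiom follows immediately from the unit axioms of the two constituents, and the multiplication axiom follows by a diagram chase combining the multiplication axioms of $\rho$ and of $\cs^T$ together with naturality of $\cs^T$ in its set argument.

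For (ii), the key observation is that $\kappa \circ \Sigma\eta \colon \Sigma \nattrans T$ is itself a natural transformation between $\set$-endofunctors. By naturality of costrength in the functor argument (equation~\eqref{eq:cs-nat}), the square
\[
\cs^T_{B,FX} \circ \bigl(\kappa_{(FX)^B} \circ \Sigma\eta_{(FX)^B}\bigr) = \bigl(\kappa_{FX} \circ \Sigma\eta_{FX}\bigr)^B \circ \cs^\Sigma_{B,FX}
\]
commutes. Postcomposing with $(\rho_X)^B$ on the left and using the defining relationship $\rho_X \circ \kappa_{FX} \circ \Sigma\eta_{FX} = \lambda_X$ between $\rho$ and $\lambda$, I obtain $(\lambda_X)^B \circ \cs^\Sigma_{B,FX}$, which is exactly $\bar\lambda_X$ by its definition in~\eqref{eq:lambdabar}. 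Uniqueness of the distributive law induced by $\bar\lambda$ then forces $\rho' = \bar\rho$.

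The main obstacle will be the multiplication axiom in step (i): assembling the large diagram for $\rho'$ out of the multiplication squares for $\rho$ and $\cs^T$, and verifying that the interleaving of $\mu$, $T\rho'$, $\rho'T$ on one side and $F^B\mu$ on the other really agree, is the most delicate calculation. Once this axiom is checked, step (ii) and the uniqueness argument are straightforward naturality manipulations.
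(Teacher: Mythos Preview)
Your proposal is correct and follows essentially the same approach as the paper's proof: define the candidate $\rho'$, verify it is a distributive law of $T$ over $F^B$ (the paper does this via the explicit diagram you anticipate as the ``main obstacle''), check that $\rho'$ restricts to $\bar\lambda$ along $\kappa_{F^B} \circ \Sigma\eta_{F^B}$ using naturality of costrength in the functor argument and the defining relation between $\rho$ and $\lambda$, and conclude by uniqueness (the paper cites~\cite[Lemma~3.4.27]{Bartels04} for this). Your framing of step~(i) as the composition of the two distributive laws $\cs^T$ and $\rho$ is a clean way to organise exactly the diagram the paper draws.
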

\begin{proof}
	Let $\bar{\rho}' =  (T(FX)^B \trans[\cs^T_{B,FX}] (TFX)^B \trans[(\rho_X)^B] (FTX)^B)$.
	By~\cite[Lemma 3.4.27]{Bartels04}, $\bar{\rho}$ is the unique distributive law satisfying
	$\bar{\lambda} = \bar{\rho} \circ \kappa_{F^B} \circ \Sigma \eta_{F^B}$ (this is an instance
	of the one-to-one correspondence between (monadic) GSOS specifications and distributive laws).
	 Hence,
	it suffices to show the following properties:
	\begin{enumerate}
		\item $\bar{\rho}'$ is a distributive law of monad over functor;
		\item 
	$\bar{\lambda} = \bar{\rho}' \circ \kappa_{F^B} \circ \Sigma \eta_{F^B}$. 
	\end{enumerate}
	For 1., consider the following diagram. 
	$$
	\xymatrix@C=1.3cm{
		TT(FX)^B \ar[ddd]_{\mu_{(FX)^B}} \ar[r]^{T\cs^T_{B,FX}} 
			& T(TFX)^B \ar[d]^{\cs^T_{B,TFX}} \ar[r]^{T(\rho_X)^B}
			& T(FTX)^B \ar[d]^{\cs^T_{B,FTX}} \\
			& (TTFX)^B \ar[dd]^{(\mu_{FX})^B} \ar[r]^{(T\rho_X)^B}
			& (TFTX)^B \ar[d]^{(\rho_{TX})^B}\\
			& & (FTTX)^B \ar[d]^{(F\mu_X)^B}\\
		T(FX)^B \ar[r]^{\cs^T_{B,FX}} 
			& (TFX)^B \ar[r]^{(\rho_X)^B}
			& (FTX)^B\\
		(FX)^B \ar[u]^{\eta_{(FX)^B}}
		\ar[ur]_{(\eta_{FX})^B}
		\ar@(r,d)[rru]_{(F\eta_X)^B}
	}
	$$
	The left part of the diagram commutes since $\cs^T_{B,-}$ is a distributive law of the monad $T$
	over the functor $(-)^B$, the right part follows since $\rho$ is a distributive law
	of the monad $T$ over the functor $F$ (and the top-right square by naturality of $\cs^T_{B,-}$).

	For 2., consider the following diagram.
	  \begin{equation}
	  \xymatrix@C=1.3cm@R=1.3cm{
	    \Sigma (FX)^B \ar@(u,u)[rr]^{\bar{\lambda}_X} \ar[r]^{\cs^{\Sigma}_{B,FX}} \ar[d]_{{\nu}_{(FX)^B}}
	    	& (\Sigma FX)^B \ar[r]^{(\lambda_X)^B} \ar[d]_{({\nu}_{FX})^B}
	    	& (FTX)^B\\
	   T(FX)^B  \ar[r]^{\cs^T_{B,FX}} \ar@/_40pt/[rru]_{\bar{\rho}'_X}
	   & (TFX)^B \ar[ur]^{(\rho_X)^B}
	   & \\	  
	  }
\end{equation} 
where $\nu = \kappa \circ \Sigma \eta$. The square commutes by~\eqref{eq:cs-nat}, the triangle on
to the right since $\rho$ is the distributive law extending $\lambda$, and the upper and lower shapes
by definition of $\bar{\lambda}$ and $\bar{\rho}'$ respectively.
\end{proof}


This allows us to prove Theorem~\ref{th:pointwise_ext}.
\begin{proof}[Proof of Theorem~\ref{th:pointwise_ext}]
Let  $\asem{-}^{\bar{\rho}} \colon T\bar{Z} \rightarrow \bar{Z}$ 
and $\asem{-}^\rho \colon TZ \rightarrow Z$ be the algebras
arising by finality from $\bar{\rho}$ and $\rho$ respectively, as
defined in diagram (B) at the end of Section~\ref{sec:spec}. 

Let $\nu = \kappa \circ \Sigma \eta$.
The algebras $\sigma\colon \Sigma Z \rightarrow Z$ and $\bar{\sigma} \colon \Sigma \bar{Z} \rightarrow \bar{Z}$ are defined respectively by
$\asem{-}^{\rho} \circ \nu_{Z}$
and
$\asem{-}^{\bar{\rho}} \circ \nu_{\bar{Z}}$. We need to prove
that the latter is a pointwise extension of the former.

By Lemma~\ref{lm:various-dls}, the
distributive law 
$\bar{\rho}$
induced by $\bar{\lambda}$ is given by
$$\bar \rho = (T(FX)^B \trans[\cs^T_{B,X}] (TFX)^B \trans[(\rho_X)^B] (FTX)^B)\,.$$
Hence, by~\cite[Theorem 6.1]{HK11}, $\asem{-}^{\bar{\rho}}$ is a pointwise
extension of $\asem{-}^{\rho}$, i.e., the lower rectangle in the diagram below commutes:
 \begin{equation}
\begin{tabular}{c}
\begin{tikzpicture}
 \node at (-1.6cm, 0.8cm) {\scriptsize $(i)$ };
 \node at (2.2cm, 0.8cm) {\scriptsize nat. };
 \node at (0.6cm, -0.8cm) {\scriptsize $\asem{-}^{\bar{\rho}}$ is a p.e. of $\asem{-}^\rho$};
 \matrix (m) [matrix of math nodes,row sep=2.5em,column sep=4em]
 {
   B^\omega \times \bar Z & \Sigma(B^\omega \times \bar Z)& \Sigma Z\\
   B^\omega \times T\bar Z & T(B^\omega \times \bar Z) & TZ  \\
   B^\omega \times \bar Z && Z \\
  };     
   \path[-stealth]
        (m-1-1) edge node [above] {\scriptsize $\st^\Sigma_{B^\omega, \bar Z}$ } (m-1-2)
                edge node [left] {\scriptsize $\id \times \nu_{\bar Z}$} (m-2-1)
        (m-2-1) edge node [above] {\scriptsize $\st^F_{B^\omega, \bar Z}$ } (m-2-2)
                edge node [left] {\scriptsize $\id \times \asem{-}^{\bar{\rho}}$} (m-3-1)
        (m-3-1) edge node [below] {\scriptsize $\ev$ } (m-3-3)
        (m-1-2) edge node [above] {\scriptsize $\Sigma\ev$ } (m-1-3)
                edge node [left] {\scriptsize $\nu_{B^\omega \times \bar Z}$} (m-2-2)
        (m-2-2) edge node [above] {\scriptsize $T\ev$ } (m-2-3)
        (m-1-3) edge node [right] {\scriptsize $\nu_Z$ } (m-2-3)
        (m-2-3) edge node [right] {\scriptsize $\asem{-}^\rho$ } (m-3-3)
       ;
\end{tikzpicture}
\end{tabular}
\end{equation} 
The square $(i)$ is naturality of strength (see Section~\ref{sec:strengthco}), and the square on the
right commutes by naturality of $\nu$.
\end{proof}

\section{Non-compatibility for non-monadic specifications}\label{app:is_not_compatible}

In our approach, we first transform an arbitrary stream GSOS specification into a monadic one and then we construct the corresponding Mealy specification.
One could transform directly an arbitrary specification into a Mealy specification following the approach in \cite{HK11}, but Theorem~\ref{th:upto} in Section \ref{sec:compatibility} would not go through. In this appendix we shortly explain why.

\medskip

The pointwise extension of GSOS specification that is not monadic 
requires the introduction of a family of auxiliary operators, the so-called \emph{buffer operators}.
For a stream GSOS specification $\lambda \colon \Sigma(\Id \times F) \Longrightarrow FT$ with $F = A \times X$, 
there is a Mealy GSOS specification  $\bar \lambda \colon \bar \Sigma (\Id \times (F)^{A^\Var}) \Longrightarrow (F\bar T)^{A^\Var}$  
that pointwise extends $\lambda$ (see \cite[Theorem 6.3]{HK11}). 
Notice that that the signature used for the Mealy machine is not the original $\Sigma$ but $\bar \Sigma$. 
In our setting, this signature is the extension of $\Sigma$ with the buffer operators ${\subs} \buffer$, 
for each $\subs \colon \Var \to A$. The extended specification $\bar \lambda$ defines their semantics as follows.

\begin{equation*}
 \label{eq:buffers}
 \dedrule{x \trans[\subs|a] x'}{\subs \buffer x \trans[\subs'|a] \subs' \buffer x'}
\end{equation*}

In addition, for each rule $r$ with shape \eqref{eq:stream_gsos_rule} of the stream GSOS specification and $\subs : \Var \to A$, 
the Mealy machines has a rule $r'$ defined by 
 \begin{equation*}
 r' = \dedrule{x_1 \trans[\subs | a_1] x'_1 \quad \cdots \quad x_n \trans[\subs | a_n]x'_n}
            {f(x_1, \ldots, x_n) \trans[\subs | a] t(\subs \buffer x_1, \ldots, \subs \buffer x_n,x'_1, \ldots, x'_n)} 
 \end{equation*}

By following this approach, open bisimilarity is \emph{not} preserved by substitution.
%
%
\medskip 

For example, consider  stream systems over reals and the family of ``bad'' prefix operators $a.x$,  for each $a\in \R$. 
The following left and right rules represent respectively, a stream GSOS specification and its corresponding Mealy specification. 
\begin{equation*}
\label{eq:prefix}
\dedrule{}{a.x \trans[a] x} 
\qquad \qquad \qquad
\dedrule{}{a.x \trans[\subs | a] \subs \buffer x} 	
\end{equation*}

For a fixed $a \in \R$, let $\hat \subs : \Var \to \R$ be such that $\hat\subs(\vx) = a$ and $\hat\subs(\vy) \neq 0$. 
Then $a.\vx$ and $\hat \subs \buffer \vx$ are in $\obisim$ because

\begin{center}$a.\vx \trans[\subs | a] \subs \buffer \vx$ and
$\hat \subs \buffer \vx \trans[\subs | \hat\subs(\vx)] \subs \buffer \vx$. 
\end{center}
Consider now substitution $\theta: \Var \to \openTerms$ such that $\theta (\vx) = \vx \oplus \vy$, where $\oplus$ is defined as in Figure \ref{fig:rulestreamcalculus}(a). 
Then 
$\theta(a.\vx)$ and $\theta(\hat \subs \buffer \vx)$ are not bisimilar because
$\theta(a.\vx) = a.(\vx \oplus \vy) \trans [\subs|a] \subs \buffer (\vx \oplus \vy)$,
$\theta(\hat \subs \buffer \vx) = \hat \subs \buffer (\vx \oplus \vy) %
 \trans[\subs | \hat\subs(\vx) + \hat\subs(\vy)] \hat \subs \buffer(\vx \oplus \vy)$
and $a \neq \hat\subs(\vx) + \hat\subs(\vy)$.

\medskip
This fact also entails that up-to substitutions is \emph{not} compatible. Indeed, following the general theory in \cite{BonchiPPR17}, if it would be compatible, then open bisimilarity would be closed under substitution.

\section{Details on the companion}\label{sec:app-companion}

In this section, we precisely characterise the companion $(C,\gamma)$ of the functor $F$, $F(X) = A \times X$,
show how this companion yields final Mealy machines, and ultimately prove Theorem~\ref{thm:semantics-companion}.
As already mentioned in Section~\ref{sec:familiar}, $C$ is given by:
\begin{equation}\label{eq:comp-functor}
    CX = \Gamma((A^\omega)^X, A^\omega)\,.
\end{equation}
We will also prove below that $\gamma \colon CF \Rightarrow FC$ is given by:
    \begin{equation} \label{eq:charcomp}
        \gamma_X(\cfunc \colon (A^\omega)^{A \times X} \to A^\omega) =
        (\head(\cfunc((a,x) \mapsto a.\alpha)), \cfunc')
    \end{equation}
    where $\alpha \in A^\omega$ is an arbitrary stream,
    and $\cfunc' \colon (A^\omega)^\Var \to A^\omega$ is given by
    \begin{equation}
        \cfunc'(\psi \colon X \to A^\omega) = \tail(\cfunc((a,x) \mapsto a.\psi(x))).
    \end{equation}
In the proof that $(C,\gamma)$ is the companion, we will use that the $(C,\gamma)$
`constructs final Mealy machines' (Proposition~\ref{prop:finalmealy2}). First,
we recall the following characterisation of final Mealy machines for $F^{A^{\Var}}$ 
(which is well-known; see, e.g.,~\cite{DBLP:journals/cuza/HansenR10}).
\begin{lemma} \label{lemma:finalmealy}
  The final $(A \times \--)^{A^\Var}$-coalgebra $\zeta \colon \tgamma \to (A \times \tgamma)^{A^\Var}$
  is given by
  \begin{equation}
      \zeta(\cfunc \colon (A^\omega)^\Var \to A^\omega)(\subs \colon \Var \to A)
      = (\head(\cfunc(\vx \mapsto \subs(\vx).\alpha)), \cfunc_\subs)
  \end{equation}
  where $\alpha \in A^\omega$ is an arbitrary stream,
  and $\cfunc_\subs \colon (A^\omega)^\Var \to A^\omega$ is given by
  \begin{equation}
      \cfunc_\subs(\psi \colon \Var \to A^\omega) = \tail(\cfunc(\vx \mapsto \subs(\vx).\psi(\vx))).
  \end{equation}
\end{lemma}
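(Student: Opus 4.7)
The plan has two parts: first verify that $\zeta$ is well-defined (i.e.\ independent of the arbitrary $\alpha$, and that $\cfunc_\subs$ is again causal); second, establish finality by exhibiting, for an arbitrary Mealy machine $(X,m)$, a unique $F^{A^\Var}$-coalgebra morphism into $(\tgamma,\zeta)$.

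For well-definedness I would use causality of $\cfunc$ (in the sense that the associated $\bar\cfunc\colon (A^\Var)^\omega \to A^\omega$ is causal in the standard sense) twice. For the output component, causality at $n=1$ says the first symbol of $\bar\cfunc$ on any input depends only on the first symbol of that input; since every input of the form $\vx\mapsto \subs(\vx).\alpha$ shares the same first assignment $\subs$, the value $\head(\cfunc(\vx\mapsto \subs(\vx).\alpha))$ is independent of $\alpha$. For $\cfunc_\subs$, I would verify: if $\psi,\psi'\colon \Var\to A^\omega$ agree on the first $n$ symbols coordinatewise, then $\vx\mapsto \subs(\vx).\psi(\vx)$ and $\vx\mapsto \subs(\vx).\psi'(\vx)$ agree on the first $n+1$ symbols, so by causality of $\cfunc$ their images agree on the first $n+1$ symbols, and hence their tails on the first $n$, as required.

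For finality, given $m\colon X \to (A\times X)^{A^\Var}$ with $m(x)=\langle o_x,d_x\rangle$, I would define $h\colon X \to \tgamma$ by
\[
h(x)(\psi) \;=\; o_{x_0}(\subs_0)\, o_{x_1}(\subs_1)\, o_{x_2}(\subs_2)\, \ldots
\]
where $\subs_n(\vx) = \psi(\vx)(n)$, $x_0=x$, and $x_{n+1}=d_{x_n}(\subs_n)$. Causality of $h(x)$ is immediate, since the $n$-th output symbol involves only $\subs_0,\dots,\subs_n$, which are determined by the first $n+1$ values of each $\psi(\vx)$. To see $h$ is a coalgebra morphism, it suffices to check that $\zeta(h(x))(\subs) = (o_x(\subs),\, h(d_x(\subs)))$; unfolding the definition of $\zeta$ the head reduces to $o_x(\subs)$ by the independence-of-$\alpha$ argument above, and the continuation $(h(x))_\subs$ reindexes the trajectory to start from $x_1 = d_x(\subs)$, yielding exactly $h(d_x(\subs))$.

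Uniqueness is the step I expect to require most care. Given another coalgebra morphism $h'$, the coalgebra equation forces $\head(h'(x)(\vx\mapsto\subs(\vx).\alpha)) = o_x(\subs)$ and identifies the tail with $h'(d_x(\subs))$ evaluated on a continuation of the same form. An induction on $n$ then shows that for every $\psi$ the first $n$ symbols of $h'(x)(\psi)$ coincide with those of $h(x)(\psi)$; since streams are determined by all their finite prefixes, $h'=h$. The bookkeeping that reconciles the auxiliary $\alpha$ in the definition of $\zeta$ with the concrete $\psi$ fed to $h'(x)$—using causality at each level to make the two pictures match—will be the main technical obstacle.
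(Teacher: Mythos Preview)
The paper does not actually prove this lemma: it is stated as a recollection of a well-known fact, with a pointer to~\cite{DBLP:journals/cuza/HansenR10} (``we recall the following characterisation of final Mealy machines \ldots\ which is well-known''). So there is no in-paper proof to compare against.

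That said, your proposal is correct and is precisely the standard argument one would give. The two pieces of well-definedness are handled exactly right: causality at level $1$ gives independence of $\alpha$ in the output component, and the shift-by-one argument shows $\cfunc_\subs$ is again causal. Your definition of $h$ is the usual coinductive extension of a Mealy machine into causal functions (this is essentially what the paper describes informally in Section~\ref{sec:preliminaries} when it says that $\sem{-}$ assigns to $x$ and an input stream $b_0b_1\ldots$ the output stream $a_0a_1\ldots$ obtained by running the machine). The morphism check and the uniqueness argument are both fine; the ``bookkeeping'' you flag in the uniqueness step is resolved exactly as you indicate, by invoking causality of $h'(x)$ to replace an arbitrary $\psi$ by $\vx\mapsto\subs_0(\vx).\alpha$ when computing the head, and then using the coalgebra equation to peel off one layer and recurse. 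No gaps.
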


\begin{proposition} \label{prop:finalmealy2}
    The $X$ component of the $\Fam{F}$-coalgebra $\widehat\gamma$,
    corresponding to the distributive law $\gamma$ given in (\ref{eq:charcomp}),
    is the final $(A \times \--)^{A^X}$-coalgebra.
\end{proposition}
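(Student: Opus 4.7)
The plan is to prove this essentially by unfolding the definition of $\widehat{(-)}$ given in (\ref{eq:unique}) applied to the particular natural transformation $\gamma$, and then checking that the resulting map coincides termwise with the final Mealy coalgebra structure described in Lemma~\ref{lemma:finalmealy}. Up to the standard swap-of-arguments isomorphism $(A^\omega)^X \cong (A^X)^\omega$ restricted to causal functions, the carrier $CX = \Gamma((A^\omega)^X, A^\omega)$ is precisely $\tgamma$ with $\Var$ replaced by $X$, so only the structure map requires work.

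First, I would instantiate (\ref{eq:unique}) with $H = G = C$ and $\theta = \gamma$, giving
\[
\widehat\gamma_X \;=\; (\gamma_X)^{A^X} \circ \cs^C_{A^X,\, A\times X} \circ C(c_X).
\]
Given $\cfunc \in CX$ and $\subs \in A^X$, I would then chase $\cfunc$ through these three maps. Using the (contravariant-in-the-exponent) action of $C$ on morphisms, one gets $C(c_X)(\cfunc)(\psi) = \cfunc(\psi \circ c_X)$; then the costrength instance reduces to $\cs^C_{A^X,A\times X}(C(c_X)(\cfunc))(\subs) = C(\epsilon^\subs \circ c_X)(\cfunc)$, and the composite $\epsilon^\subs \circ c_X \colon X \to A \times X$ is just $x \mapsto (\subs(x), x)$. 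After these two steps one obtains the element $\cfunc' \in C(A\times X)$ with $\cfunc'(\phi) = \cfunc(x \mapsto \phi(\subs(x), x))$.

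Applying $\gamma_X$ to $\cfunc'$ via formula (\ref{eq:charcomp}) yields a pair. Its first component is $\head(\cfunc'((a,x) \mapsto a.\alpha))$, which by the computation above equals $\head(\cfunc(x \mapsto \subs(x).\alpha))$; its second component sends $\psi$ to $\tail(\cfunc'((a,x)\mapsto a.\psi(x))) = \tail(\cfunc(x \mapsto \subs(x).\psi(x)))$. This matches exactly the formula of Lemma~\ref{lemma:finalmealy}, so $\widehat\gamma_X = \zeta$ as maps $CX \to (A \times CX)^{A^X}$. Since $\zeta$ is final by Lemma~\ref{lemma:finalmealy}, the statement follows.

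I do not expect any serious obstacle; the only points requiring care are (i) keeping track of the direction of the contravariant occurrences of $X$ in the precomposition of causal functions, so as to verify that $C$ really acts as stated on morphisms, and (ii) verifying that the auxiliary stream $\alpha$ in (\ref{eq:charcomp}) and in Lemma~\ref{lemma:finalmealy} plays the same (immaterial) role in both formulas, which is immediate because only $\head$ of the result is used and the first symbol of the output depends only on the first symbols of the input by causality.
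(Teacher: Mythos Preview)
Your proposal is correct and follows essentially the same route as the paper: both unfold $\widehat\gamma_X$ via the composite $(\gamma_X)^{A^X} \circ \cs^C_{A^X,A\times X} \circ C(c_X)$, reduce the costrength-plus-$C(c_X)$ part to precomposition with $x \mapsto (\subs(x),x)$, and then invoke Lemma~\ref{lemma:finalmealy}. You even spell out the final application of $\gamma_X$ explicitly, whereas the paper leaves that last step to the reader; your remarks on the contravariant action of $C$ and the irrelevance of $\alpha$ are accurate and match what is implicit in the paper's computation.
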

\begin{proof}
    This follows from a simple computation.
    \begingroup
    \allowdisplaybreaks
    \begin{align*}
        & \widehat\gamma_X(\cfunc \colon (A^\omega)^X \to A^\omega)(\subs \colon X \to A) \\
        & = \{\text{by definition of $\widehat\gamma$}\} \\
        & \gamma_X^{A^X}(\cs^C_{A^X,A \times X}(C(c_X)(\cfunc)))(\subs)\\
        & = \{\text{by definition of the functor $(\--)^{A^X}$}\} \\
        & \gamma_X(\cs^C_{A^X,A \times X}(C(c_X)(\cfunc))(\subs)) \\
        & = \{\text{by definition of costrength}\} \\
        & \gamma_X(C(\epsilon^\subs \circ c_X)(\cfunc)) \\
        & = \{\text{by definition of $\epsilon$}\} \\
        & \gamma_X(C(x \mapsto (\subs(x), x))(\cfunc)) \\
        & = \{\text{by definition of the functor $C$ on morphisms}\} \\
        & \gamma_X((\varphi \colon A \times X \to A^\omega) \mapsto \cfunc(x \mapsto (\subs(x), x)))
    \end{align*}
    \endgroup
    But after spelling out $\gamma_X$,
    it is immediately clear that this is exactly the Mealy machine given in Lemma~\ref{lemma:finalmealy}. 
\end{proof}

Now, we can prove that $(C,\gamma)$ as presented above is really the companion.

\begin{proposition} \label{prop:charcomp}
    The companion of $FX = A \times X$ is given by $(C,\gamma)$ as
    defined in~\eqref{eq:comp-functor} and \eqref{eq:charcomp} in the beginning of this section.

    Moreover, given a distributive law $\rho \colon TF \nattrans FT$,
    the natural transformation $\kappa \colon T \nattrans C$,
    which is the unique morphism from $(\Sigma, \lambda)$ to the companion $(C, \gamma)$,
    is given by
    \begin{equation}
        \kappa_X(t \in TX)(\psi \colon \Var \to A^\omega)
        = \asem{(T\psi)(t)} \,.
    \end{equation}
\end{proposition}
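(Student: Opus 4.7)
The plan is to establish both claims by leveraging the isomorphism $\DL{F} \cong \coalg{\Fam{F}}$ from Theorem~\ref{thm:iso} together with the componentwise finality shown in Proposition~\ref{prop:finalmealy2}. First I would check that the $\gamma$ given in~\eqref{eq:charcomp} is natural in $X$ (a routine unfolding using the definition of $C$ on morphisms as post-composition), so that $(C,\gamma)$ is indeed an object of $\DL{F}$. For finality, take any $(\Sigma,\lambda) \in \DL{F}$ and pass to the $\Fam{F}$-coalgebra $\widehat{\lambda}$ via Theorem~\ref{thm:iso}. Each component $\widehat{\lambda}_X \colon \Sigma X \to (A \times \Sigma X)^{A^X}$ is then a Mealy coalgebra, and by Proposition~\ref{prop:finalmealy2} the component $\widehat{\gamma}_X$ is final among such coalgebras. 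This yields a unique coalgebra morphism $\kappa_X \colon \Sigma X \to CX$ for each $X$; naturality of the family $(\kappa_X)_X$ follows by a standard uniqueness argument, observing that for $h \colon X \to Y$ both $\kappa_Y \circ \Sigma h$ and $Ch \circ \kappa_X$ are Mealy coalgebra morphisms into the final coalgebra, hence equal. Transporting back via Theorem~\ref{thm:iso} produces the required unique morphism $(\Sigma,\lambda) \to (C,\gamma)$ in $\DL{F}$.

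For the explicit formula, specialise to $\Sigma = T$ and $\lambda = \rho$, and define $\kappa'_X \colon TX \to CX$ by $\kappa'_X(t)(\psi) = \asem{(T\psi)(t)}$ (reading $\psi$ as an element of $(A^\omega)^X$). I would verify three things: that $\kappa'_X(t)$ is causal in $\psi$, so it really lies in $CX$; that $\kappa'$ is natural in $X$; and, critically, that each $\kappa'_X$ is a Mealy coalgebra morphism from $\widehat{\rho}_X$ to $\widehat{\gamma}_X$. Causality is essentially the content of the theorem that $\lambda$-definable operations are causal (and is already exploited implicitly for $X=\Var$ via $\osem{-}$), while naturality follows from naturality of $T$ and from $\asem{-}$ being a $T$-algebra. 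The coalgebra-morphism property generalises to arbitrary $X$ the content of Proposition~\ref{prop:osem_and_csem} and Lemma~\ref{lm:semantics}, and can be proven by the same style of diagram chase, using Lemma~\ref{lm:various-dls} to unfold $\widehat{\rho}_X = (\rho_X)^{A^X} \circ \cs^T_{A^X,FX} \circ T c_X$ and the defining property of $\asem{-}$ as the coinductive extension of $\rho_Z \circ T\zeta$. Once $\kappa'$ is seen to be a morphism $(T,\rho) \to (C,\gamma)$ in $\DL{F}$, the uniqueness established in the first part forces $\kappa' = \kappa$.

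The main obstacle will be the coalgebra-morphism verification for $\kappa'$, i.e., the identity $\widehat{\gamma}_X \circ \kappa'_X = \Fam{F}(\kappa'_X) \circ \widehat{\rho}_X$ of natural transformations, since it ties together three nontrivial constructions: the formula for $\widehat{\rho}_X$ from Lemma~\ref{lm:various-dls}, the formula for $\widehat{\gamma}_X$ spelled out in \eqref{eq:charcomp} and Proposition~\ref{prop:finalmealy2}, and the universal characterisation of $\asem{-}$. The key observation that tames this is that both sides, evaluated on $t \in TX$ and an input $\subs \colon X \to A$, unfold to the head and tail of $\asem{(T\psi)(t)}$ for $\psi$ of the form $x \mapsto \subs(x).\chi(x)$; this is precisely the structure exploited in the proof of Proposition~\ref{prop:osem_and_csem}, and the generalisation from $X = \Var$ to arbitrary $X$ is purely formal.
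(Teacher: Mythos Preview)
Your proposal is correct but organises the argument differently from the paper. The paper does not first establish finality of $(C,\gamma)$ abstractly and then identify $\kappa$; instead it \emph{defines} $\kappa$ by the stated formula and verifies directly, by an explicit elementwise computation using the concrete descriptions of $\gamma$ and $\asem{-}$, that $F\kappa \circ \rho = \gamma \circ \kappa F$, i.e., that $\kappa$ is a morphism in $\DL{F}$. Uniqueness is then argued exactly as you propose, by passing to $\coalg{\Fam{F}}$ and invoking componentwise finality of $\widehat\gamma_X$ (Proposition~\ref{prop:finalmealy2}). So where you work on the $\coalg{\Fam{F}}$ side throughout---checking that $\kappa'_X$ is a Mealy-coalgebra morphism from $\widehat\rho_X$ to $\widehat\gamma_X$---the paper stays on the $\DL{F}$ side for existence and only crosses over for uniqueness. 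Both routes are valid; the paper's direct calculation avoids having to unfold $\widehat\rho$ and $\widehat\gamma$, while your approach is more structural and makes the role of Mealy finality uniform. One caution: the naturality step for the family $(\kappa_X)_X$ in your first part is not quite as routine as you suggest, because $\widehat\lambda_X$ and $\widehat\lambda_Y$ are coalgebras for \emph{different} Mealy functors $(A\times -)^{A^X}$ and $(A\times -)^{A^Y}$; the finality-based uniqueness argument must therefore go through an auxiliary $(A\times -)^{A^Y}$-coalgebra on $\Sigma X$ (obtained by precomposing inputs with $h$), or equivalently be argued directly at the level of natural transformations using naturality of $\widehat\lambda$ and $\widehat\gamma$.
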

\begin{proof} 
    It is not very difficult to verify that $\kappa_X(t)$ is a causal function.

    In order to show that $(C, \gamma)$ is the companion,
    we first show that $\kappa$ is a morphism from $(T, \rho)$ to $(C, \gamma)$,
    i.e.\ that $F\kappa \circ \rho = \gamma \circ \kappa F$.
    First observe that
    \begin{equation*}
        \gamma(\kappa_{FX}(t \in TFX))
        = (\hd(\asem{T((a,x) \mapsto a.\alpha)(t)}), \cfunc'),
    \end{equation*}
    where $\cfunc' \colon (A^\omega)^X \to A^\omega$ is given by
    \begin{equation*}
        \cfunc'(\psi \colon X \to A^\omega)
        = \tl(\asem{T((a,x) \mapsto a.\psi(x))(t)}).
    \end{equation*}
    Now for the left component we have
    \begingroup
    \allowdisplaybreaks
    \begin{align*}
        & \hd(\asem{T((a,x) \mapsto a.\alpha)(t)}) \\
        & = \{\text{as $\hd = \pi_1 \circ \zeta$}\} \\
        & \pi_1(\zeta(\asem{T((a,x) \mapsto a.\alpha)(t)})) \\
        & = \{\text{by definition of $\asem{\--}$}\} \\
        & \pi_1(F(\asem{\--})(\rho_{A^\omega}(T(\zeta)(T((a,x) \mapsto a.\alpha)(t))))) \\
        & = \{\text{using $\pi_1 \circ Ff = \pi_1$}\} \\
        & \pi_1(\rho_{A^\omega}(T(\zeta)(T((a,x) \mapsto a.\alpha)(t)))) \\
        & = \{\text{as $\zeta(a.\alpha) = (a, \alpha)$}\} \\
        & \pi_1(\rho_{A^\omega}(T((a,x) \mapsto (a, \alpha))(t))) \\
        & = \{\text{by definition of the functor $F$ on morphisms}\} \\
        & \pi_1(\rho_{A^\omega}(TF(x \mapsto \alpha)(t))) \\
        & = \{\text{by naturality of $\rho$}\} \\
        & \pi_1(FT(x \mapsto \alpha)(\rho_X(t))) \\
        & = \{\text{using $\pi_1 \circ Ff = \pi_1$}\} \\
        & \pi_1(\rho_X(t))
    \end{align*}
    \endgroup
    and for the right component we have
    \begingroup
    \allowdisplaybreaks
    \begin{align*}
        & \cfunc'(\psi \colon X \to A^\omega) \\
        & = \{\text{by definition of $\cfunc'$}\} \\
        & \tl(\asem{T((a,x) \mapsto a.\psi(x))(t)}) \\
        & = \{\text{as $\tl = \pi_2 \circ \zeta$}\} \\
        & \pi_2(\zeta((\asem{T((a,x) \mapsto a.\psi(x))(t)}))) \\
        & = \{\text{by definition of $\asem{\--}$}\} \\
        & \pi_2(F(\asem{\--})(\rho_{A^\omega}(T(\zeta)(T((a,x) \mapsto a.\psi(x))(t))))) \\
        & = \{\text{as $\zeta(a.\psi(x)) = (a, \psi(x))$}\} \\
        & \pi_2(F(\asem{\--})(\rho_{A^\omega}(T((a,x) \mapsto (a, \psi(x)))(t)))) \\
        & = \{\text{by definition of the functor $F$ on morphisms}\} \\
        & \pi_2(F(\asem{\--})(\rho_{A^\omega}(TF(\psi)(t)))) \\
        & = \{\text{by naturality of $\rho$}\} \\
        & \pi_2(F(\asem{\--})((FT(\psi)(\rho_X(t))))) \\
        & = \{\text{using $\pi_2 \circ Ff = f \circ \pi_2$}\} \\
        & \asem{T(\psi)(\pi_2(\rho_X(t)))} \\
        & = \{\text{by definition of $\kappa$}\} \\
        & \kappa_X(\pi_2(\rho_X(t)))(\psi)
    \end{align*}
    \endgroup
    Thus, combining both calculations, we get $F(\kappa_X)(\rho_X(t)) = \gamma_X(\kappa_{FX}(t))$.

    Finally we show that $\kappa$ is also the unique such morphism.
    Let $\theta$ be a morphism from $(T, \rho)$ to $(C, \gamma)$.
    Then both $\theta$ and $\kappa$ are coalgebra morphisms from $(T, \widehat\rho)$ to $(C, \widehat\gamma)$.
    And for a set $X$, $\theta_X$ and $\kappa_X$ are coalgebra morphisms from $(TX, \widehat\rho_X)$ to $(CX, \widehat\gamma_X)$,
    but $(CX, \widehat\gamma_X)$ is the final Mealy machine according to Proposition~\ref{prop:finalmealy2},
    so $\kappa_X = \theta_X$ for all sets $X$ and thus $\kappa = \theta$.
\end{proof}
Putting everything together, we obtain the proof of Theorem~\ref{thm:semantics-companion}.

%
\begin{proof}[Proof of Theorem~\ref{thm:semantics-companion}]
    For any $t \in \openTerms$ and $\psi \colon \Var \rightarrow A^\omega$:
    \begin{equation*}
        \kappa_\Var(t)(\psi) = \asem{(T\psi)(t)} = \osem{t}(\psi)
    \end{equation*}
    by Proposition~\ref{prop:charcomp} and Proposition~\ref{prop:osem_and_csem},
    and thus $\kappa_\Var = \osem{\--}$.
\end{proof}

\end{appendix}

\end{document}